\newcommand\mycommfont[1]{\scriptsize\ttfamily}
\newtheorem{thm}{Theorem}[section]
\numberwithin{equation}{section}
\newcommand{\SLH}{$SLH$}
\newcommand{\TR}{$TR$}
\newcommand{\TA}{$TA$}
\newcommand{\figref}[1]{Figure~\ref{#1}}
\newcommand{\figsref}[1]{Figures~\ref{#1}}
\newcommand{\myindent}[1]{
\newline\makebox[#1cm]{}
}
\begin{document}

\title{Location histogram privacy by sensitive location hiding and target histogram avoidance/resemblance (extended version)}
\titlerunning{Sensitive location hiding and target histogram avoidance/resemblance}

\author{\makebox[20em][l]{Grigorios Loukides \and George Theodorakopoulos}}
\authorrunning{G. Loukides and G. Theodorakopoulos}

\institute{Grigorios Loukides \at
              Dept. of Informatics, King's College London, UK \\
              \email{grigorios.loukides@kcl.ac.uk}           
           \and
           George Theodorakopoulos \at
              School of Computer Science and Informatics, Cardiff University, UK\\
              \email{TheodorakopoulosG@cardiff.ac.uk}
}

\date{}
\maketitle
\vspace{-7mm}
\begin{abstract}
A location histogram is comprised of the number of times a user has visited locations as they move in an area of interest, and it is often obtained from the user in the context of applications such as recommendation and advertising. However, a location histogram that leaves a user's computer or device may threaten privacy when it contains visits to locations that the user does not want to disclose (\emph{sensitive locations}), or when it can be used to profile the user in a way that leads to price discrimination and unsolicited advertising (e.g. as ``wealthy'' or ``minority member''). Our work introduces two privacy notions to protect a location histogram from these threats:
\emph{sensitive location hiding}, which aims at concealing all visits to sensitive locations, and \emph{target avoidance/resemblance}, which aims at concealing the similarity/dissimilarity of the user's histogram to a \emph{target histogram} that corresponds to an undesired/desired profile. We formulate an optimization problem around each notion: Sensitive Location Hiding ($SLH$), which seeks to construct a histogram that is as similar as possible to the user's histogram but associates all visits with nonsensitive locations, and Target Avoidance/Resemblance (\TA{}/\TR{}), which seeks to construct a histogram that is as dissimilar/similar as possible to a given target histogram but remains useful for getting a good response from the application that analyzes the histogram. We develop an optimal algorithm for each notion, which operates on a notion-specific search space graph and finds a shortest or longest path in the graph that corresponds to a solution histogram. In addition, we develop a greedy heuristic for the \TA{}/\TR{} problem, which operates directly on a user's histogram. Our experiments demonstrate that all algorithms are effective at preserving the distribution of locations in a histogram and the quality of location recommendation. They also demonstrate that the heuristic produces near-optimal solutions while being orders of magnitude faster than the optimal algorithm for \TA{}/\TR{}.
\keywords{Location privacy \and Histogram privacy \and Location-based services \and Dynamic programming}
\end{abstract}

\vspace{-3mm}
\section{Introduction}\label{sec:introduction}
\vspace{-2mm}

A location histogram is a statistical summary of a user's whereabouts, comprised of the number of times a user has visited each location in an area of interest.
Location histograms are often obtained from users, in the context of applications including recommendation \cite{lbsrecomaaai,lbsicdm15,geomf}, advertising \cite{lbaexpertsystems,lbaicis}, and location pattern discovery \cite{lpd}. For example, a recommender application typically employs a set of location histograms each corresponding to 
a different user (i.e., a user-location matrix) as a training set, and it aims at recommending locations that a user may be interested in visiting based on the user's histogram \cite{lbsrecomaaai}. Location histograms are also often visualized or analyzed directly \cite{sdmhistogram}.

However, a location histogram that leaves a user's computer or device may pose a threat to the user's privacy. This happens when the histogram contains visits to \emph{sensitive} locations that the user does not want to disclose, because they are associated with confidential information (e.g. a temple is associated with a religion, and the headquarters of a political organization with certain political beliefs), or when the histogram can be used to profile the user (e.g. as ``wealthy'' or ``minority member'') leading to price discrimination  \cite{mikians,pricedisc2} and unsolicited advertising \cite{bettini}. For example, if the histogram reveals that a user frequently visits expensive restaurants, a targeted-advertisement application may display to the user advertisements about products and services that are priced higher than normal \cite{mikians,pricedisc2}.

In this work, we introduce two novel notions of histogram privacy, sensitive location hiding and target avoidance/resemblance, for protecting against the disclosure of sensitive locations and user profiling, respectively. {Sensitive location hiding aims at concealing all visits to user-specified sensitive locations, by producing a \emph{sanitized} histogram, in which the frequencies associated with the sensitive locations are equal to zero. 
This protects a user from an adversary who receives the sanitized histogram, knows the set of locations considered to be sensitive, and tries to infer which of these sensitive locations were visited by the user.}
{By enforcing the notion of sensitive location hiding, users are able to disseminate their location histogram in order to benefit from location-based services, such as location recommendation, while being protected from the inference of their sensitive locations and the aforementioned consequences such inference may have.}  

Target avoidance aims at concealing the fact that the user's histogram is similar to an undesirable histogram that, if disseminated, would lead to undesired user profiling. For example, a user may wish to make their histogram dissimilar to a target histogram of a typical wealthy person, containing frequent visits to expensive restaurants, to avoid price discrimination \cite{mikians}. {As another example, a user's location histogram may allow the inference of the user's political affiliation, religious beliefs, and sexual orientation, which may lead to emotional distress, harassment or even persecution. Thus, a user would wish to avoid disseminating a histogram that is similar to histograms that can lead to such undesirable inferences. This protects from adversaries who use the sanitized histogram and the target histogram of a person with an undesirable profile, to infer that the user's histogram resembles the latter histogram.} 

Target resemblance is a variant of target avoidance, in which the user expressly \emph{wishes} to make their histogram similar to the target histogram representing a desirable profile. {For example, the desirable target histogram for a tourist could be that of a local resident in order to avoid discriminatory practices towards tourists (e.g., price discrimination). 
As another example, consider a company that engages in secret discriminatory hiring practices by preferentially hiring members of a particular demographic group. There are cases where companies have been shown to discriminate based on sexual orientation when hiring~\cite{weichselbaumer2003}. In these cases, a person who wishes to be hired will want to make their histogram resemble that of an heterosexual person, so as to avoid discriminatory treatment. 
The target histogram may be specified by the users themselves, or selected with the help of domain experts (see Section~\ref{subsec:TR_definition}). Enforcing target resemblance protects from adversaries who use the sanitized histogram and the target histogram of a person with a desirable profile, to infer that the user's histogram does not resemble the latter histogram.}

{Comparing target avoidance and target resemblance, we see that in both cases the adversary aims to infer whether or not the sanitized histogram resembles a given target histogram. The difference is that, in target avoidance, the user wants the adversary to conclude that there is no resemblance, whereas in target resemblance the user wants the opposite.}

{Our privacy notions can be achieved by histogram sanitization}, i.e., by changing the frequencies of location visits in the histogram.
However, sanitization incurs a quality (utility) loss, which must be controlled to ensure that the user obtains a good response from the application which uses their sanitized histogram.
To achieve this balance between privacy and quality, we define an optimization problem around each privacy notion:
the Sensitive Location Hiding ($SLH$) problem, which seeks to construct a sanitized histogram with minimum quality loss, and
the Target Avoidance/Resemblance (\TA{}/\TR{}) problem, which {seeks to avoid/resemble the target to a level at least equal to a user-provided privacy parameter}, while ensuring that the quality loss does not exceed a user-provided quality parameter. {If it is impossible to satisfy both the privacy and the quality requirements, then the problem has no solution.}

{Neither notion can be achieved} by existing methods for histogram sanitization. The aim of existing methods is to either (I) prevent the inference of the exact frequencies of the histogram (i.e., the number of visits to one or more locations)  \cite{acs2012differentially,xu2013differentially, kellaris, sdmhistogram,sortaki,qardaji,li2016improving}, or (II) make a user's histogram indistinguishable from a set of histograms belonging to other users \cite{fawaz2014location, xu2006utility, ghinita2007fast}. {Their aim is neither to hide sensitive locations, nor to avoid/resemble a target histogram. The privacy notions we introduce in the paper are important to achieve in real applications, as we discuss in Examples~\ref{ex1}~and~\ref{ex2} in Section~\ref{sec:overview}.}

Therefore, we develop new methods {for achieving the $SLH$ and the \TA{}/\TR{} notions}: 
(I) An optimal algorithm for $SLH$, called $LHO$ (Location Hiding Optimal). (II) An optimal algorithm for \TR{}, called $RO$ (Resemblance Optimal). (III) A greedy heuristic for \TR{}, called $RH$ (Resemblance Heuristic).
Because \TA{} and \TR{} are similar, we focus on \TR{} and discuss \TA{} briefly. 

Our methods are both effective and efficient, as demonstrated by experiments using two real datasets derived from the Foursquare location-based social network \cite{dataseturl}, which together contain approximately 3400 histograms.
In terms of effectiveness, {all algorithms achieve the corresponding notions, or announce that it is impossible to achieve them, and they are additionally able to preserve}:
(I) the distribution of locations in a histogram, which is useful in applications such as aggregate query answering and classification \cite{sdmhistogram,odessa}, and
(II) the quality of location recommendation based on Collaborative Filtering \cite{Melville}. 
In addition, the heuristic produces near-optimal solutions (up to $1.5\%$ worse than the optimal), with respect to preserving distribution similarity. 
In terms of efficiency, all algorithms scale well with the histogram parameters, requiring from less than $1$ second (the $LHO$ algorithm) to $5$ minutes (the $RO$ algorithm). In addition, the $RH$ heuristic is more efficient than the optimal algorithm by at least two orders of magnitude. 

We note that our notions are framed in the context of location histograms but can be applied to any histogram. For example, they could be applied to a histogram comprised of webpage visits. The resultant sanitized histogram would then conceal visits to webpages that a user does not want to disclose, or it would resemble/avoid a target histogram for protecting the user from targeted advertising based on their webpage visits.

\noindent {\bf Organization}~We provide an overview of {and motivation for} our approach in Section~\ref{sec:overview};
we introduce formal notation, and we formalize the privacy notions, the adversary models, and the optimization problems we solve in Section~\ref{sec:problem};
we describe our algorithms and our heuristics in Section~\ref{sec:solution}; we evaluate our approach in Section~\ref{sec:evaluation}; we discuss related work in Section~\ref{sec:related_work}; we conclude the paper in Section~\ref{sec:conclusion}.

\vspace{-5mm}
\section{Overview {and motivation} of our approach}\label{sec:overview}
\vspace{-2mm}

{This section provides examples to motivate the need for sensitive location hiding and target resemblance and also provides a high-level overview of the optimization problems and methods for solving them.} 

\vspace{-5mm}
\subsection{Sensitive Location Hiding}
\vspace{-2mm}

{Given a set of sensitive locations, a histogram satisfies the notion of sensitive location hiding when the  frequency of each of its sensitive locations is zero.  Clearly, one simple strategy to achieve this notion is by setting the frequency of each sensitive location of a given histogram to zero. However, 
this strategy may have a substantial negative impact on the quality (utility) of the histogram in location histogram applications.  
This is because it reduces the \emph{size} (sum of frequencies) of the histogram. A size reduction should be avoided because some important statistics depend on the size of the histogram. An example of such statistics is the fraction of all users' visits to a particular location in a city (i.e., the ratio between the sum of the frequency of the location over all users' histograms and the sum of the sizes of these histograms), which is a simple indicator of the popularity of the location. Another example is the average number of visits to a location (i.e., the ratio between the size of the user's histogram and the number of locations in the histogram), which is used in location recommendation \cite{Melville,bre}.}

{A different strategy that achieves the sensitive location hiding notion, while preserving the size of the  histogram is to redistribute the frequency counts of the sensitive locations to non-sensitive ones. However, the redistribution needs to be performed in a way that  preserves the quality (utility) of the histogram in location histogram applications. The impact of each possible redistribution on quality must be quantified, and the selected redistribution strategy must be the one with the lower impact. We quantify the impact of a redistribution strategy with a quality distance function, similarly to most works on histogram sanitization~\cite{acs2012differentially, sdmhistogram, li2016improving,xu2013differentially}. This function offers generality, because different functions can be chosen for different applications.}

{The above discussion motivates the formulation of the \emph{Sensitive Location Hiding} ($SLH$) optimization problem: 
Given a histogram $H$, a set of sensitive locations, and a quality distance function, produce a sanitized histogram $H'$ such that the frequency of each sensitive location in $H'$ is $0$, $H'$ is as similar as possible to $H$, and $H'$ has the same size as $H$. Similarity is measured with the quality distance function.}

{In Section~\ref{subsec:SLH_definition} we give a formal definition of the $SLH$ optimization problem, discuss the adversary model it provides protection against}, and show that the problem is \emph{weakly} NP-hard~\cite{papadimitrioubook}. In addition, we discuss a variation of the problem which relaxes the size requirement and can be easily dealt with by our algorithms.

To illustrate the $SLH$ notion and the $SLH$ problem, we now provide Example~\ref{ex1}, which is inspired from approaches on privacy-preserving recommendation \cite{shenrecomicdm,tse18}.
However, the $SLH$ notion and problem are not tied to recommendation and cannot be handled with existing approaches.

\begin{figure*}[htbp]\hspace{-1mm}
	\begin{subfigure}[b]{.19\textwidth}\centering
		\includegraphics[scale=0.3,clip]{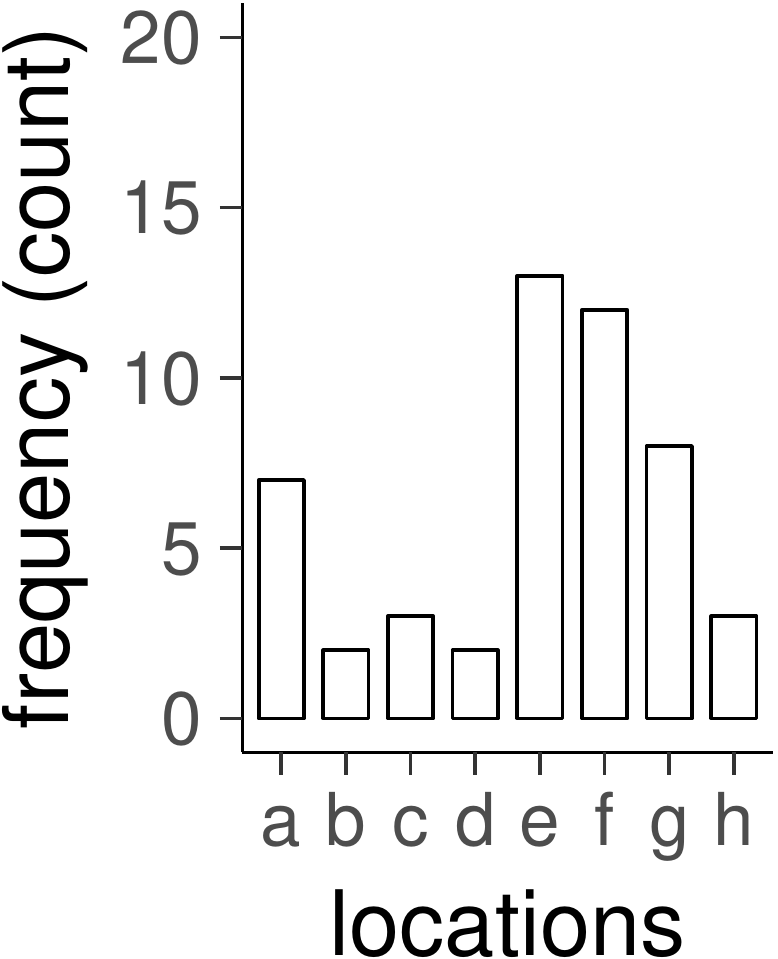}
		\caption{$H$}\label{ex1a}
	\end{subfigure}\hspace{+2mm}
	\begin{subfigure}[b]{.18\textwidth}\centering
		\includegraphics[scale=0.3,clip]{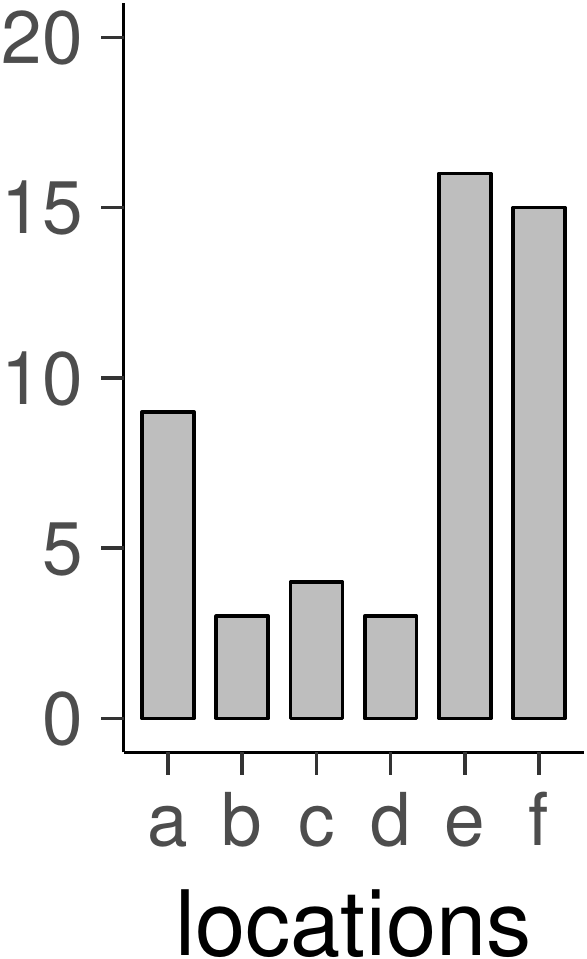}
		\caption{$H'$ by $LHO$}\label{ex1b}
	\end{subfigure}\hspace{+1mm}
	\begin{subfigure}[b]{.18\textwidth}\centering
		\includegraphics[scale=0.3,clip]{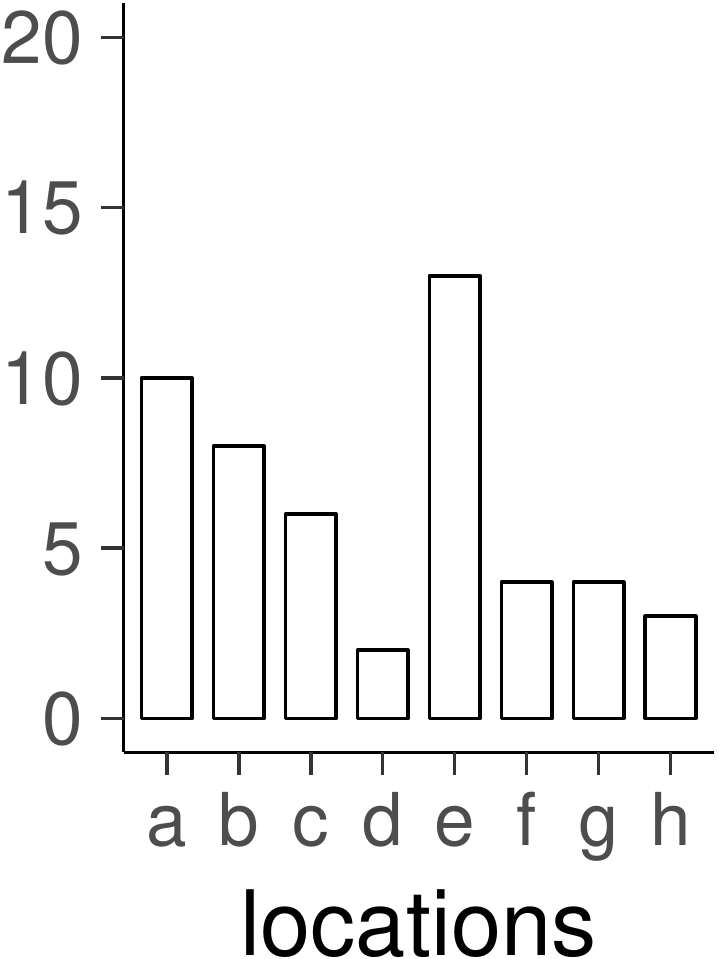}
		\caption{$H''$ in \TR{}}\label{ex1c}
	\end{subfigure}\hspace{+1mm}
	\begin{subfigure}[b]{.18\textwidth}\centering
		\includegraphics[scale=0.3,clip]{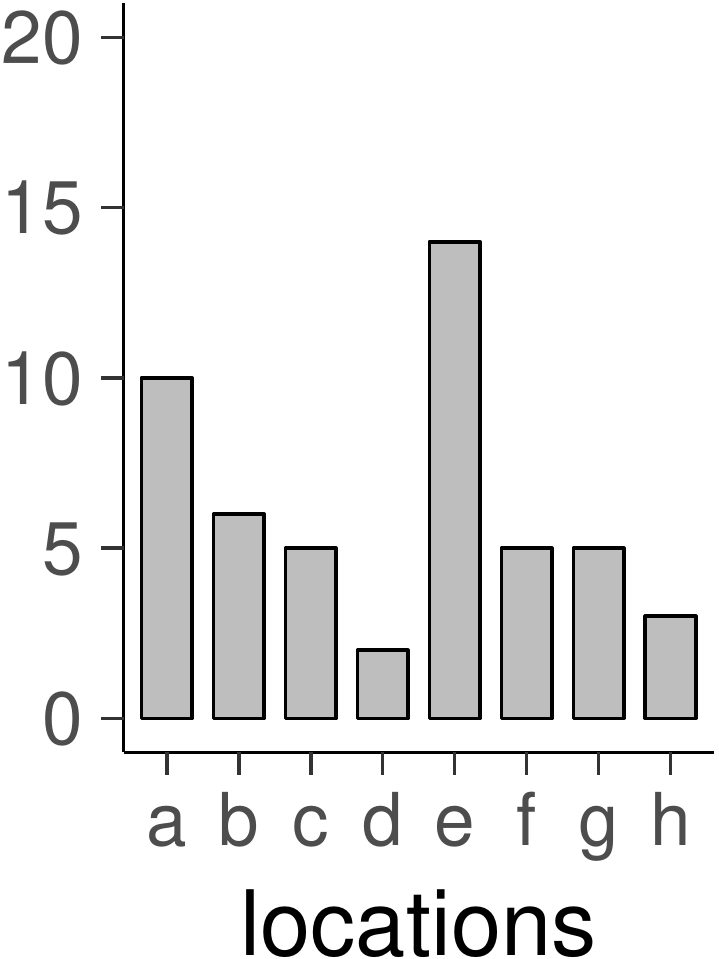}
		\caption{$H'_{RO}$ by $RO$}\label{ex1d}
	\end{subfigure}
	\hspace{+1mm}
	\begin{subfigure}[b]{.19\textwidth}\centering
		\includegraphics[scale=0.3,clip]{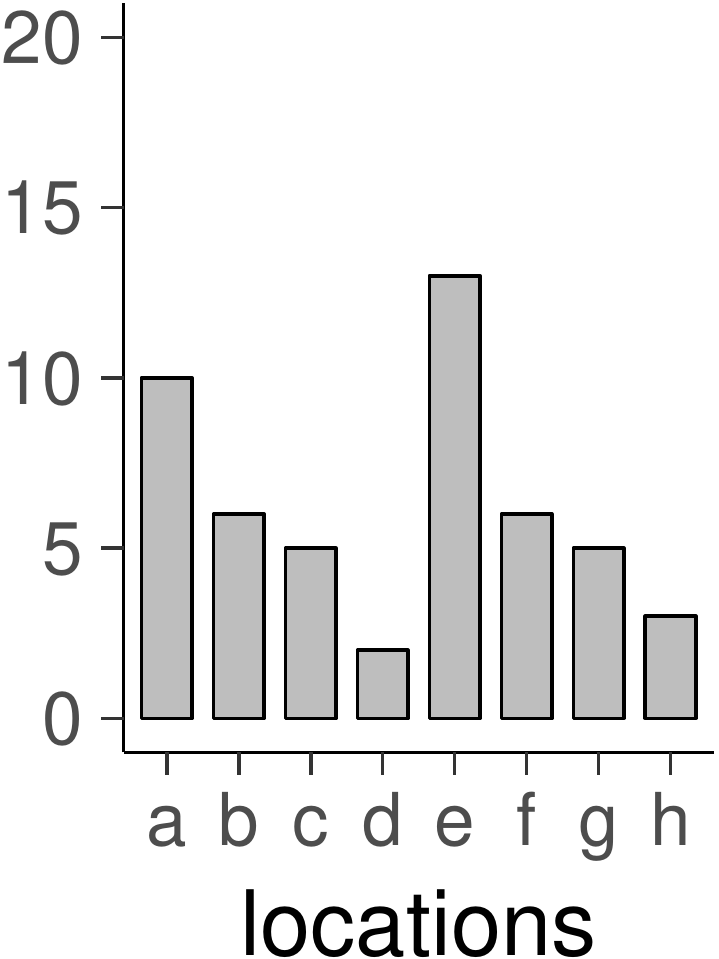}
		\caption{$H'_{RH}$ by $RH$}\label{ex1e}
	\end{subfigure}
	\caption{(a) Location histogram. (b) Sanitized histogram produced by the $LHO$ algorithm when $g$ and $h$ are sensitive locations and the quality distance function is Jensen-Shannon divergence \cite{jsdivergence} (only positive counts are illustrated). (c) Target histogram used in the \TR{} problem. (d) Sanitized histogram produced by the $RO$ algorithm when applied with the target histogram in \figref{ex1c}. (e) Sanitized histogram produced by the $RH$ heuristic when applied with the target histogram in \figref{ex1c}}
\end{figure*}

\vspace{-1mm}
\begin{example}[Illustration of the $SLH$ notion and $SLH$ problem] 
	An application provides location recommendations to users by analyzing their location profiles. To obtain a recommended location, a user
	must send $50$ location visits to the application in the form of a location histogram. To compute the recommended location, the application uses common mining tasks, such as discovering frequent location patterns in the user's histogram and finding similar histograms to it
	\cite{zhang2014}. The location histogram $H$ of a user Alice is shown in \figref{ex1a}.
	The histogram contains the number of times Alice visited each of the locations $a$ to $h$. Alice is not willing to provide $H$ to the application, because the last two locations in $H$, $g$ and $h$, are sensitive, but she still wishes to receive a ``good'' recommended location by the application.
	Therefore, Alice solves the $SLH$ problem and obtains the sanitized histogram $H'$ shown in \figref{ex1b}. The sanitized histogram  preserves privacy, because it does not contain the sensitive locations. It can be sent to the application to receive a fairly accurate recommendation, because it contains $50$ visits to nonsensitive locations (the visits to sensitive locations are zero and not shown) and is as ``similar'' as possible to $H$, to the extent permitted by the privacy requirement. \qedsymbol
	\label{ex1}
\end{example}

\begin{figure}[htbp]\centering
	\includegraphics[width=9.5cm,keepaspectratio]{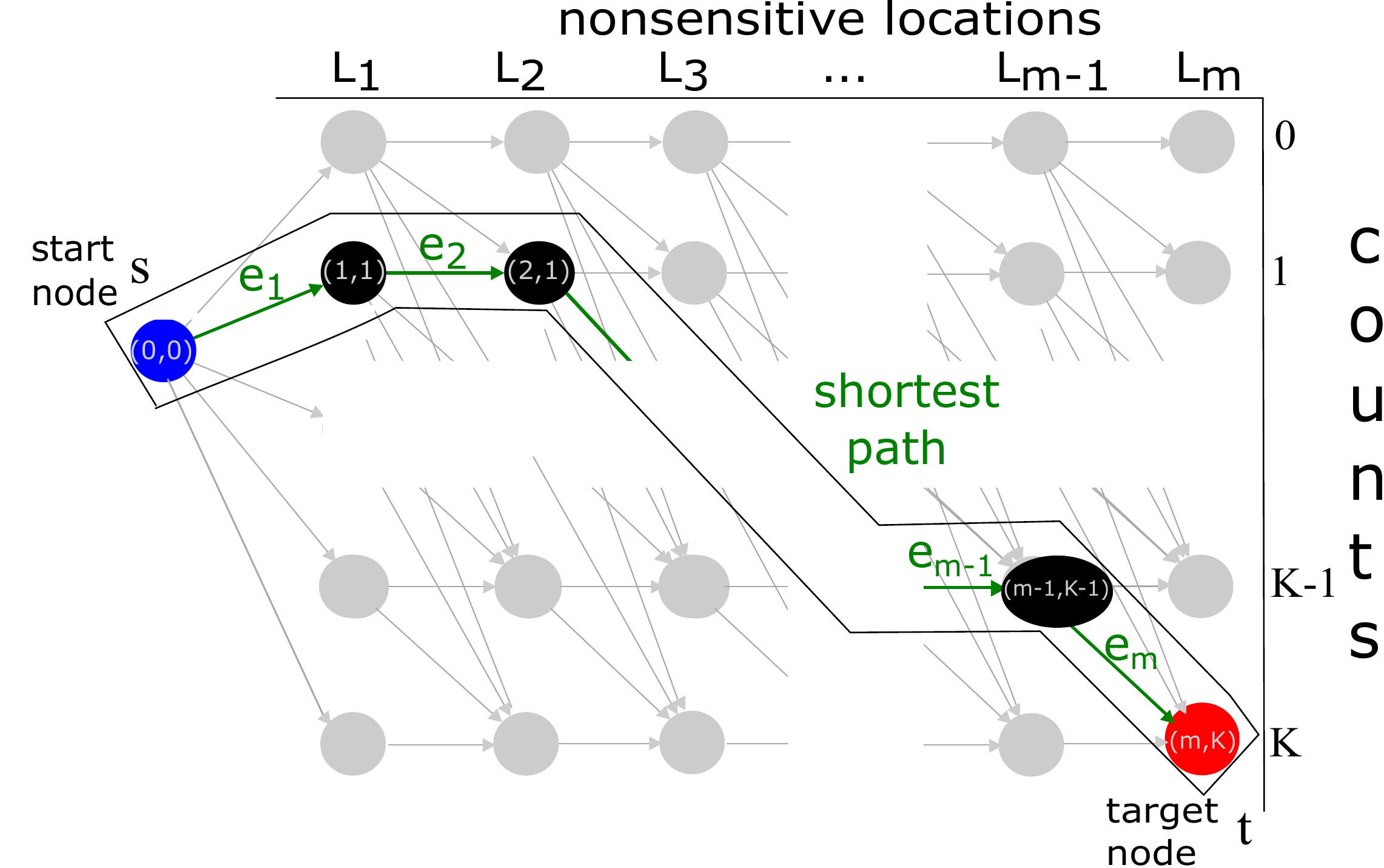}
	\caption{Graph constructed by the Location Hiding Optimal ($LHO$) algorithm.
	The shortest path from $s$ to $t$ corresponds to the optimal solution of the $SLH$ problem.
	Each node $(i,j)$, $i\in[1,m]$, $j\in[0,K]$, in the path denotes the redistribution of $j$ sensitive location visits into the nonsensitive bins $1, \ldots, i$.
	The path corresponds to the optimal way of redistributing  all $K$ sensitive location visits into all $m$ nonsensitive bins.
	The weight of the edge $((i,j),(i+1,j+k))$ denotes the impact on quality caused by redistributing $k$ sensitive location visits into the nonsensitive bin $i+1$, and the sum of the edge weights of this path $e_1 + \cdots + e_m$ quantifies the quality distance between the optimal solution and $H$}\label{LHOabstractgraph}
\end{figure}

To optimally solve the $SLH$ problem, the $LHO$ algorithm finds the exact number of sensitive location visits that need to be redistributed into each \emph{nonsensitive bin} (bin corresponding to a nonsensitive location), so that all sensitive location visits
are redistributed and quality is optimally preserved, with respect to the quality distance function. That is, the algorithm determines the frequency of each nonsensitive location of the sanitized histogram $H'$, so that $H'$ has the same size with the given histogram $H$ and is as similar as possible to it, with respect to the quality distance function.  However, it is computationally prohibitive to directly compute the quality of each possible redistribution of the sensitive location visits into the nonsensitive bins and then select the optimal solution.
This follows from the fact that there are $O\left(\binom{K+m-1}{m-1}\right)$ ways to redistribute $K$ sensitive location visits into $m$ nonsensitive bins
(each way corresponds to a \emph{weak composition} of $K$ \cite{bona2011walk}).
Therefore, $LHO$ solves the problem by modeling it as a shortest path problem between two specific nodes, $s$ and $t$, of a directed acyclic graph (DAG) (see \figref{LHOabstractgraph}).
The node $s$ is labeled $(0,0)$, and each other node is labeled $(i,j)$, where $i\in[1,m]$ corresponds to a nonsensitive location $L_i$ and $j\in[0,K]$ corresponds to the number of sensitive location visits that will be redistributed into the nonsensitive bins $1, \ldots, i$ of the sanitized histogram $H'$.
For example, the label $(m,K)$ of the node $t$ denotes the redistribution of all $K$ sensitive location visits to all $m$ nonsensitive bins of $H'$.
We may refer to a node using its label.
The graph contains an edge from each node $(i,j)$ to each node $(i+1,j+k)$ with $k\in[0, K-j]$, where $k$ denotes the number of sensitive location visits that are redistributed into the nonsensitive bin $i+1$.
For example, the edge $((i,j), (i+1,j+k)) = ((1,1), (2,1))$ denotes that $k=0$ visits are redistributed into the nonsensitive bin $i+1=2$.
Each edge $((i,j), (i+1,j+k))$ has a weight that quantifies the impact on quality caused by the redistribution of $k$ sensitive location visits into the nonsensitive bin $i+1$.
Every path from $s$ to $t$ corresponds to a feasible solution of the $SLH$ problem.
This is because the nodes in the path uniquely determine how all $K$ sensitive location visits will be redistributed into all $m$ nonsensitive bins of $H'$ (see property (I) in Section~\ref{LHOalgorithmsectin}).
In addition, the length (sum of edge weights) of the path is equal to the quality distance between the corresponding solution $H'$ and $H$ (see property (II) in Section~\ref{LHOalgorithmsectin}).
Thus, the shortest path from $s$ to $t$ corresponds to a histogram $H'$ that is as similar as possible to $H$ and therefore it is the optimal solution of the $SLH$ optimization problem.
For example, applying the $LHO$ algorithm  to the histogram of \figref{ex1a}, when the locations $g$ and $h$ are sensitive and the quality distance function is Jensen-Shannon divergence (see Section~\ref{sec:problem:ql}), produces the sanitized histogram in \figref{ex1b}.
Note that the visits to $g$ and $h$ are redistributed into all nonsensitive bins, so that the sanitized histogram is as similar as possible to the histogram of \figref{ex1a}. {A formal description of the $LHO$ algorithm, as well as the analysis of the algorithm is provided in Section~\ref{LHOalgorithmsectin}.} 

\vspace{-6mm}
\subsection{Target Resemblance}
\vspace{-2mm}
{Given a target histogram, a histogram satisfies the notion of target resemblance when it is 
similar enough to the target. A privacy distance function quantifies 
similarity, and a privacy parameter quantifies the threshold for determining whether the two histograms are similar enough.} 

{Clearly, any histogram can be easily modified to be arbitrarily similar 
to a given target histogram, by simply redistributing all 
its frequency counts so that they are exactly equal to the counts in the target histogram. However, 
as in the case of $SLH$, 
a simplistic redistribution can deteriorate quality unacceptably. The modification to the histogram must balance between resemblance to the target histogram and similarity to the original histogram.  A quality distance function quantifies the quality loss caused by the modification, and a quality parameter quantifies the threshold for determining whether the loss is acceptable or not.}

{The above discussion motivates the formulation of the \emph{Target Resemblance} (\TR{}) optimization problem: 
Given a histogram $H$, a target histogram $H''$, a quality distance function and a quality parameter $\epsilon$, a privacy distance function and a privacy parameter $c$, produce a sanitized histogram $H'$ such that its quality distance from $H$ is at most $\epsilon$, its privacy distance from $H''$ is minimized, and its size is the same as $H$. If the resulting privacy distance of $H'$ from $H''$ is larger than $c$, then there is no solution.}


{In Section~\ref{subsec:TR_definition} we give a formal definition of the \TR{} problem, discuss  the
adversary  model  it  provides  protection  against,} and we show that it is \emph{weakly} NP-hard. In addition, we discuss a variation that relaxes the size requirement and can be easily dealt with by our algorithms. To illustrate the \TR{} privacy notion and optimization problem, we provide Example~\ref{ex2}.

\vspace{-2mm}
\begin{example}[Illustration of the $TR$ notion and problem, continuing from Example~\ref{ex1}] \figref{ex1a} shows the location histogram $H$ of a user, Bob, who wants to use the location recommendation application.
	Bob is not willing to provide $H$ to the application, because he is concerned about price discrimination, as a result of frequent visits to locations $f$ (``airport'') and $g$ (``5-star hotel'').
	To achieve his purpose, Bob can solve the Target Resemblance (\TR{}) problem to generate a histogram that resembles the target histogram $H''$ in \figref{ex1c}.
	$H''$ reflects a budget-conscious person, because in $H''$ the frequencies of locations $a$ (``train station''), $b$ (``$2$-star hotel''), and $c$ (``$3$-star hotel'') are relatively high, whereas the frequencies of $f$ and $g$ are relatively low.
	Hence, $H''$ is likely to attract lower-priced recommendations than $H$ would, and it is definitely more likely to prevent price discrimination \cite{mikians,pricedisc2}.
	The resemblance to $H''$ is satisfied by generating a \emph{sanitized} histogram $H'_{RO}$ ($RO$ for ``Resemblance Optimal'') that minimizes a \emph{privacy distance function} between the sanitized histogram and $H''$.
	In parallel, Bob still wishes to receive  a ``good'' recommended location by the application.
	This quality requirement is satisfied by limiting the dissimilarity between $H$ and the sanitized histogram $H'_{RO}$ to a maximum of $\epsilon=0.05$, as measured by a \emph{quality distance function}, so that the sanitization preserves the similarity between $H$ and other users' histograms, which helps compute a ``good'' recommended location \cite{Melville}.
	After solving the \TR{} problem, Bob obtains the sanitized histogram $H'_{RO}$ in \figref{ex1d}, which is almost identical to the target $H''$. \qedsymbol
	\label{ex2}
\end{example}

To optimally solve the \TR{} problem, the Resemblance Optimal ($RO$) algorithm finds the exact number of location visits that need to be added into, or removed from, each bin of a histogram $H$, so that the resultant sanitized histogram $H'$ is as similar as possible to the target histogram $H''$, and no more dissimilar from $H$ than what is allowed by the quality threshold $\epsilon$.
Again, the large number of potential solutions, given by $O\left(\binom{N+n-1}{n-1}\right)$, where $N$ is the size of $H$ and $n$ is its length, prohibits directly computing the quality of each possible solution and selecting the optimal solution.
Therefore, $RO$ solves the problem by modeling it as a constrained shortest path problem in a DAG (see \figref{ROabstractgraph}).
The graph contains a path $(u^0_0,u^{N_1}_1,\ldots,u^{N_n}_n)$ for each allocation of $N = N_n$ counts to the $n$ bins of the histogram (i.e., each allocation corresponds to a possible solution to the Target Resemblance problem, ignoring the quality constraint), where a node $u_i^{N_i}$  corresponds to allocating $N_i$ counts to bins 1 up to and including $i$.
The length of a path is equal to the dissimilarity of the corresponding allocation to the target histogram $H''$, whereas the cost of the path is equal to the quality loss as compared to the user's histogram $H$.
The algorithm finds the shortest path among those whose cost does not exceed the quality threshold $\epsilon$.
As the graph is a DAG, to find the optimal solution it suffices to explore it in Breadth-First Search order.
First, we compute constrained shortest paths to all nodes that correspond to bin 1: $u_1^{N_1}, N_1 = 0, \ldots, N$; then, we extend these paths to all nodes that correspond to bin 2: $u_2^{N_2}, N_2 = 0, \ldots, N$ and we prune them if they violate the quality constraint; we continue all the way to $u_{n-1}^{N_{n-1}}, N_{n-1} = 0, \ldots, N$ and finally to the node $u_n^{N_n}, N_n = N$.
The shortest path to that final node corresponds to the optimal valid allocation of $N$ counts to bins $1, \ldots, n$.

\begin{figure}[!ht]\centering
	\includegraphics[scale=0.32]{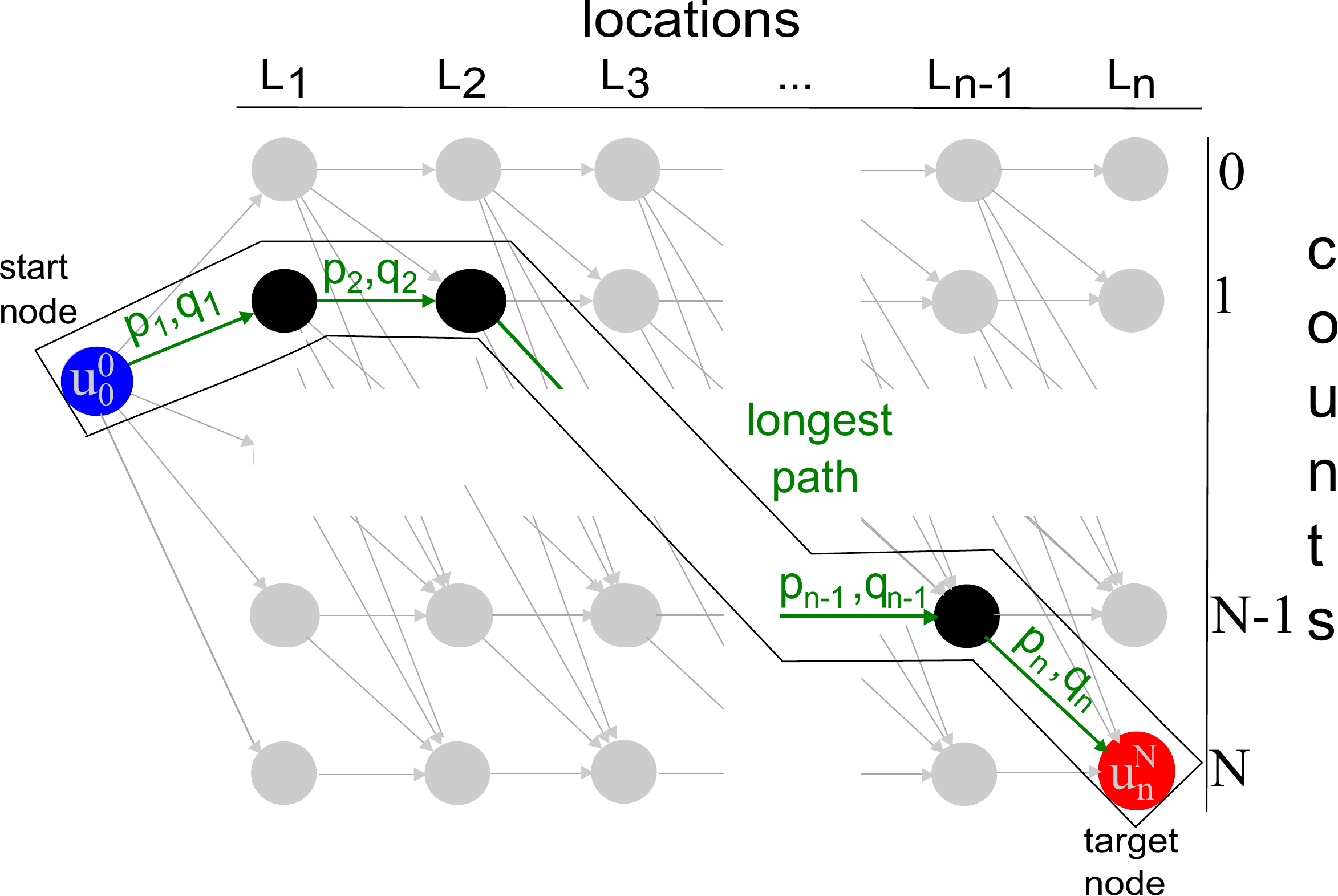}
	\caption{Graph constructed by the Resemblance Optimal ($RO$) algorithm.
		The shortest path from $u^0_0$ to $u^N_n$ with cost at most $\epsilon$ corresponds to the optimal solution of the \TR{} problem.
		The nodes of this path correspond to the optimal way of allocating counts to all bins.
		The edge weights of this path are (privacy, quality loss) pairs $(p_1,q_1), \ldots, (p_n,q_n)$. The two weights of an edge $u_i^{N_i}$ to $u_{i+1}^{N_i + k}$ are the privacy and quality effects of allocating exactly $k$ counts to bin $i+1$ of the solution histogram.
		The sum $\sum_{i\in[1,n]}p_i$ quantifies the dissimilarity between the optimal solution and the target histogram (smaller is better) and the sum $\sum_{i\in[1,n]}q_i$ quantifies the total quality loss, which should be at most $\epsilon$}\label{ROabstractgraph}
\end{figure}

When solution optimality is not necessary, the \TR{} problem can be solved more efficiently by the $RH$ heuristic. $RH$ differs from the $RO$ algorithm in that it restricts the set of bins in the histogram $H$ whose number of location visits can increase or decrease. Specifically, it works in a greedy fashion, iteratively ``moving'' frequency counts from \emph{source bins} to \emph{destination bins}. The source bins have higher frequency in $H$ than in the target histogram $H''$, whereas the destination bins have lower frequency in $H$ than in $H''$. Thus, moving counts from source to destination bins makes the sanitized histogram more and more similar to the target histogram, but it incurs a quality loss due to changes in frequency counts. Therefore, to control the loss of quality, moves are performed for as long as the quality distance of the resultant sanitized histogram from $H$ does not exceed the quality threshold. Example~\ref{ex3} below illustrates the $RO$ algorithm and the $RH$ heuristic.

\begin{example}[Illustration of the $RO$ algorithm and $RH$ heuristic, continuing from Example \ref{ex2}]
	Bob applies $RO$ to his histogram $H$ in \figref{ex1a}, using the target histogram $H''$ in \figref{ex1c}, the quality threshold $\epsilon=0.05$, and the Jensen-Shannon divergence to measure dissimilarity from $H''$ and from $H$.
	The algorithm produces the sanitized histogram $H'_{RO}$ in \figref{ex1d}, which is as similar to $H''$ as allowed by the specified threshold.
	Similarly, Bob applies $RH$ and obtains the sanitized histogram $H'_{RH}$ in \figref{ex1e}.
	Comparing $H'_{RO}$ and $H'_{RH}$ to $H''$, we observe that $H'_{RO}$ is very similar to $H''$, while $H'_{RH}$ is slightly less similar (e.g. the frequencies of $f$ and $g$ are equal in $H''$ and $H'_{RO}$, while they are not equal in $H''$ and $H'_{RH}$).
	However, $H'_{RH}$ is still useful for getting a good recommendation from the application, because the quality loss (dissimilarity to $H$) does not exceed $\epsilon$. \qedsymbol
	\label{ex3}
\end{example}

\subsection{Target Avoidance}
Given a target histogram, a histogram satisfies the notion of target avoidance when it is \emph{dissimilar} enough from the target. A privacy distance function quantifies similarity, and a privacy parameter quantifies the threshold for determining whether the two histograms are dissimilar enough.

Similarly to $SLH$ and $TR$, any modification to the histogram must balance between dissimilarity to the target histogram, so as to achieve target avoidance, and similarity to the original histogram, so as to preserve quality.

For illustration and motivation, we refer again to Example~\ref{ex2}. In that example, Bob can alternatively solve the Target Avoidance problem to directly avoid price discrimination.
In so doing, the aim would be to avoid, not resemble, the target histogram, and so it would need to have as many visits to each of the locations $f$ and $g$ as $H$. For example, $H''$ could be identical to $H$.

The Target Avoidance optimization problem is formally defined in Section~\ref{subsec:TA_definition}, in which we also describe precisely the adversary model.

The optimal algorithm $AO$ and heuristic $AH$ for solving the \TA{} optimization problem are very similar to the ones described for \TR{} above. The only essential difference is that \TA{} algorithms compute a longest path in the graph instead of a shortest path as in \TR{}.

\section{Background, problem definitions, and adversary models}\label{sec:problem}
\vspace{-2mm}

In this section, we define some preliminary concepts and then we formally define the $SLH$, \TA{}, and \TR{} optimization problems.
A summary of the most important notation we introduce is in Table~\ref{tbl:notation}. 

\begin{table}[th]\small
	\caption{Acronyms and Notation}
	\centering
	\begin{tabular}{c l}
		\hline
		\textbf{Acronym} & \textbf{Meaning} \\
		\hline
		$SLH$ & Sensitive Location Hiding (input: $H, L', d_q()$, output: $H'$)\\
		$LHO$ & Location Hiding Optimal (optimal algorithm for SLH)\\
		$TR$ & Target Resemblance (input: $H, H'', d_p(), d_q(), \epsilon$, output: $H'$)\\
		$RO$ & Resemblance Optimal (optimal algorithm for TR)\\
		$RH$ & Resemblance Heuristic (heuristic for TR)\\
		$TA$ & Target Avoidance (input: $H, H'', d_p(), d_q(), \epsilon$, output: $H'$)\\
		$AO$ & Avoidance Optimal (optimal algorithm for TA)\\
		$AH$ & Avoidance Heuristic (heuristic for TA)\\
		$JS$ & Jensen-Shannon divergence \\
		~    & (distance function used in the evaluation)\\
		\hline
		\textbf{Notation} & \textbf{Meaning} \\
		\hline
		$L = \{L_1, \ldots, L_{|L|}\}$ & Set of semantic locations that a user can visit\\
		$L' \subseteq L$ & Set of sensitive locations (for SLH)\\
		$\mathcal{H}_{n,N}$ & Set of all histograms of length $n, n \leq |L|,$ and size $N$\\
		$H = (f(L_1),\ldots, f(L_n))$ & Histogram of nonnegative frequencies (counts)\\
		~                             & of visits to locations $L_1, \ldots, L_n \in L, n \leq |L|$\\
		$H'$ & Sanitized histogram\\
		$H''$ & Target histogram (for TA and TR)\\
		$\epsilon$ & Maximum quality loss threshold (for TA and TR)\\
		$d_p(H', H'')$ & Privacy distance function (for TA and TR)\\
		$d_q(H, H')$ & Quality distance function (measures quality loss)\\[2pt]
		\hline
	\end{tabular}
	\label{tbl:notation}
\end{table}

\vspace{-3mm}
\subsection{Preliminaries}
\vspace{-2mm}

We consider an area of interest, modeled as a finite set of semantic locations $L=\{L_1, \ldots, L_{|L|}\}$ of cardinality $|L|$, where a location $L_i$, $i \in [1,|L|]$, is e.g. ``Italian Restaurant,'' ``Cinema,'' or ``Museum.''

We also consider a user who moves in this area. The user's \emph{histogram} is a vector of integer frequencies $H=(f(L_1), \ldots, f(L_n))$, where $n\leq |L|$ is the \emph{length} of the histogram. Each location $L_i, i\in[1,n]$, has a frequency $f(L_i)>0$, when $L_i$ was visited by the user, or $f(L_i)=0$ otherwise. We may refer to frequencies as counts.

We use $H[i]$ to refer to the $i$-th element, or bin, of $H$, and $N$, or \emph{size}, to refer to the $L_1$-norm $|H|_1=\sum_{i\in[1,n]}H[i]$ of $H$. We use $\mathcal{H}_{n,N}$ to denote the set of all histograms of length $n$ and size $N$.

Having compiled $H$, the user wishes to submit it to a location-based application. Before submitting it, the user transforms it into a \emph{sanitized} histogram $H'$ (in a way to be made concrete in Problems \ref{SLHdef} and \ref{TRdef} below) and then submits $H'$ to the application. Next, the application returns a response to the user. Depending on the sanitization required, $H'$ may contain zero frequency counts for some locations, or it may contain nonzero frequency counts for locations that the user never visited. If the user wishes, we can easily guarantee that $H'$ will not contain nonzero frequency counts for locations that the user never visited, by assigning an infinite cost $d_q(H[i],H'[i])$ for each such location $L_i$. 

\vspace{-2mm}
\subsubsection{Quality loss}\label{sec:problem:ql}
\vspace{-2mm}

Since the user submits $H'$, which is in general different from $H$, there will be a negative impact on the quality of the application response. The resulting loss in quality is measured by a \emph{quality distance} function $d_q(H,H')$. For every pair $H, H'$, we require that $d_q(H, H') \geq 0$, and that $H = H'$ implies $d_q(H, H') = 0$.
In addition, we require $d_q$ to decompose as a sum over bins, i.e. there must be a function $q$ such that $d_q(H,H') = \sum_{i \in [1,n]} q(H[i],H'[i])$. Most distances used in data mining applications in which distances between histograms/vectors must be preserved (e.g., Jensen-Shannon divergence ($JS$-divergence) \cite{jsdivergence}, Jeffrey's divergence \cite{qardaji},  $L_2$-distance (Euclidean distance) and Squared Euclidean distance \cite{dmbook}, Variational distance \cite{jsdivergence}, Pearson $\chi^2$ distance \cite{jsdivergence},  and Neyman $\chi^2$ distance \cite{jsdivergence})  decompose as a sum over bins. 

We use $JS$-divergence as the objective function $d_q$ in our experiments (see Section \ref{sec:evaluation}). $JS$-divergence is a standard measure for quantifying distances between probability distributions, which is often used in histogram/vector classification \cite{classifdivergence} and clustering \cite{clustdivergence}. Given two histograms $H_1, H_2$, the JS-divergence between them is defined as
\begin{equation*}
\resizebox{.95\hsize}{!}{
	$JS(H_1,H_2)=\frac{1}{2\cdot N}\cdot \displaystyle\sum_{i\in[1,n]}\left(H_1[i]\cdot \log_2\left(\frac{2\cdot H_1[i]}{H_1[i]+H_2[i]}\right) + H_2[i]\cdot \log_2\left(\frac{2\cdot H_2[i]}{H_1[i]+H_2[i]}\right)\right)$}.
\end{equation*}

\vspace{-2mm}

\noindent with the convention $0\cdot log_2(0)=0$. $JS$-divergence is bounded in $[0,1]$ \cite{jsdivergence}, and $JS(H_1,H_2)=0$ implies no quality loss. As explained in \cite{li2009modeling}, $JS$-divergence can also be easily extended to capture semantic similarity requirements (e.g. An Italian Restaurant is more similar to a French Restaurant than to an American Cinema), when this is needed in applications. The extended measure, called \emph{smoothed} $JS$-divergence, requires preprocessing the histogram by kernel smoothing and then applying $JS$-divergence to the preprocessed histogram. Incorporating smoothed $JS$-divergence in our methods is straightforward and left for future work.





\vspace{-3mm}
\subsection{The Sensitive Location Hiding problem: adversary model and formal definition} \label{subsec:SLH_definition}
\vspace{-2mm}

{As discussed in the Introduction and in Section~\ref{sec:overview}, the Sensitive Location Hiding ($SLH$) privacy notion aims to conceal all visits to sensitive locations.
We formulate the adversary model and the desired privacy property for the $SLH$ notion as follows.}

{The adversary knows: (I) the sanitized histogram $H'$ that the user submits, (II) the set of all possible sensitive locations $L'$, and (III) the fact that, if $H'$ is fake, then it must have been produced by the $LHO$ algorithm in our paper. The adversary has no other background knowledge. 
The adversary succeeds if, based on their knowledge, they manage to determine whether or not the user visited one or more of the sensitive locations in $L'$.} 

{The desired privacy property is the negation of the adversary's success criterion. That is, the adversary must not be able to infer, from the sanitized histogram, that the user has visited any of the sensitive locations.}  

We formally define the corresponding optimization problem as follows:
\begin{problem}[Sensitive Location Hiding (\SLH{})]\label{SLHdef}
	Given a histogram $H \in \mathcal{H}_{n,N}$, a subset $L'\subseteq L$ of \emph{sensitive} locations, and a quality distance function $d_q()$, construct a sanitized histogram $H' \in \mathcal{H}_{n,N}$ that
	\begin{equation*}
	\begin{aligned}
	& \underset{H' \in \mathcal{H}_{n,N}}{\text{minimizes}}
	& & d_q(H, H') \\
	& \text{subject to}
	& & H'[i] = 0 \text{ for each location }L_i\in L'.
	\end{aligned}
	\end{equation*}
\end{problem}

Intuitively, the \SLH{} problem requires constructing a sanitized histogram by redistributing the counts of the sensitive locations of $H$ into bins that correspond to nonsensitive locations, in the best possible way according to $d_q$. The sensitive locations are specified by the user based on their preferences. 

In the \SLH{} problem formulation, we follow the user-centric (or personalized) approach to privacy that is employed in \cite{Abul2018,damiani2010probe,shokri2017privacy,augir2016privacy}.
This approach requires the users to specify their own privacy preferences, so that these preferences are best reflected in the produced solutions. However, not all users may possess knowledge allowing them to identify certain locations in their histograms as sensitive. Yet, such users often know that a class of locations are sensitive, or they do not want to be associated with a class of locations \cite{Xiao:2006:PPP:1142473.1142500, Loukides2013}. 
For instance, several users may not want to be associated with visits to any type of clinic or adult entertainment location. In this case, users may employ a taxonomy\footnote{A taxonomy (also known as hierarchy) is a tree structure, in which the root corresponds to the most general location ``any'', there is a leaf node for each sensitive location, and the internal nodes of the taxonomy correspond to aggregate/coarse sensitive locations.} to identify classes of sensitive locations, which requires less detailed knowledge. This method is inspired by   \cite{Xiao:2006:PPP:1142473.1142500, Loukides2013} 
and simply requires a user to select one or more nodes in the taxonomy. If a node $u$ that is not a leaf is selected, then all locations corresponding to leaves in the subtree rooted at $u$ will be considered as sensitive. 
If the selected node $u$ is a leaf, then its corresponding location will be considered as sensitive. Such taxonomies already exist for location-based data, and they can also be automatically constructed based on machine learning techniques \cite{velardi}. For example, in the Foursquare taxonomy (see Section \ref{sec:evaluation}), there is an aggregate category (internal node) ``Medical center'' which contains more specific categories (leaves)  ``Hospital,'' ``Rehab center,'' etc. 

\begin{thm}\label{thm:SLH_NP_Hard}
	The \SLH{} problem is weakly NP-hard.
\end{thm}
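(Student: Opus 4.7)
The plan is to give a polynomial-time reduction from the weakly NP-hard PARTITION problem, thereby transferring its pseudo-polynomial numeric magnitudes into the \SLH{} instance. Starting from positive integers $a_1,\ldots,a_k$ with $\sum_{i=1}^k a_i = 2T$, I will build an \SLH{} instance with $n = k+1$ locations: one sensitive location $L_0$ with $H[L_0] = T$ (so $L' = \{L_0\}$), and $k$ non-sensitive locations with $H[L_i] = a_i$ for $i\in[1,k]$. The decomposable quality distance will be induced by the per-bin function
\[
q(x,y) = \begin{cases} 0 & \text{if } y \in \{x, 2x\},\\ 1 & \text{otherwise,}\end{cases}
\]
which is non-negative and satisfies $q(x,x)=0$, so $d_q(H,H')=\sum_i q(H[i],H'[i])$ fulfills the requirements of Section~\ref{sec:problem:ql}. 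The construction has polynomial bit length in the PARTITION input.

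The analytical core of the argument is the following. Any feasible $H' \in \mathcal{H}_{n,N}$ with $N = 3T$ must satisfy $H'[L_0]=0$, so exactly $T$ additional counts have to land in non-sensitive bins (which originally hold $2T$). For each non-sensitive bin $i\geq 1$, $q(a_i, H'[L_i]) = 0$ iff $H'[L_i]\in\{a_i,2a_i\}$, that is, bin $i$ is either left untouched (contributing $0$ to the extra count) or doubled (contributing exactly $a_i$); any other integer value costs at least $1$. The sensitive bin contributes the fixed constant $q(T,0)$, independent of $H'$. Hence there exists $H'$ with $d_q(H,H') = q(T,0)$ if and only if there is a subset $S \subseteq \{1,\ldots,k\}$ with $\sum_{i\in S} a_i = T$, i.e.\ precisely a PARTITION witness; otherwise every feasible $H'$ incurs at least one extra unit of cost.

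Putting the pieces together, the decision version of \SLH{} (given threshold $\theta$, is there $H'$ with $d_q(H,H')\le\theta$?) decides PARTITION when $\theta = q(T,0)$. Since \SLH{} is in NP (a sanitized histogram is a polynomial-size witness whose cost is checkable in polynomial time), and since the constructed numbers are the original PARTITION numbers, the reduction is polynomial in bit length, yielding weak NP-hardness of \SLH{}---consistent with the pseudo-polynomial-time $LHO$ algorithm developed later in Section~\ref{LHOalgorithmsectin}.

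The main obstacle I anticipate is choosing the per-bin quality function so that it is simultaneously decomposable, non-negative, and combinatorially rigid enough to force a ``keep or double'' decision at every non-sensitive bin. The step function above resolves this, but the write-up has to verify carefully that no non-canonical reallocation---for instance, spreading the surplus $T$ counts over bins in irregular multiples---can sneak in under cost $q(T,0)$ without corresponding to a PARTITION solution. This reduces to the simple observation that any $H'[L_i] \notin \{a_i, 2a_i\}$ contributes strictly positive cost, so the zero-non-sensitive-cost patterns are exactly the subset-sum witnesses and the reduction is tight.
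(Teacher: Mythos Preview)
Your reduction from \textsc{Partition} is correct and complete: the per-bin cost $q(x,y)=[y\notin\{x,2x\}]$ meets the paper's requirements on $d_q$ (non-negative, decomposable, vanishing on the diagonal), the size constraint forces exactly $T$ surplus counts onto the non-sensitive bins, and achieving total cost $q(T,0)$ is equivalent to selecting a subset of bins to double whose $a_i$'s sum to $T$. The ``keep or double'' rigidity argument closes the only gap you flag.

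The paper, however, takes a different route: it reduces from the \emph{Multiple Choice Knapsack} (MCK) problem rather than \textsc{Partition}. There each class $C_i$ becomes a non-sensitive location, the knapsack capacity $b$ becomes the count of the single sensitive location, and each element $(w_{ij},c_{ij})$ in class $C_i$ encodes the option of transferring $w_{ij}$ counts into bin $i$ at per-bin cost $c_{ij}$; the MCK equality constraint $\sum w_{ij}x_{ij}=b$ matches the size-preservation requirement of \SLH{}. The structural fit is tighter---``pick exactly one item per class'' mirrors ``pick exactly one increment per non-sensitive bin''---so the paper's $q$ is read off directly from the MCK costs rather than engineered to force a binary choice. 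Your approach is more elementary (\textsc{Partition} is the textbook weakly NP-hard problem) and needs no appeal to the less familiar MCK, at the price of a slightly artificial step function for $q$; the paper's approach buys a more transparent correspondence between problem structures but relies on the reader knowing that MCK with an equality constraint is weakly NP-hard. Both yield exactly weak NP-hardness, consistent with the pseudo-polynomial $LHO$ algorithm.
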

\begin{proof}
\vspace{-2mm}
	See Appendix (Section~\ref{subsec:SLH_NP_hard}).
\end{proof}
\vspace{-2mm}

Clearly, the $SLH$ problem seeks to produce a sanitized histogram $H'$ with the same size as $H$. As discussed in the Introduction, this allows preserving
statistics that depend on the size of the histogram, which are important in location based applications, such as location recommendation. However, it is also possible
to require the sanitized histogram $H'$ to have a given size instead (e.g., when an application requires a histogram to have a certain number of location counts, or in pathological cases where redistribution leads to undesirable/implausible histograms). This leads to a variation of the $SLH$ problem, referred to as $SLH_{r}$, which requires redistributing $r\geq 0$ counts of sensitive locations into the bins corresponding to nonsensitive locations. Note the following choices for $r$ in $SLH_{r}$: (I) For $r=0$, the $SLH_{r}$ problem requires constructing a sanitized histogram where each sensitive location has count $0$ and each nonsensitive location has a count equal to that of its count in $H$. Such a histogram is trivial to produce, by simply replacing the count of each sensitive location with $0$. (II) For $r=\sum_{L_i\in L'}f(L_i)$ (i.e., equal to the total count of sensitive locations), $SLH_{r}$ becomes equivalent to the $SLH$ problem. (III) For
$r>\sum_{L_i\in L'}f(L_i)$, the $SLH_{r}$ problem requires constructing a sanitized histogram with larger size than $H$. As we will explain in Section~\ref{LHOalgorithmsectin}, it is straightforward to optimally solve $SLH_{r}$ based on our $LHO$ algorithm.

{\subsubsection{Solutions to the $SLH$ optimization problem satisfy the desired privacy property}
The adversary cannot distinguish between a user A who has only visited nonsensitive locations and thus submits a non-sanitized histogram $H_A$, and a user B who has visited some sensitive locations and the algorithm has produced a sanitized histogram $H_B'$ that is identical to $H_A$. This is because every possible sanitized histogram that the $LHO$ algorithm can output is a valid histogram that could have legitimately been produced by a user. Note that, if there are histograms that cannot be produced by a legitimate user, $LHO$ can be trivially adapted to never output such histograms. This adaptation is easy because all histograms are encoded as paths in a graph, so illegitimate histograms are also 
paths in the graph, referred to as illegitimate paths, and these histograms 
can be avoided by simply changing the shortest-path finding algorithm to an algorithm that finds a shortest path which is not contained in a given subset of illegitimate paths~\cite{pugliese}.} 
\vspace{-2mm}
\subsection{The Target Resemblance problem: adversary model and formal definition} \label{subsec:TR_definition}
\vspace{-2mm}
{As discussed in the Introduction and in Section~\ref{sec:overview}, for the Target Resemblance ({\TR}) privacy notion the user specifies a target histogram $H''$ to resemble, a quality parameter $\epsilon$ and a privacy parameter $c$. The objective of the {\TR} optimization problem is to create a sanitized histogram $H'$ that is as similar as possible to $H''$, subject to the quality constraint $d_q(H,H')\leq \epsilon$.}
{The privacy distance function that quantifies the notion of similarity is denoted $d_p(H',H'')$. If $d_p(H',H'') > c$, then $H'$ is not acceptable, because it is not similar enough to the target.} 

The function $d_p(H',H'')$ is nonnegative and it must decompose as a sum over bins, i.e. there must be a function $p$ such that $d_p(H', H'') = \sum_{i \in [1, n]} p(H'[i],H''[i])$, using zeros to fill in missing location counts. In \TR{}, privacy is maximum when $H'=H''$ ($d_p(H',H'') = 0$), because there is no better resemblance than being identical. Any function with these properties would be suitable as $d_p$ (e.g., $JS$-divergence, or $L_2$-distance).
We use $JS$-divergence as $d_p$ in our experiments (see Section~\ref{sec:evaluation}).

{We can formulate the adversary model and the desired privacy property for this problem as follows:
The adversary knows (I) the histogram $H'$ that the user submits, (II) a target histogram $H''$, 
(III) a privacy distance function $d_p()$, and (IV) a privacy parameter $c$.} 

{Upon receiving $H'$, 
the adversary compares it to the target $H''$ in order to profile the user. For example, if an adversary wants to determine whether the user is a member of a particular ethnic/religious/social group, the target histogram is the histogram of a typical member of that group.} 
{Formally, the adversary makes this determination by comparing $d_p(H', H'')$ to $c$, i.e., by comparing the privacy distance between the user's submitted histogram $H'$ and $H''$ to the privacy parameter $c$. If $d_p(H', H'') \leq c$, the adversary concludes that the user is a member of the group, otherwise they conclude that the user is not a member of the group. The adversary has no other background knowledge. In particular, the adversary does not know whether the user submitted their true histogram or the user submitted a modified histogram aiming to resemble a particular target histogram.} 
{The adversary succeeds if they conclude that the user is not a member of the group, i.e.  $d_p(H', H'') > c$.}

{The desired privacy property is the negation of the adversary's success criterion. In $TR$, the desired privacy property is $d_p(H', H'') \leq c$.} 

{We formally define the corresponding optimization problem as follows:}
\begin{problem}[Target Resemblance ({\TR})]\label{TRdef}
	Given two histograms $H, H'' \in \mathcal{H}_{n,N}$, a privacy distance function $d_p()$, {a privacy parameter $c$}, a quality distance function $d_q()$, and maximum quality loss threshold $\epsilon \geq 0$, construct a sanitized histogram $H' \in \mathcal{H}_{n,N}$ that
	\begin{equation*}
	\begin{aligned}
	& \underset{H' \in \mathcal{H}_{n,N}}{\text{minimizes}}
	& & d_p(H', H'') \\
	& \text{subject to}
	& & d_q(H, H') \leq \epsilon.
	\end{aligned}
	\end{equation*}
	{If the resulting $H'$ is such that $d_p(H', H'') > c$, then it is impossible to achieve both the desired privacy property and the desired quality constraint.}
\end{problem}
\vspace{+3mm}

Intuitively, the {\TR} problem requires constructing a sanitized histogram $H'$ of the same length and size with $H$ and $H''$ that offers the best possible privacy by being as similar as possible to the target histogram $H''$ according to $d_p$, while incurring a quality loss at most $\epsilon$ according to $d_q$.  

The function $d_q$ is selected by the location-based application provider (recipient of the sanitized histogram) and is provided to the user together with an intuitive explanation of what different values of $d_q()$ mean for quality. For example, $d_q() \geq 0.8$ means ``very low quality'', $0.6 \leq d_q() \leq 0.8$ means ``low quality'' etc., where ``quality'' refers to the quality of the application response (e.g. recommendation) that the user receives. Then, in the spirit of user-centric (or personalized) privacy \cite{shokri2012protecting,lingliu}, the user uses the above explanation by the provider to choose a value of $\epsilon$ that corresponds to his/her tolerance for quality loss.

The problem requires the user to specify the target histogram $H''$. However, some users may not possess sufficient knowledge to perform this task, even though they want to resemble a person with certain characteristics (e.g., a wealthy person).  
In these cases, $H''$ can be constructed as follows. The user chooses a target probability distribution $h''$ from a repository of probability distributions that are constructed by domain experts and labeled accordingly 
(e.g., a distribution corresponding to a ``wealthy'' profile, a ``tourist'' profile, a  ``healthy person'' profile \cite{hasan2015location,vissers2014crying}), in the same way that experts compile e.g., adblock filters
(lists of URLs to block) or lists of virus signatures for antivirus software. To choose one of these profiles, the user looks for a label that they want to resemble. This setup is very similar to other papers in the literature \cite{damiani2010probe, Abul2018}. 

Note that the distribution $h''$ may be defined on a different set of locations from the user's histogram $H$, in which case both are expanded to cover all locations in either $h''$ or $H$, with zero values for the new locations. Then, each entry $h''[i]$ is multiplied by the size $N$ of the user's histogram $H$ to create the target histogram $H''[i]$ (see Section~\ref{TRalgsection}). So, in effect, $H''$ is the expected histogram by a hypothetical user that picks $N$ locations from the distribution $h''$.
By the above construction, $H''$ and $H$ are of the same length $n$ and size $N$, but note that $H''$ may not have integer counts, because $H''[i] = N \cdot h''[i]$ is not necessarily an integer. Strictly speaking, this violates the requirement of histograms to have integer counts, but that is not a problem for our methods, because the privacy distance functions do not need integer arguments. However, we do require the histogram $H'$ that the algorithms output to have integer counts.

\vspace{-1mm}
\begin{thm}\label{thm:TR_NP_Hard}
	The {\TR} problem is weakly NP-hard.
\end{thm}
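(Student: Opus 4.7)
The plan is to establish Theorem~\ref{thm:TR_NP_Hard} by a polynomial-time many-one reduction from SUBSET-SUM, which is well known to be weakly NP-hard. This follows the spirit of the Appendix proof of Theorem~\ref{thm:SLH_NP_Hard} for \SLH{}: the bin-additivity of $d_q$ is exploited to encode a subset selection as the set of bins whose counts are ``boosted'' above their original values. The ``weakly'' qualifier is then justified by combining NP-hardness with the pseudo-polynomial runtime of the $RO$ algorithm from Section~\ref{TRalgsection}, which is polynomial in $N$ and $n$.

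Given a SUBSET-SUM instance with positive integers $a_1, \ldots, a_k$ and target $T$---WLOG taken distinct from every $a_i$ and every $2a_i$ via a trivial shifting pre-processing---I construct the following \TR{} instance: take $n = k+1$ bins, size $N = S + T$ where $S = \sum_i a_i$, and set $H = (a_1, \ldots, a_k, T)$. Define the quality bin-function $q$ so that $q(a_i, b) = 0$ iff $b \in \{a_i, 2a_i\}$, $q(T, b) = 0$ iff $b \in \{0, T\}$, $q(x, x) = 0$ on every other diagonal pair (to preserve the identity axiom $d_q(H, H) = 0$), and $q = M$ elsewhere for a large constant $M$. Set the quality threshold $\epsilon = 0$. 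Pick $H''$ to be any histogram of size $N$ with $H''[k+1] = 0$, and define the privacy bin-function by $p(a, b) = a$ when $b = 0$ and $p(a, b) = 0$ otherwise.

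The constraint $\epsilon = 0$ forces a feasible $H'$ to satisfy $H'[i] \in \{a_i, 2a_i\}$ for $i \leq k$ and $H'[k+1] \in \{0, T\}$. The size constraint $\sum_i H'[i] = N$ then splits the feasible set into two classes: either $H'[k+1] = T$, in which case the bin-wise increments over $H$ must sum to $0$ and hence $H' = H$; or $H'[k+1] = 0$, in which case $\{i \leq k : H'[i] = 2a_i\}$ is a subset of $[k]$ summing exactly to $T$---a SUBSET-SUM certificate. Under the chosen $p$ and $H''$, $d_p(H', H'') = H'[k+1]$, so the minimum of $d_p$ over feasible $H'$ equals $0$ if SUBSET-SUM is YES and equals $T$ if SUBSET-SUM is NO. Setting the privacy parameter $c$ to any value in $[0, T)$ therefore makes the \TR{} decision question equivalent to SUBSET-SUM, and since all parameters of the \TR{} instance are polynomial in the bit-length of the SUBSET-SUM input, the reduction runs in polynomial time.

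The main obstacle is reconciling the discrete ``two values per bin'' structure needed to encode subset selection with the paper's technical axioms on $d_p$ and $d_q$: nonnegativity, decomposition as a sum over bins, and the identity properties $d_q(H, H) = 0$ and $d_p(H'', H'') = 0$. The auxiliary $(k+1)$-th bin is what resolves this tension: it absorbs the slack imposed by the fixed-size requirement $|H'|_1 = N$, and simultaneously serves as the privacy ``witness'' through $H''[k+1] = 0$, which lets $d_p$ cleanly distinguish the trivial feasible histogram $H' = H$ from genuine subset-sum certificates. Verifying the axioms for the piecewise $q$ and $p$ above is then immediate.
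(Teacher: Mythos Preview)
Your reduction is correct in substance and establishes the claim, but it proceeds quite differently from the paper's own argument. The paper (Appendix, Section~\ref{subsec:TR_NP_hard}) reduces from the Multiple Choice Knapsack problem $MCK_{\geq}$: each class $C_i$ becomes a bin, and the freedom to pick $q$ and $p$ is used so that choosing element $j$ from class $i$ corresponds to setting $H'[i] = H[i] + k_{ij}$ with $q(H,H'[i]) = 1 - w_{ij}/\max w$ and $p(H'[i],H'') = c_{ij}/\max c$; the knapsack constraint then becomes $d_q \leq \epsilon$ with $\epsilon = n - b/\max w$, and the objective becomes $d_p$. Your route via \textsc{Subset-Sum} is more elementary: you hard-wire a binary choice per bin through $\epsilon = 0$ and a two-valued zero set for $q$, and you read off the subset-sum certificate directly from the size constraint. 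What your approach buys is transparency and a shorter construction; what the paper's buys is that it reuses the same source problem ($MCK$) as the $SLH$ hardness proof, giving a uniform treatment.

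Two small points to tighten. First, ``any histogram of size $N$ with $H''[k+1]=0$'' is not quite enough: with your $p(a,b)=a\cdot\mathbf{1}[b=0]$, one gets $d_p(H',H'') = \sum_{i:\,H''[i]=0} H'[i]$, so you need $H''[i]>0$ for all $i\le k$ to conclude $d_p(H',H'')=H'[k+1]$; otherwise Case~2 need not give $d_p=0$. This is trivially repaired by taking, say, $H''[i]=1$ for $i<k$, $H''[k]=N-(k-1)$, $H''[k+1]=0$. Second, the ``WLOG via trivial shifting'' remark is slightly off (additive shifts do not preserve \textsc{Subset-Sum}); what you actually need is only $T\neq a_i$ for all $i$, which is genuinely WLOG since $T=a_i$ makes the instance a trivial YES. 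The extra condition $T\neq 2a_i$ is not required for your $q$ to be well-defined, because those cases of $q$ are keyed on distinct first arguments.
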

\vspace{-1mm}

\begin{proof}
\vspace{-2mm}
See Appendix (Section \ref{subsec:TR_NP_hard}).
\end{proof}
\vspace{-2mm}

Clearly, the {\TR} problem requires constructing a sanitized histogram $H'$ with the same size as $H$ and $H''$. That is, it assumes that the desirable target histogram $H''$ has the same counts as $H$, but these counts are distributed differently from $H$. However, it is also possible to relax this assumption. This leads to a variation of the {\TR} problem, referred to as {\TR}$_{|H''|_1}$, which instead requires the sanitized histogram $H'$ to only have the same size as $H''$, while it can be different from the size of $H$. 
It is straightforward to optimally (resp., heuristically) solve {\TR}$_{|H''|_1}$ based on our $RO$ algorithm (resp., based on our $RH$ heuristic) (see Section~\ref{TRalgsection}).

\vspace{-3mm}
{\subsubsection{Solutions to the $TR$ optimization problem satisfy the desired privacy property}} 

The $TR$ problem tries to minimize $d_p(H', H'')$, while satisfying the quality constraint $d_q(H, H') \leq \epsilon$.
Of course, a particular choice of $\epsilon$ affects privacy. If $\epsilon$ is low, an algorithm for the {\TR} problem may output an $H'$ that is the same or very similar to $H$, because all histograms that satisfy the specified quality constraint are close to $H$. Then, the user has to decide whether this $H'$ is safe to release. 

{Given the privacy parameter $c$, it is not safe to release $H'$ when $d_p(H',H'') > c$. If $d_p(H',H'') > c$, the user will decide not to release any histogram at all. Alternatively, the user may want to re-run the algorithm with a larger $\epsilon$, i.e. to sacrifice more quality in order to achieve the privacy requirement.}

The user's decision may depend on the intuitive meaning of the function used for $d_p$. For example, if $d_p$ is Pearson $\chi^2$ and the target $H''$ models a ``wealthy'' user, then $d_p(H',H'')$ quantifies how much more likely it is that $H'$ has been produced by a user who follows the ``wealthy'' profile compared to any other profile\footnote{We refer the interested reader to statistics textbooks for more details\cite{leboudec2010performance}}. Thus, if this likelihood ratio  {exceeds $c$}, then the user may not want to release that $H'$. 

It is also trivial to exclude solutions with {$d_p(H',H'')>c$} by modifying our methods to disregard such solutions and terminate if no solution exists. In conclusion, the user either submits a histogram that satisfies the privacy property, or nothing at all.


\vspace{+2mm}
\subsection{The Target Avoidance problem} \label{subsec:TA_definition}
\vspace{-2mm}

As mentioned above, Target Avoidance ({\TA}) is a variant of the Target Resemblance ({\TR}) problem, which we briefly discuss below. 
\begin{problem}[Target Avoidance ({\TA})]\label{TAdef}
	Given two histograms $H, H'' \in \mathcal{H}_{n,N}$, a privacy distance function $d_p()$, {a privacy parameter $c$}, a quality distance function $d_q()$, and maximum quality loss threshold $\epsilon \geq 0$, construct a sanitized histogram $H' \in \mathcal{H}_{n,N}$ that
	\begin{equation*}
	\begin{aligned}
	& \underset{H' \in \mathcal{H}_{n,N}}{\text{maximizes}}
	& & d_p(H', H'') \\
	& \text{subject to}
	& & d_q(H, H') \leq \epsilon.
	\end{aligned}
	\end{equation*}
	{If the resulting $H'$ is such that $d_p(H', H'') < c$, then it is impossible to achieve both the desired privacy property and the desired quality constraint.}
\end{problem}
\vspace{+2mm}

Intuitively, the {\TA} problem requires constructing a sanitized histogram $H'$ of the same length and size with $H$ and $H''$. The sanitized histogram must offer the best possible privacy by being as dissimilar as possible to the target histogram $H''$ according to $d_p$, while incurring a quality loss at most $\epsilon$ according to $d_q$. The threshold $\epsilon$ and target histogram $H''$ are specified by the user based on their preferences. For example, the user can set $H''$ to $H$, in order to avoid $H$ itself, or to a part of $H$ that contains the locations that characterize an undesirable profile (e.g., frequent visits to airports) or are frequented by a certain ethnic minority (which may help infer that an individual belongs to the minority). The user could also choose $H''$ with the help of domain experts, as in the $TR$ problem.

{In terms of an adversary model, the adversary has the same knowledge as in {\TR} and they  succeed if $d_p(H', H'') < c$. If the algorithm does not produce an $H''$ such that $d_p(H', H'') \geq c$, then the user can either not submit any histogram at all, or the user may want to re-run the algorithm with a larger $\epsilon$. The proof that the {\TA} problem leads to a solution satisfying the desired privacy property is similar to that for {\TR} (omitted).}

\begin{thm}\label{thm:TA_NP_Hard}
	The {\TA} problem is weakly NP-hard.
\end{thm}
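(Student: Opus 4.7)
The plan is to prove weak NP-hardness by a polynomial-time reduction from a known weakly NP-hard problem, most naturally Partition (or equivalently Subset Sum). This mirrors the strategy already used in Theorem~\ref{thm:TR_NP_Hard} for \TR{}: the key structural observation is that \TA{} and \TR{} share the same feasible region (all histograms $H' \in \mathcal{H}_{n,N}$ satisfying $d_q(H,H') \leq \epsilon$) and differ only in whether the objective $d_p(H', H'')$ is maximized or minimized. Consequently, the gadget used for \TR{} should adapt with comparatively minor modifications, the essential change being the choice of target histogram $H''$ so that the optimum of the flipped objective still encodes a Partition solution.

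Concretely, given a Partition instance with positive integers $a_1, \ldots, a_k$ summing to $2B$, I would build a histogram $H$ of length $n = k$ (possibly with an auxiliary bin) whose counts encode the $a_i$, together with a target histogram $H''$ supported on the same bins. The quality distance $d_q$ and the quality threshold $\epsilon$ are chosen as simple bin-decomposable functions so that any feasible $H'$ corresponds to a ``redistribution'' of counts that must partition $\{a_i\}$ into two groups respecting the size constraint $|H'|_1 = N$. The privacy distance $d_p$ is then engineered (e.g.\ through a bin-wise quadratic such as $p(x,y) = x \cdot (C - x)$ or a product-like term after substitution) so that maximizing $\sum_i p(H'[i], H''[i])$ over feasible $H'$ is equivalent to finding a subset whose sum equals $B$. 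The pseudo-polynomial blow-up caused by encoding the integers $a_i$ in unary within the histogram counts is precisely what yields \emph{weak}, rather than strong, NP-hardness.

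The main obstacle is arranging a strict numerical gap between yes- and no-instances of Partition under the \emph{maximization} direction. For \TR{}, a valid partition drives $d_p$ down to an exact value (often zero), which is easy to separate from the ``next-best'' configuration. For \TA{}, however, the supremum of a bin-decomposable distance is attained at boundary or balanced configurations, so the reduction must exploit an algebraic identity such as $x(2B-x)$ being uniquely maximized at $x = B$ in order to force the TA optimum to coincide with a balanced split. Once this identity is pinned down, showing that the TA optimum equals a specific threshold $c^\ast$ if and only if the Partition instance is feasible, and is strictly less otherwise, is routine.

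Finally, verifying that the reduction runs in time polynomial in $k + \log(\max_i a_i)$ and that $d_p, d_q$ satisfy the nonnegativity and sum-decomposability requirements imposed in Section~\ref{sec:problem:ql} completes the argument. Since Partition is weakly NP-hard, weak NP-hardness of \TA{} follows.
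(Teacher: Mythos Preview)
Your plan takes a genuinely different and much harder route than the paper. The paper does not build a fresh reduction from Partition (or from MCK) for \TA{}; instead it reduces \TR{} to \TA{}. Since Theorem~\ref{thm:TR_NP_Hard} already establishes that \TR{} is weakly NP-hard, the paper simply keeps $H$, $H''$, $d_q$, and $\epsilon$ unchanged and replaces the privacy distance by a strictly decreasing transformation, namely $d_p^{\mathrm{new}} = \tfrac{2}{d_p+1}$, so that minimizing the original $d_p$ is equivalent to maximizing the new one. That is essentially the entire argument. You in fact state the crucial structural observation yourself (``\TA{} and \TR{} share the same feasible region and differ only in whether the objective is maximized or minimized''), but then you do not exploit it; the paper does. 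Note also that the \TR{} proof in the appendix reduces from Multiple Choice Knapsack, not Partition, so your premise that you can ``mirror'' a Partition-based \TR{} argument misreads the paper.

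Your direct Partition approach is not wrong in spirit, but as written it has a genuine gap. The identity ``$x(2B-x)$ is uniquely maximized at $x=B$'' concerns a single scalar, whereas the \TA{} objective is a bin-decomposable sum $\sum_i p(H'[i],H''[i])$. A separable concave term like $p(x,y)=x(C-x)$ is maximized coordinatewise, not at a subset-sum balance point, so it does not by itself encode the constraint $\sum_{i\in S} a_i = B$. To make such a reduction go through you would need extra machinery forcing a binary choice per item (typically via the quality constraint $d_q$) and then an aggregation bin whose contribution peaks exactly when the chosen items sum to $B$; none of that is specified. Given that the feasible-region observation already hands you a two-line reduction from \TR{}, building this gadget is unnecessary.
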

\begin{proof}
	See Appendix (Section \ref{subsec:TA_NP_hard})
\end{proof}

The {\TA} problem is very similar to the {\TR} problem. This is established through a reduction from {\TA} to {\TR} that is given in Appendix (Section~\ref{subsec:TA_TR_reduction}).
There is also a variation of {\TA}, referred to as {\TA}$_{|H''|_1}$, which requires the sanitized histogram $H'$ to have the same size as $H''$, but not necessarily as $H$. Again, our methods can easily deal with this variation.

\vspace{-3mm}

\section{Algorithms}\label{sec:solution}

Since the {\SLH}, {\TR} and {\TA} problems are weakly NP-hard, it is possible to design
pseudopolynomial algorithms\footnote{Pseudopolynomial algorithms run in polynomial time in the numerical value of the input \cite{papadimitrioubook}.}
to optimally solve them. We present optimal algorithms based on shortest/longest path problems for the {\SLH}, {\TA}, and {\TR} problems. In addition, we present heuristic algorithms for the {\TR} and {\TA} problem.
The heuristics find solutions of comparable quality to those of the optimal algorithms but are more efficient by two orders of magnitude.
Furthermore, we explain how our methods can deal with the variation $SLH_{r}$ and $TA/TR_{|H''|_1}$ of the $SLH$ and $TA/TR$ problem, respectively.

\subsection{$LHO$: An optimal algorithm for {\SLH}}\label{LHOalgorithmsectin}

This section presents Location Hiding Optimal ($LHO$), which optimally solves the {\SLH} problem.  Before presenting $LHO$, as motivation, we consider a simple algorithm which distributes the counts of the  sensitive location(s) to the non-sensitive bin(s) proportionally to the counts of the non-sensitive bins. Thus, it aims to construct an $H'$ by initializing it to $H$ and then increasing the count of each non-sensitive bin $H'[i]$ by $x[i]=H[i]\cdot \frac{\sum_{i\in L'}H[i]}{\sum_{i \in L\setminus L'}H[i]}=H[i]\cdot \frac{K}{N-K}$, while assigning $0$ to each sensitive bin. While intuitive, this algorithm fails to construct an $H'$, for a given histogram $H$ and distance function $d_q$, when $x[i]$ is not an integer, and also it may lead to solutions with large $d_q(H,H')$ (i.e., low data utility), as it does not take into account the input distance function $d_q$.

We now discuss the $LHO$ algorithm. Without loss of generality, we assume that the nonsensitive locations correspond to the first $n-|L'|$ bins of the original histogram
$H=(f(L_1),\ldots,f(L_n))$, while the remaining $|L'|$ bins correspond to the sensitive locations.
The total count of sensitive locations in $H$ is $K=\sum_{L_i\in L'}f(L_i)$. $LHO$ must move (redistribute) these counts into the nonsensitive bins, while minimizing the quality error $d_q()$.

The $LHO$ algorithm is founded on the following observation: The optimal way of redistributing counts to each nonsensitive bin of $H'$ corresponds to a shortest path between two specific nodes of a \emph{search space graph} $G_{LHO}(V,E)$, where $V$ and $E$ is the set of nodes and set of edges of $G_{LHO}$, respectively.
In the following, we discuss the construction of $G_{LHO}$ and the correspondence between this shortest path and the solution to the {\SLH} problem.
Then, we discuss the $LHO$ algorithm.

\begin{figure}[ht!]\centering
	\includegraphics[scale=0.35]{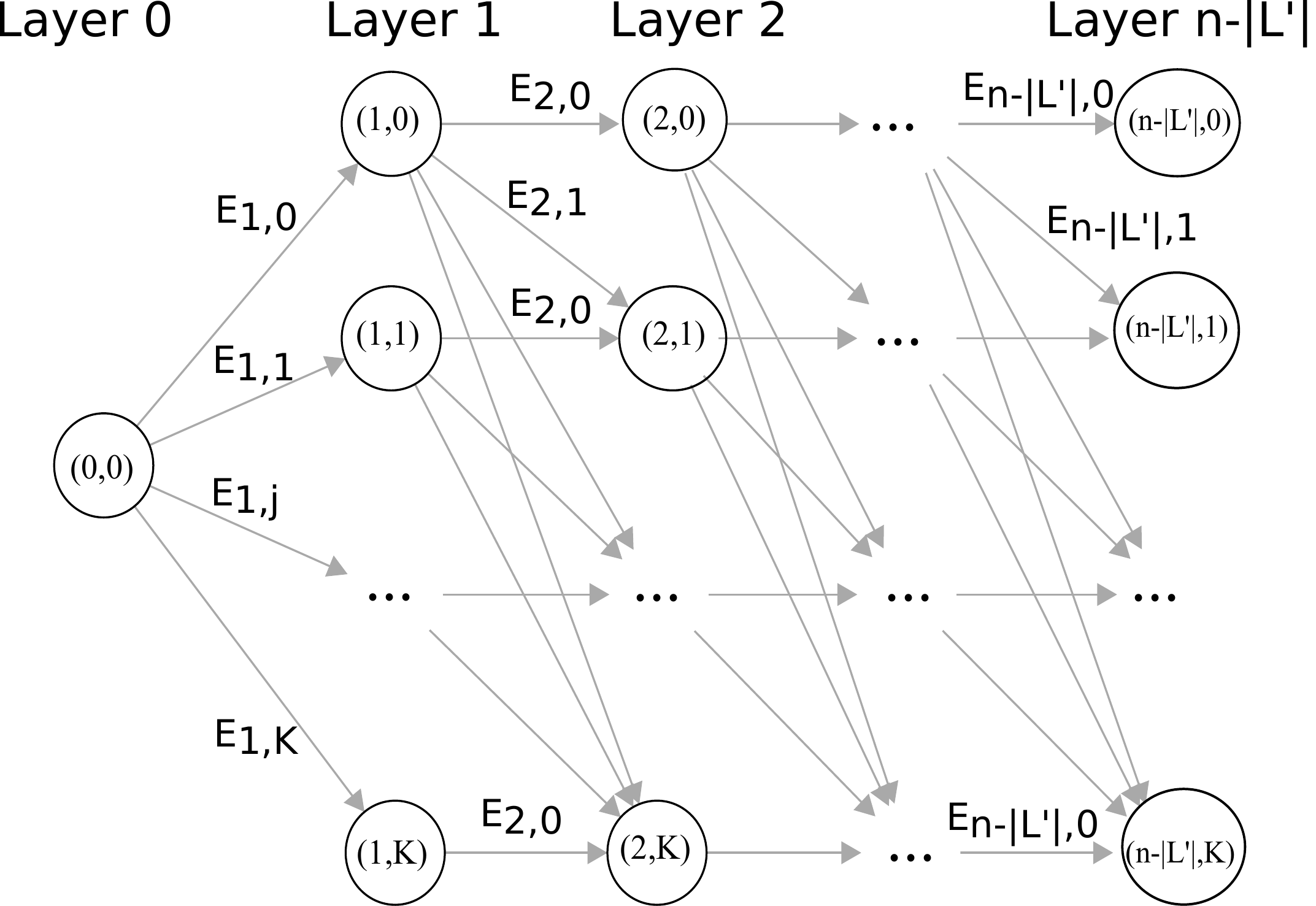}
	\caption{Search space graph $G_{LHO}$ for the Sensitive Location Hiding problem. Layer $0$ contains the node $(0,0)$ and each of the layers $1$ to $n-|L'|$ contains $K+1$ nodes.
		Each edge connects nodes of consecutive layers and has a weight equal to the
		error $E_{i+1,k}$, where $i+1$ is the layer of the end node of the edge and $k$ is the count of sensitive locations.
		$E_{i+1, k}$ represents the impact of redistributing (i.e., adding) $k$ counts into the $i+1$ bin of the sanitized
		histogram $H'$, which is initialized to the original histogram $H$. That is, $E_{i+1,k}=q(H[i+1], H'[i+1]+k)$.
		The missing nodes and edges are denoted with ``$\ldots$''}\label{LHOsearchgraph}
\end{figure}

$G_{LHO}$ is a multipartite directed acyclic graph (DAG) (see \figref{LHOsearchgraph}) such that:
\begin{itemize}
	\item 
	It contains  $n-|L'|+1$ layers of nodes.
	Layer $0$ comprises a single node, and layers $1, ..., n-|L'|$ comprise $K+1$ nodes each. Each layer $1, ..., n-|L'|$ corresponds to a nonsensitive bin.
	\item The single node in layer $0$ is labeled $(0,0)$, and each node in layer $i\in [1,n-|L'|]$ is labeled $(i,j)$, where $j$ denotes the redistribution (i.e., addition) of $j$ counts to bins $1$ up to and including $i$ of the sanitized histogram. We may refer to nodes of $G_{LHO}$ using their labels.
	\item There is an edge $((i,j), (i+1, j+k))$ from node $(i,j)$ to node $(i+1, j+k)$, for each $i\in [0, n-|L'|-1]$, where $k\geq 0$, $j+k\leq K$.
	That is, each node labeled $(i,j)$ is connected to every node in the following layer $i+1$ that corresponds to a count of at least $j$.
	\item Each edge $((i,j), (i+1, j+k))$
	is associated with a weight equal to the error
	$E_{i+1, k} =q(H[i+1], H[i+1] + k)$. The error $E_{i+1, k}$ quantifies the impact on quality that is incurred by redistributing (i.e., adding) $k$ counts into bin $i+1$.
\end{itemize}

Let $P$ be a path comprised of nodes
$(0,0)$, $(1,k_1)$, $\ldots$, $(n-|L'|,k_{n-|L'|})$ of $G_{LHO}$. The properties below easily follow from the construction of $G_{LHO}$:
\begin{itemize}
	\item[(I)] The path $P$ corresponds to an addition of $k_{i}-k_{i-1}$ counts to the $i$-th bin of the histogram, for each $i\in[1, n-|L'|]$, where $k_0=0$.
	\item[(II)] The \emph{length} of $P$ is equal to the total weight $E_{1, k_1}+\ldots, E_{n-|L'|,k_{n-|L'|}}$ of the edges in $P$. This total weight is the total quality loss incurred by the allocation corresponding to $P$.
\end{itemize}
Thus, the path $P$ corresponds to a sanitized histogram $H'$ whose first $n-|L'|$ bins have counts $H[i] + (k_i - k_{i-1}), i = 1, \ldots, n-|L'|$,
and $d_q(H, H')$ is equal to $\sum_{i=1}^{n-|L'|}q(H[i], H[i] + k_i - k_{i-1})$.
For example, the path comprised of nodes $(0,0), (1,K), \ldots, (n-|L'|,K)$ in \figref{LHOsearchgraph} corresponds to a sanitized histogram $H'$ in which all $K$ sensitive counts have been moved to the first bin. The quality loss in this case is just $d_q(H, H') = q(H[1], H[1]+K)$, as all other bins have the same counts in both $H$ and $H'$.

Conversely, each possible allocation of the $K$ sensitive counts into nonsensitive bins corresponds to a path between the nodes $(0,0)$ and $(n-|L'|, K)$ of $G_{LHO}$, which represents a feasible solution to the {\SLH} problem.
Therefore, the shortest path between the nodes $(0,0)$ and $(n-|L'|, K))$ of $G_{LHO}$ (i.e., the path with the minimum length $E_{1, k_1}+\ldots, E_{n-|L'|,K}$; ties are broken arbitrarily)
represents a sanitized histogram $H'=(H[1]+(k_1-k_0), \ldots, H[n-|L'|]+(K-k_{n-|L'|-1}), 0, ..., 0)$, which is the optimal solution to {\SLH}.
This is because $H'$ has minimum $d_q(H,H')$, the same size with $H$, and a zero count for each sensitive location.

We now present the pseudocode of the $LHO$ algorithm. In step \ref{LHO1}, the algorithm constructs the search space graph $G_{LHO}$. In step \ref{LHO2},
the algorithm finds a shortest path between the nodes $(0,0)$ and $(n-|L'|,K)$ of $G_{LHO}$. In step \ref{LHO3}, the sanitized histogram $H'$
corresponding to the shortest path (i.e., the optimal solution to the {\SLH} problem) is created and, last, in step \ref{LHO4}, $H'$ is  returned.
\hspace{-1mm}
\begin{algorithm}\small
	\Algorithm{LHO (Location Hiding Optimal)}
	\KwIn{Histogram $H$, set of sensitive locations $L'$, quality distance function $d_q$}
	\KwOut{Sanitized histogram $H'$}
	Construct the search space graph $G_{LHO}$ \label{LHO1}\\
	$((0,0),(1,k_1),\ldots,(n-|L'|,K))\gets$ the shortest path from node $(0,0)$ to node \myindent{4.3} $(n-|L'|,K)$ in $G_{LHO}$\label{LHO2}\\
	$H'\gets (H[1]+k_1, H[2]+(k_2-k_1), \ldots, H[n-|L'|]+(K-k_{n-|L'|-1}),0, \ldots, 0)$\label{LHO3}\\
	\Return $H'$ \label{LHO4}\\
\end{algorithm}

\begin{figure}[ht!]
	\includegraphics[scale=0.34]{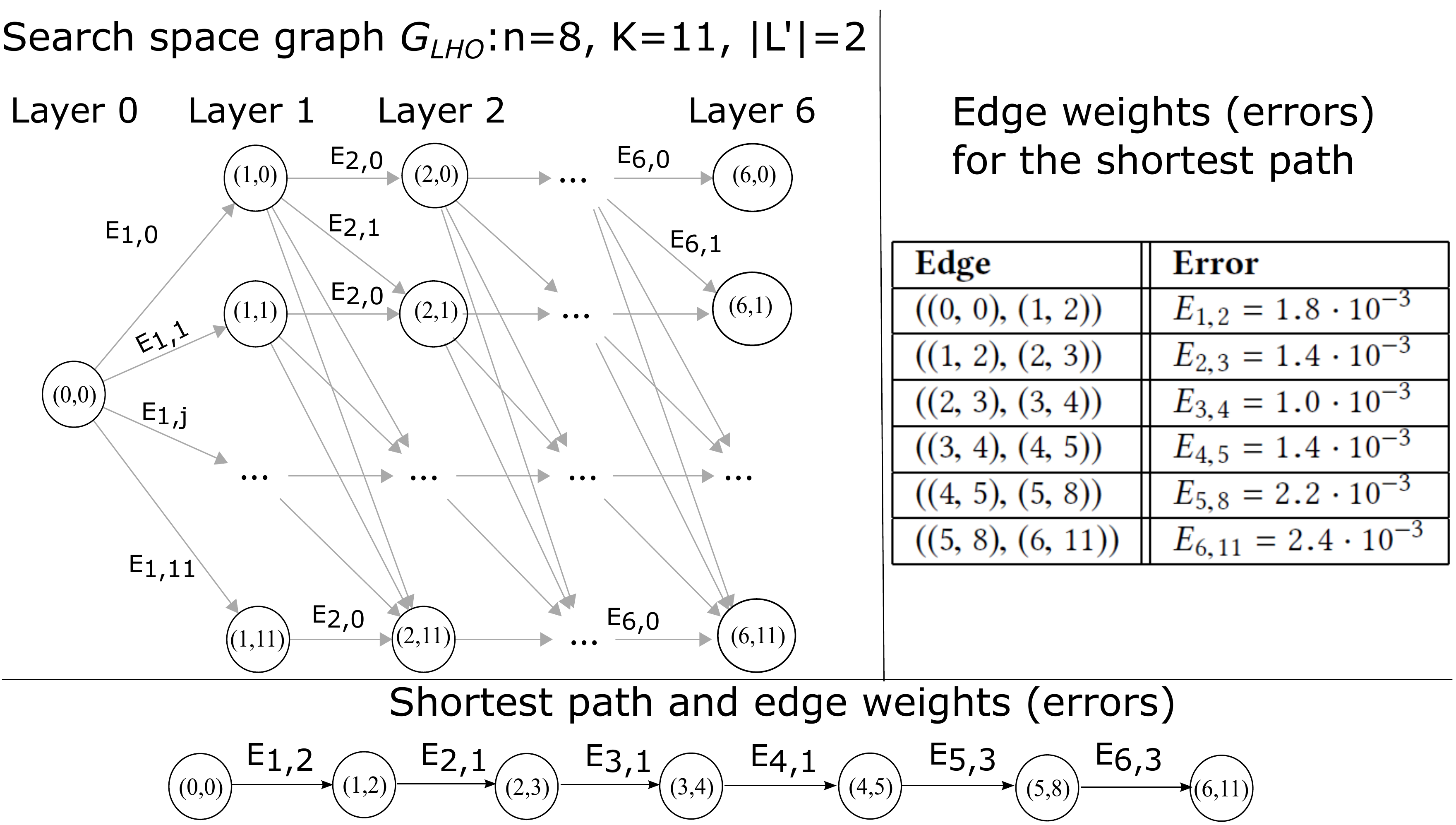}
	\caption{Search space graph $G_{LHO}$ for Example \ref{RunExampleLHO} (the missing nodes and edges are denoted with ``$\ldots$''), and shortest path along with its corresponding  weights}\label{RunExampleLHOfig}
\end{figure}
\begin{example}
	$LHO$ is applied to the histogram $H=(7,2,3,2,13,12,8,3)$ in Figure \ref{ex1a}. The set of sensitive locations $L'$ contains the locations $g$ and $h$ with counts $8$ and $3$, respectively, and the quality distance function $d_q$ is $JS$-divergence. In step \ref{LHO1}, the algorithm constructs the search space graph in Figure \ref{RunExampleLHOfig}. The graph has $n-|L'|+1=7$ layers of nodes, where $n=8$ is the length of $H$ and $|L'|=2$ is the number of sensitive locations. Layer $0$ contains the node $(0,0)$ and each other layer contains $K+1=12$ nodes, where $K=11$ is the total count of sensitive locations in $H$. Each node in layers $1, \ldots, 6$ is labeled $(i,j)$; $i$ denotes the layer of the node and  $j$ denotes the counts of sensitive locations that are redistributed into bins $1, \ldots, i$. For example, the node $(6,11)$ denotes that all $11$ counts of the sensitive locations are added into bins $1,\ldots, 6$.
	In addition, there is an edge with weight $E_{i+1,k}$ between each node $(i,j)$ and every node $(i+1, j+k)$, for each $k\in [0, K-j]$. The weight $E_{i+1,k}$ quantifies the increase to $JS$-divergence incurred by redistributing (i.e., adding) $k$ counts of sensitive locations into bin $i+1$.
	For example, the node $(0,0)$ is connected to the nodes $(1,0), \ldots, (1,11)$, and the edge $((0,0), (1,2))$ has weight $E_{1,2}\approx 1.8\cdot 10^{-3}$, because adding $2$ counts into the first bin increases $JS$-divergence by approximately $1.8\cdot 10^{-3}$. In step \ref{LHO2}, $LHO$ finds the shortest path from the node $(0,0)$ to the node $(6,11)$, shown in Figure \ref{RunExampleLHOfig},
	and in step \ref{LHO3} it constructs the sanitized histogram $H'=(9,3,4,3,16,15,0,0)$ (see Figure \ref{ex1b}) that corresponds to the shortest path. Note that $j$ in the label $(i,j)$ of each node in the shortest path corresponds to the counts of sensitive locations that are added into bins $1, \ldots, i$ in $H'$. Last, in step \ref{LHO4}, $H'$ is returned. \label{RunExampleLHO} \qedsymbol
\end{example}

The time complexity of the $LHO$ algorithm is $O\left((n-|L'|)\cdot K^2+S\right)$, where 
$(n-|L'|)\cdot K^2$ is the cost of constructing $G_{LHO}$ (step \ref{LHO1}) and
$S$ is the cost of finding the shortest path (step \ref{LHO2}).
Constructing $G_{LHO}$ takes $O\left((n-|L'|) \cdot K^2\right)$ time, because $G_{LHO}$ contains $O\left((n-|L'|)\cdot K\right)$ nodes and $O\left(K+1+(n-|L'|-1)\cdot \binom{K}{2}\right)$\newline$=
O\left((n-|L|)\cdot K^2\right)$ edges, and the computation of
each edge weight $E_{i+1,k}$ takes $O(1)$ time, because it is computed by accessing a single pair of bins from $H$ and $H'$.
The cost $S$ is determined by the shortest path algorithm. For example, it is
$O\left((n-|L|)\cdot K^2\cdot log((n-|L'|)\cdot K^2)\right)$ for Dijkstra's algorithm with binary heap \cite{sedgewickbook}.

Last, we note how the variation $SLH_r$ of the $SLH$ problem (see Section \ref{subsec:SLH_definition}) can be optimally solved with $LHO$.
This is possible by simply setting the parameter $K$ in $LHO$ to $r$ (i.e., constructing a search space graph whose layers $1, \ldots, n-|L'|$ comprise $r+1$ nodes each, and then finding a shortest path from $(0,0)$ to $(n-|L'|,r)$ and the histogram $H'$ corresponding to the path).

\subsection{Optimal algorithms for Target Resemblance and Target Avoidance}\label{TRalgsection}
Although the two problems of Target Resemblance ({\TR}) and Target Avoidance ({\TA}) are different in their privacy motivation, they are mathematically very similar (see Sections~\ref{subsec:TR_definition}~and~\ref{subsec:TA_definition}). In this section, we model and solve the {\TR}
problem as a constrained shortest path problem on a specially constructed search space graph $G_{TR}$. It follows immediately that the {\TA}
problem can be seen as a \emph{longest} path problem on the same graph. Because the graph is a directed acyclic graph (DAG), computing longest and shortest paths has the same complexity \cite{sedgewickbook}: By visiting the graph nodes in Breadth-First Search order, we can simply keep track of the shortest (or longest) path to each node. We can even solve the two problems in one pass. To keep the presentation simple, we focus on the {\TR} problem, which we solve optimally by the Resemblance Optimal ($RO$) algorithm.

In the following, we discuss the construction of $G_{TR}$ and then provide the pseudocode of $RO$.

\begin{figure}[!ht]
	\centering
	\includegraphics[scale=0.37]{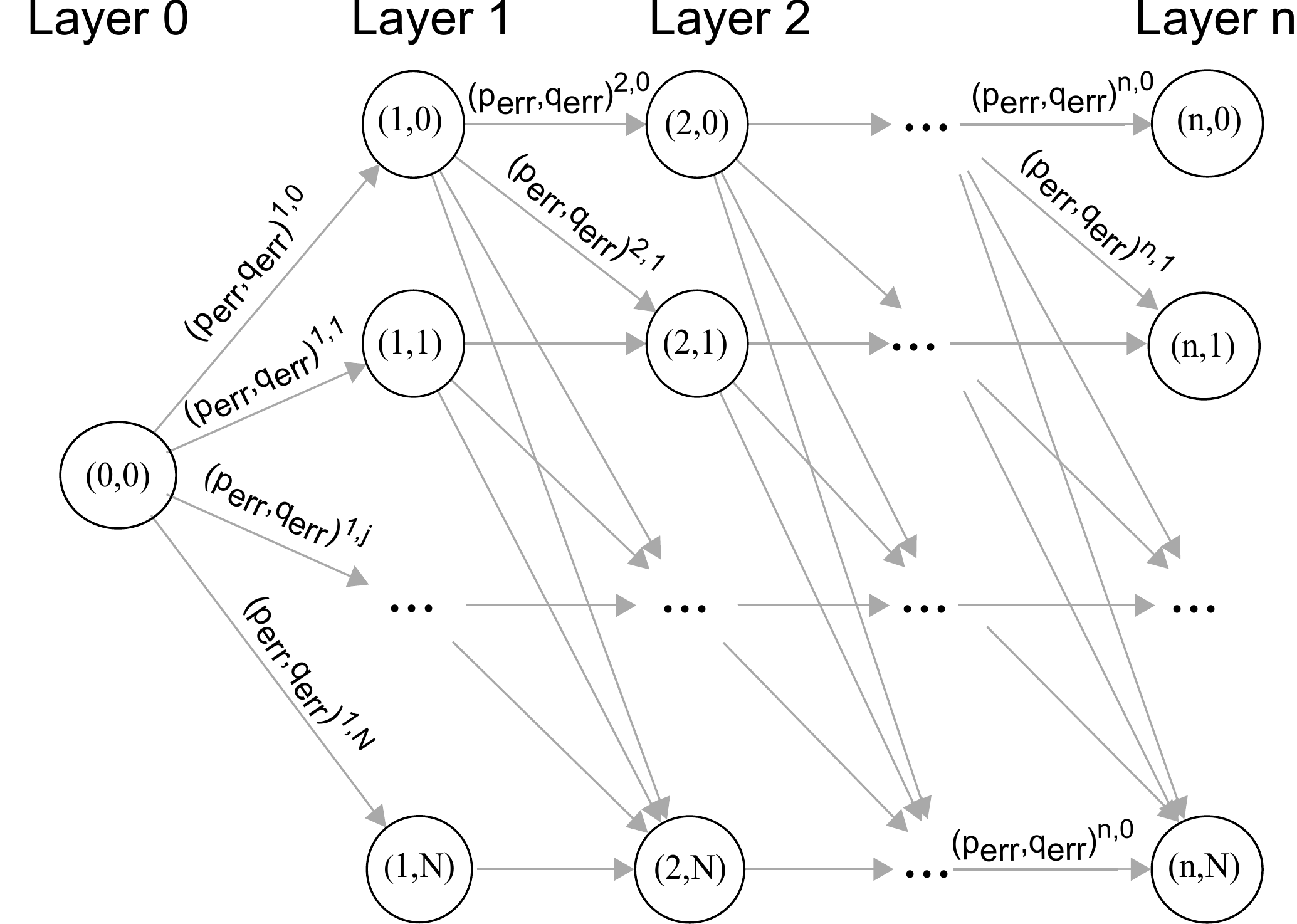}\\
	\caption{Search space graph $G_{TR}$ for the Target Resemblance problem. Layer $0$ is an auxiliary layer that just contains the node $(0,0)$. Layer $i = 1, \ldots, n$ corresponds to bin $i$ of the sanitized histogram, and node $(i, j)$ corresponds to allocating $j = 1, \ldots, N$ counts to bins 1 up to and including $i$. A path from $(0,0)$ to $(n, N)$ completely defines an allocation of $N$ counts to $n$ bins. The weight of the edge from $(i,j)$ to $(i+1, j+k)$ is the privacy and quality error of allocating exactly $k$ counts to bin $i+1$. As these errors are additive, the admissible paths are those whose total $q$-length is less than the threshold $\epsilon$. Among them, the $p$-shortest path from $(0,0)$ to $(n, N)$ corresponds to the optimal solution to {\TR}, because it also has $q$-length at most $\epsilon$} \label{fig:AO_graph_R}
\end{figure}

From the histogram $H$ and the distance functions $d_p$ and $d_q$, we construct a multipartite DAG $G_{TR}=(V,E)$, as follows (see also \figref{fig:AO_graph_R}):
\begin{itemize}
	\item There are $n\cdot (N + 1) + 1$ nodes in $V$, where $n$ and $N$ are the length and the size of the histogram $H$, respectively.
	\item The nodes are arranged in layers $0, 1, ..., n$, with layer $0$ having a single node and layers $1,...,n$ having $N+1$ nodes each. Layer $i \in [1, n]$ corresponds to bin $i$ (location $L_i$) in the histogram. Node $j \in [0, N]$ in layer $i$ corresponds to the allocation of a total of $j$ frequency counts to histogram bins $1$ up to and including $i$.
	\item The single node in layer $0$ is labeled $(0,0)$, and each node $j$ in every other layer $i$ is labeled $(i, j)$. We may refer to nodes of $G_{TR}$ using their labels.
	\item The edges in $E$ go from each node $(i,j)$ to each node $(i+1, j+k), k\geq 0, j+k \leq N$, i.e.  to each node in the following layer that has a frequency count at least equal to $j$.
	\item The weight of an edge from $(i,j)$ to $(i+1, j+k)$ is the pair $ (p_{err}, q_{err})^{i+1,k}$ of the privacy and quality errors of allocating exactly $k$ counts to bin $i+1$ of the sanitized histogram: $p_{err} = p(k,H''[i+1]), q_{err} = q(H[i+1], k)$. The $p$-length of a path is the sum of its $p_{err}$ weights. We will refer to the path with the minimum $p$-length as the $p$-shortest path. Similarly, the $q$-length of a path is the sum of its $q_{err}$ weights.
\end{itemize}

At this point, note two important differences between the edge weights of $G_{TR}$ and $G_{LHO}$ (Section \ref{LHOalgorithmsectin}): First, and most obvious, the edge weights in $G_{TR}$ are pairs of (privacy error, quality error), whereas in {\SLH} the weights are quality errors. Second, in $G_{TR}$ the weight of edge from $(i,j)$ to $(i+1, j+k)$ corresponds to setting $H'[i+1]$ exactly equal to $k$, whereas in $G_{LHO}$ that edge weight would correspond to setting $H'[i+1]$ equal to $H[i+1] + k$.

From the construction of $G_{TR}$, it follows that there is a $1-1$ correspondence between a sanitized histogram $H' \in \mathcal{H}_{n,N}$ and a path from $(0,0)$ to $(n,N)$ in $G_{TR}$. 
Therefore, to solve the {\TR} problem, we need to find the path from $(0, 0)$ to $(n, N)$ with minimum $p$-length among the paths whose $q$-length is at most $\epsilon$. Then, it is straightforward to construct the histogram from the path.

We now provide the pseudocode of the $RO$ algorithm. We assume that the pre-processing needed to construct $H''$ from $h''$ is done before the actual algorithm runs, and also $H$ and $H''$ have been expanded to be defined on the same set of locations, if needed (see Section~\ref{subsec:TR_definition} for details on $h''$). 
Also, for the moment, we assume that $d_q$ takes nonnegative integer values.
\scalebox{1}{}{
	\begin{algorithm}\footnotesize
		\Algorithm{RO (Resemblance Optimal)}
		\KwIn{Histogram $H$, target histogram $H''$, privacy distance function $d_p$, quality distance function $d_q$, maximum quality loss threshold $\epsilon$}
		\KwOut{Sanitized histogram $H'$}
		Construct the graph $G_{TR}$ \label{RO1}\\
		\ForEach{node $v$ in $G_{TR}$}{\label{RO2}
			\If{the label of $v$ is $(0,0)$}{\label{RO3}
				Associate the node $v$ with a vector $\mathcal{V}_v$ s.t. $\mathcal{V}_v[k]=0$, for each integer $k\in[0,\epsilon]$ \label{RO4}\\
			}
			\Else{\label{RO5}
				Associate the node $v$ with a vector $\mathcal{V}_v$ s. t. $\mathcal{V}_v[k]=\infty$, for each integer $k\in[0,\epsilon]$ \label{RO6}\\
			}
		}
		\ForEach{node $v$ in $G_{TR}$ in increasing lexicographic order starting from node $(1,0)$}{\label{RO7}
			\ForEach{element $k$ of $\mathcal{V}_v$}{\label{RO8}
				$\mathcal{V}_v[k]=\min_{(u,v) \in E, q_{err}(u,v) \leq k}\{ \mathcal{V}_u[k - q_{err}(u,v)] + p_{err}(u,v) \}$\label{RO9}\\
			}
		}
		$((0,0),(1,j_1),\ldots,(n,N))\gets$ the shortest path from node $(0,0)$ to node $(n,N)$ in \myindent{3.3} $G_{TR}$. Its $p$-length is equal to the minimum element of $\mathcal{V}_{v}$ \myindent{3.3}  for node $v = (n,N)$.\label{RO10}\\
		$H'\gets (j_1, j_2-j_1, \ldots, N-j_{N-1})$\label{RO11}\\
		\Return $H'$ \label{RO12}\\
	\end{algorithm}
}

In step \ref{RO1}, $RO$ constructs the graph $G_{TR}$. In steps \ref{RO2} to \ref{RO6}, the algorithm iterates over each node $v$ of the graph and associates with it a vector $\mathcal{V}_v$, indexed by all possible values of $d_q$. The elements of $\mathcal{V}_v$ are initialized to $0$ for node $(0,0)$, and to $\infty$ for any other node of $G_{TR}$. Next, in steps \ref{RO7} to \ref{RO9}, $RO$ iterates over the nodes of $G_{TR}$ in increasing lexicographic order, starting from node $(1,0)$, and for each node $v$ it updates all the elements of $\mathcal{V}_v$. Each element $\mathcal{V}_v[k]$ is updated using the following dynamic programming equation:
\begin{equation}\label{eq:vk}
\mathcal{V}_v[k] = \min_{(u,v) \in E, q_{err}(u,v) \leq k}\{ \mathcal{V}_u[k - q_{err}(u,v)] + p_{err}(u,v) \}.
\end{equation}
The element $\mathcal{V}_v[k]$ is equal to the $p$-length of the $p$-shortest path from $(0,0)$ to node $v$ with $q$-length exactly equal to $k$. Thus, as explained above, this path is a feasible solution to the {\TR} problem, and so the optimal solution to {\TR} is the $p$-shortest path from $(0,0)$ to $(n,N)$ (i.e., the path corresponding to the minimum element of the vector $\mathcal{V}_{(n, N)}$). The nodes of this path are found in step \ref{RO10} and its corresponding histogram $H'$ is constructed in step \ref{RO11}.

We now consider the general case in which the values of $q$-length of a path from $(0, 0)$ to $(n, N)$ are not necessarily integer.
We first show that the $q$-length of this path is polynomial in $N$, in Theorem \ref{thm:general_q_case} below. Then, we show that the number of values of $q$-length for all paths is polynomial in $N$, which implies that these values are not too many to store in the vectors $\mathcal{V}_u$.

\begin{thm}
	The $q$-length of a path from $(0, 0)$ to $(n, N)$ can only take a polynomial (in $N$) number of values.\label{thm:general_q_case}
\end{thm}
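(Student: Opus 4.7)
The plan is to bound the number of distinct $q$-length values by a straightforward counting of paths in $G_{TR}$. The first step is to observe that each path from $(0,0)$ to $(n,N)$ is in bijection with a weak composition $(k_1, \ldots, k_n)$ of $N$ into $n$ nonnegative integer parts: the path corresponding to $(k_1, \ldots, k_n)$ visits the node $(i, k_1 + \cdots + k_i)$ in layer $i$ for each $i \in [1,n]$. This is immediate from the layered DAG structure of $G_{TR}$ described in Section~\ref{TRalgsection}. The number of such compositions is exactly $\binom{N+n-1}{n-1}$.

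Next I would invoke the definition of the edge weights. The edge from $(i,j)$ to $(i+1, j+k)$ carries quality error $q_{err} = q(H[i+1], k)$, which depends only on the layer index $i+1$ and on the increment $k$, and not on the source coordinate $j$. Summing along the path corresponding to $(k_1, \ldots, k_n)$ therefore yields a $q$-length equal to $\sum_{i=1}^{n} q(H[i], k_i)$. Hence the $q$-length is entirely determined by the allocation $(k_1,\ldots,k_n)$, so the set of $q$-length values attained by paths from $(0,0)$ to $(n,N)$ has cardinality at most $\binom{N+n-1}{n-1} = O(N^{n-1})$. For fixed histogram length $n$, this is polynomial in $N$, which yields the theorem.

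I do not anticipate a substantive obstacle; the argument is essentially a one-line counting upper bound that hinges only on the two observations above (the bijection with weak compositions, and the fact that edge weights depend only on the layer and the increment). The one care-point is to note that distinct allocations may collapse to the same $q$-length for specific choices of $q$ or $H$, but this can only decrease the number of distinct values, so the bound $\binom{N+n-1}{n-1}$ remains valid. If a sharper bound were required, one could try to exploit concrete properties of the chosen $q$ (e.g., the structure of $JS$-divergence on integer inputs), but no such refinement is needed for the statement as given.
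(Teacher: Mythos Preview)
Your proposal is correct and follows essentially the same approach as the paper: both arguments bound the number of distinct $q$-lengths by the number of weak compositions of $N$ into $n$ parts, namely $\binom{N+n-1}{n-1}$, and observe that this quantity is polynomial in $N$ for fixed $n$. Your write-up is in fact a bit more careful than the paper's, since you explicitly note the bijection between paths and compositions and the fact that distinct allocations may collapse to the same $q$-length without harming the bound.
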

\begin{proof}
	The number of possible allocations of $N$ elements to $n$ bins, where some of the bins may be left empty, is equal to the number of $n$-tuples of non-negative integers $f_1,...,f_N$ that sum to $N$. Such tuples are called \emph{weak compositions} of $N$ into $n$ parts (\emph{weak}, because zeros are allowed), and their total number is
	\begin{equation}
	\binom{N + n - 1}{n - 1} = \frac{(N+1)...(N+n-1)}{(n-1)!},\label{qlenpol}
	\end{equation}
	which is a polynomial in $N$ \cite{bona2011walk}.
\end{proof}

Eq. \ref{qlenpol} gives all possible $q$-lengths for a path from $(0, 0)$ to $(n, N)$. We also need to keep \emph{intermediate} $q$-length values in the vectors $\mathcal{V}_v$, i.e. $q$-lengths for paths from $(0, 0)$ to nodes in any layer $1, ..., n$. But each intermediate allocation has at most as many $q$-length values as the final one, because an intermediate allocation allocates at most $N$ elements to at most $n$ bins. As there are $n$ stages of intermediate allocations, we have in total at most $n\cdot \binom{N + n - 1}{n - 1}$ values, i.e. still a polynomial in $N$.

\begin{example}
	$RO$ is applied to the histogram $H=(7,2,3,2,13,12,8,3)$ in Figure \ref{ex1a}, using the target histogram $H''=(10,8,6,2,13,4,4,3)$ in Figure \ref{ex1c}, $JS$-divergence as the quality distance function $d_q$ and the privacy distance function $d_p$, and $\epsilon=0.05$. In step \ref{RO1}, the algorithm constructs the search space graph in Figure \ref{RunExampleROfig}. The graph has $n+1=9$ layers of nodes, where $n=8$ is the length of $H$. Layer $0$ contains the node $(0,0)$ and each other layer contains $N+1=51$ nodes, where $N=50$ is the size of $H$. Each node in layers $1, \ldots, 8$ is labeled $(i,j)$; $i\in[1,8]$ denotes the layer of the node and corresponds to bin $i$, while $j\in[0,50]$ denotes the counts allocated to bins $1, \ldots, i$. For example, the node $(8,50)$ denotes that all $50$ counts of $H$ are allocated to bins $1,\ldots, 8$.
	In addition, there is an edge from each node $(i,j)$ to every node $(i+1, j+k)$, for each $k\in [0, 50-j]$. The edge weight is a pair $(p_{err}, q_{err})$, where the privacy error $p_{err}$ (respectively, quality error $q_{err}$) quantifies the error with respect to $JS$-divergence that is incurred by allocating  $k$ counts to bin $i+1$ of the sanitized histogram (see Figure \ref{RunExampleROfig}). For example, the node $(0,0)$ is connected to the nodes $(1,0), \ldots, (1,50)$, and the edge $((0,0), (1,10))$ has  $(p_{err},q_{err})^{1,10}=(0, 3.8\cdot 10^{-3})$, incurred by allocating $10$ counts to the first bin. In steps \ref{RO2} to \ref{RO9}, $RO$ computes the vector $\mathcal{V}_u$ for each node $u$. In step \ref{RO10}, the algorithm finds the shortest path from node $(0,0)$ to node $(8,50)$ with $q_{err} \leq \epsilon$ (see Figure \ref{RunExampleROfig}), and in step \ref{RO11} it constructs the sanitized histogram $H'=(10,6,5,2,14,5,5,3)$ that corresponds to the shortest path (see Figure \ref{ex1d}). Note that $j$ in the label $(i,j)$ of each node in the shortest path corresponds to the counts of sensitive locations that are allocated to bins $1, \ldots, i$ in $H'$. Last, in step \ref{RO12}, $H'$ is returned. \label{RunExampleRO} \qedsymbol
\end{example}

\begin{figure}[ht!]\hspace{-4mm}
	\includegraphics[trim={0 0 0 0},scale=0.26]{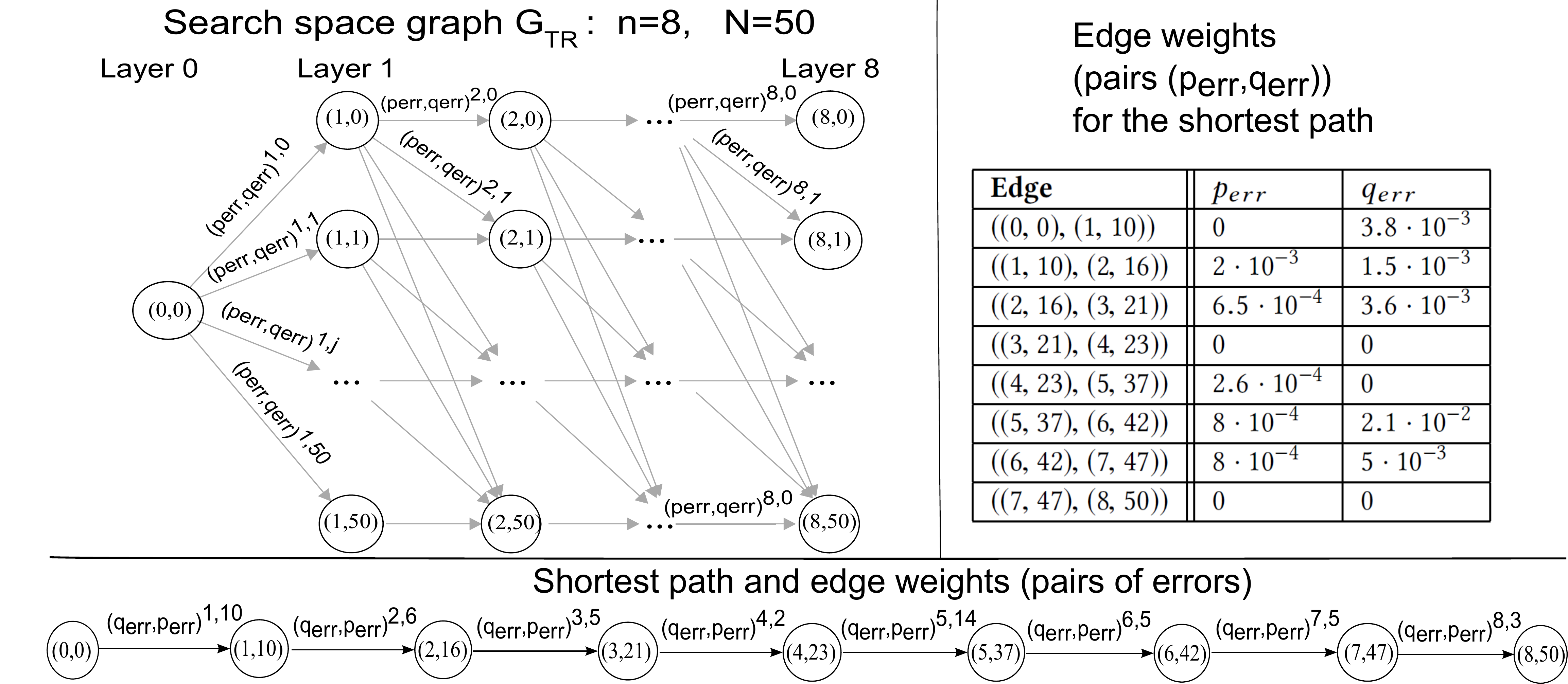}
	\caption{Search space graph $G_{TR}$ for Example~\ref{RunExampleRO} (the missing nodes and edges are denoted with ``$\ldots$''), and shortest path along with its corresponding edge weights}\label{RunExampleROfig}
\end{figure}

The time complexity of the $RO$ algorithm is $O((n\cdot N)^2\cdot \binom{N + n - 1}{n - 1})$. The total cost is the sum of the cost of
constructing $G_{TR}$ and of finding the constrained shortest path from $(0, 0)$ to $(n, N)$.

The construction of $G_{TR}$ takes $O(n\cdot N^2)$ time. This is because
the algorithm constructs $O(n\cdot (N + 1) + 1)=O(n\cdot N)$ nodes, each of which has $O(N)$ outgoing edges, for a total of $O(n\cdot N^2)$ edges. Note also that the computation of each edge weight takes $O(1)$ time. The cost of computing the shortest path is $O((n\cdot N)^2\cdot \binom{N + n - 1}{n - 1})$. This is because it requires (I) constructing a vector $\mathcal{V}_v$ with $O(n\cdot \binom{N + n - 1}{n - 1})$ entries, for each node $v$ of the $O(n\cdot N)$ nodes of $G_{TR}$, which takes $O(n^2\cdot N\cdot \binom{N + n - 1}{n - 1})$ time, and (II) updating each entry of $\mathcal{V}_v$ once, which takes $O(N)$ time per node since there are $O(N)$ incoming edges to each node (see Eq. \ref{eq:vk}), for a total of $O((n \cdot N)^2 \cdot \binom{N + n - 1}{n - 1})$ across all nodes.

Last, we note how the variation {\TR}$_{|H''|_1}$ of the {\TR} problem (see Section~\ref{subsec:TR_definition}) can be optimally solved with $RO$.
This is possible by simply using $RO$ to allocate $|H''|_{1}$ counts instead (i.e., construct a search space graph whose layers $1, \ldots n$ have $|H''|_1+1$ counts each, and then find the shortest path from $(0,0)$ to $(n,|H''|_1)$ and the sanitized histogram $H'$ corresponding to the path).

\subsection{Heuristics for Target Resemblance and Target Avoidance}\label{sec:solution:heuristics}
Our heuristics, $RH$ for the Target Resemblance and $AH$ for the Target Avoidance problem, work in a greedy fashion to avoid the cost of constructing and searching the search space graph.
\hspace{-2mm}
\scalebox{0.6}{}{
	\begin{algorithm}[ht]\footnotesize
		\Algorithm{RH (Resemblance Heuristic)}
		\KwIn{Histogram $H$, target histogram $H''$, privacy distance function $d_p$, quality distance funct. $d_q$, quality thr. $\epsilon$}
		\KwOut{Sanitized histogram $H'$}
		$SrcBins\gets \{i~such~that~H[i]> H''[i]\}$ \label{RH1}\\
		$DstBins\gets \{i~such~that~H[i]< H''[i]\}$ \label{RH2}\\
		$H'\gets H$ \label{RH3}\\
		$\epsilon_{rem}\gets \epsilon$\label{RH4} ~~~~\tcp{Remaining quality budget}
		\While{$SrcBins\neq \varnothing$}{ \label{RH5}
			\tcp{Perform the best move on $H'$}
			$(H', Opt\Delta{d_q})\gets {\bf BestMove}(H', H'', SrcBins, DstBins, \epsilon_{rem})$\label{RH6}\\
			\tcp{Exit if the remaining budget is exhausted}
			\uIf{$Opt\Delta{d_q} = -1$}{\label{RH7}
				{\bf break}\label{RH8}\\
			}
			\tcp{Update the remaining budget}
			$\epsilon_{rem}\gets \epsilon_{rem} -Opt\Delta{d_q}$\label{RH9}\\
			\tcp{Update the set of source and dest. bins}
			$SrcBins\gets \{i~such~that~H'[i]> H''[i]\}$ \label{RH10}\\
			$DstBins\gets \{i~such~that~H'[i]< H''[i]\}$ \label{RH11}\\
		}
		\Return $H'$ \label{RH12}\\
	\end{algorithm}
}

\scalebox{0.6}{}{
	\begin{algorithm}[htbp]\footnotesize
		\Function{BestMove}
		\KwIn{Sanitized histogram $H'$, target histogram $H''$, Set of source bins $SrcBins$, Set of destination bins $DstBins$, Remaining budget $\epsilon_{rem}$, Privacy distance function $d_p$, Quality distance function $d_q$}
		\KwOut{Sanitized histogram $H'$ after performing the best move, difference in $d_q$ incurred by the  best move}
		$MaxRatio \gets 0$\\\label{FB1}
		$Opt\Delta{d_q}\gets -1$ \label{FB2}\\
		$H'_{BestMove}\gets H'$\label{FB3}\\
		\ForEach{bin $i$ in $SrcBins$}{ \label{FB4}
			\ForEach{bin $j \neq i$ in $DstBins$}{ \label{FB5}
				\ForEach{$k\in [1,H'[i]]$}{ \label{FB6}
					\tcp{Try moving $k$ counts from a source bin $H'[i]$ to a destination bin $H'[j]$}
					$H'_{tmp}\gets H'$ \label{FB7}\\
					$H'_{tmp}[i]\gets H'[i]-k$ \label{FB8}\\
					$H'_{tmp}[j]\gets H'[j]+k$ \label{FB9}\\
					${\bf\Delta}{d_p}\gets \left|d_p(H'_{tmp},H'')-d_p(H',H'')\right|$\label{FB10}\\
					$\Delta{d_q}\gets d_q(H'_{tmp},H)-d_q(H',H)$\label{FB11}\\
					\tcp{Store the best sanitized histogram so far, its ratio and remaining budget}
					\If{$\frac{\Delta{d_p}}{\Delta{d_q}}>MaxRatio$~~and~~$\Delta{d_q}<\epsilon_{rem}$}{\label{FB12}
						$H'_{BestMove}\gets H'_{tmp}$\label{FB13}\\
						$MaxRatio\gets \frac{\Delta{d_p}}{\Delta{d_q}}$\label{FB14}\\
						$Opt\Delta{d_q}\gets  \Delta{d_q}$\label{FB15}\\
					}
				}
			}
		}
		$H'\gets H'_{BestMove}$\label{FB16}\\
		\Return ($H'$, $Opt\Delta{d_q}$) \label{FB17}\\
	\end{algorithm}
}

We first discuss the $RH$ heuristic. The main idea in $RH$ is to try to greedily reduce the differences in the counts of corresponding bins between $H$ and $H''$.
As can be seen in the pseudocode (steps \ref{RH1} and \ref{RH2}), $RH$ identifies \emph{source bins}, i.e. bins in $H$ with more counts in $H$ than in $H''$, and \emph{destination bins}, bins with fewer counts in $H$ than in $H''$. Bins with equal counts in $H$ and $H''$ are ignored. Then, in steps \ref{RH3} and \ref{RH4}, the sanitized histogram $H'$ is initialized to the original histogram $H$ and the remaining quality budget $\epsilon_{rem}$ to the quality threshold $\epsilon$. In steps \ref{RH5} and \ref{RH6}, $RH$ moves some counts from a source bin to a destination bin using a function $BestMove$.

As can be seen in the pseudocode of $BestMove$ (steps \ref{FB4} to \ref{FB6}), the function performs an exhaustive search of all possible ways (``moves'') to move $k$ counts from a source bin $i$ to a destination bin $j$. For each move, $BestMove$ computes the privacy effect $\Delta d_p$ and the quality effect $\Delta d_q$ (steps \ref{FB10} and \ref{FB11}), and it selects the move that maximizes the ratio $\frac{\Delta d_p}{\Delta d_q}$, subject to the constraint that $\Delta d_q$ cannot exceed the remaining quality budget $\epsilon_{rem}$ (steps \ref{FB12} to \ref{FB15})\footnote{Note that $\Delta d_q > 0$, because $\Delta d_q$ is the sum of two positive terms (in square brackets):
	$\Delta{d_q} = d_q(H'_{tmp}, H) - d_q(H', H) =
	q(H'_{tmp}[i], H[i]) + q(H'_{tmp}[j], H[j]) - q(H'[i], H[i]) - q(H'[j], H[j]) = \left[q(H'[i]-k, H[i]) - q(H'[i], H[i])\right] + \left[q(H'[j]+k, H[j])  - q(H'[j], H[j])\right].$ The first term is positive: Bin $i$ is a source bin, which means $H'[i] \leq H[i]$. But $H'[i]-k < H'[i]$, so the distance $q(H'[i]-k, H[i])$ is larger than $q(H'[i], H[i])$. Similarly, the second term is positive, because bin $j$ is a destination bin.}. The rationale is to prioritize moves with a large improvement in privacy $\Delta d_p$ and a small reduction in quality $\Delta d_q$.

Next, in step \ref{RH7}, $RH$ checks whether the remaining budget is exhausted. If it is, no more moves are performed (step \ref{RH8}). Otherwise, in steps \ref{RH9} to \ref{RH11}, $RH$ reduces the quality budget by $Opt\Delta_{d_q}$ (i.e., by the quality effect of the best move), and updates the sets of source and destination bins by no longer considering as source or destination bins any bins whose count has become equal to the corresponding bin in $H''$. Moves continue until the budget is exhausted or there are no more source/destination bins. Since moves cannot increase the count of a source bin nor increase the remaining quality budget, $RH$ will always terminate.

The main idea in $AH$ is very similar. The difference is in the definition of source and destination bins: $AH$ aims to make $H'$ dissimilar to $H''$, and so it tries to
\emph{increase} differences in counts between bins in $H'$ and corresponding bins in $H''$. Hence, source bins (from which counts will be taken) are bins that have a shortage of counts (fewer counts in $H'$ than in $H''$), whereas destination bins are the ones with a surplus of counts. Unlike in $RH$, bins with equal counts in $H$ and $H''$ are source bins \emph{as well as} destination bins (see steps \ref{AH1} and \ref{AH2} in the pseudocode of $AH$, in which source and destination bins are initialized, and steps \ref{AH10} and \ref{AH11} in which source and destination bins are updated). $AH$ then proceeds as $RH$, only making sure never to use the same bin as both source and destination.

\scalebox{0.6}{}{
	\begin{algorithm}[ht]\footnotesize
		\Algorithm{AH (Avoidance Heuristic)}
		\KwIn{Histogram $H$, target histogram $H''$, privacy distance function $d_p$, quality distance funct. $d_q$, quality thr. $\epsilon$}
		\KwOut{Sanitized histogram $H'$}
		$SrcBins\gets \{i~such~that~H[i]\leq H''[i]\}$ \label{AH1}\\
		$DstBins\gets \{i~such~that~H[i]\geq H''[i]\}$ \label{AH2}\\
		$H'\gets H$ \label{AH3}\\
		$\epsilon_{rem}\gets \epsilon$\label{AH4} ~~~~\tcp{Remaining quality budget}
		\While{$SrcBins\neq \varnothing$}{ \label{AH5}
			\tcp{Perform the best move on $H'$}
			$(H', Opt\Delta_{d_q}) \gets {\bf BestMove}(H', H'', SrcBins, DstBins, \epsilon_{rem})$\label{AH6}\\
			\tcp{Exit if the remaining budget is exhausted}
			\uIf{$Opt\Delta{d_q} = -1$}{\label{AH7}
				{\bf break}\label{AH8}\\
			}
			\tcp{Update the remaining budget}
			$\epsilon_{rem}\gets \epsilon_{rem} -Opt\Delta{d_q}$\label{AH9}\\
			\tcp{Update the set of source and dest. bins}
			$SrcBins\gets \{i~such~that~0 < H'[i] \leq H''[i]\}$ \label{AH10}\\
			$DstBins\gets \{i~such~that~H'[i] \geq H''[i]\}$ \label{AH11}\\
		}
		\Return $H'$ \label{AH12}\\
	\end{algorithm}
}

The time complexity of $RH$ and $AH$ is $O(n^3\cdot N)$. This is because the loop in step \ref{RH5} runs $O(n)$ times (once per source bin), and each time there is a cost of $O(n^2\cdot N)$ incurred by $BestMove$. The cost of $BestMove$ is $O(n^2\cdot N)$, because there are $O(n^2)$  source/destination bin pairs, and for each pair $O(N)$ temporary moves are performed. The time complexity analysis refers to the worst case. In practice, a histogram can be sanitized with a smaller number of moves (i.e., executions of $BestMoves$), and the heuristics scale well with respect to $n$. For example, in our experiments, the heuristics scale close to linearly with respect to $n$.

Last, we note that the variation {\TR}$_{|H''|_1}$ of the {\TR} problem (see Section \ref{subsec:TR_definition}) can be directly dealt with by the $RH$ heuristic. This is because $RH$ does not pose any restriction on the size of $H''$, so $H''$ can have a different size than that of $H$. Similarly, the variation {\TA}$_{|H''|_1}$ of the {\TA} problem (see Section \ref{subsec:TA_definition}) can be directly dealt with by the $AH$ heuristic.

\section{Evaluation}\label{sec:evaluation}

In this section, we evaluate our approach in terms of effectiveness and efficiency. We do not compare against existing
histogram sanitization methods, because they cannot be used to solve the Sensitive Location Hiding or the Target Avoidance/Resemblance problem (see Related Work in Section~\ref{related_work_histogram_privacy}).  

\subsection{Setup and datasets}

To calculate the loss in quality (utility) incurred by replacing the original histogram $H$ with the sanitized histogram $H'$, we compute the distance $d_q(H,H')$, 
where $d_q$ is either the Jensen-Shannon (JS) divergence (Section~\ref{sec:problem:ql}), or the $L_2$ distance. Our algorithms can optimize either of these measures. However, we present results for optimizing $JS$ divergence, because, as we show, the two measures lead to qualitatively similar results.

In addition, we measure how well sanitization preserves the quality of two applications: (I) (histogram) clustering, and (II) location recommendation.
Clustering partitions the elements (frequencies) of a histogram into clusters, so that each cluster contains similar frequencies. Clustering is typically applied to visualize (long) histograms, or to segment locations into pre-defined categories based on frequency of visits. For example, clustering a histogram $H=(1,2,3,98,99,100)$ may result in two clusters,  $(1,2,3)$  and $(98,99,100)$, corresponding to ``rarely-visited'' and ''frequently-visited'' locations, respectively. Clustering quality is measured using Normalized Conditional Entropy ($NCE$), a normalized version of the well-known Conditional Entropy cluster quality measure \cite{dmbook}. We first provide the definition of Conditional Entropy and then that of $NCE$. Let $\mathcal{C}$ (respectively, $\mathcal{C}'$) be a partition of the elements of $H$ (respectively, $H'$) into $k$ mutually disjoint and nonempty clusters. The partitions $\mathcal{C}$ and $\mathcal{C}'$ will also be referred to as clusterings. The Conditional Entropy $\mathcal{H}(C'|C)=\sum_{c'\in \mathcal{C}',c\in \mathcal{C}}\Pr(c,c')\cdot ln(\frac{\Pr(c)}{\Pr(c,c')})$ is the entropy of $\mathcal{C}'$ conditioned on $\mathcal{C}$. $\Pr(c)=\frac{|c|}{N}$ is the probability that a randomly selected element of $H$ is contained in the cluster $c\in \mathcal{C}$,  $\Pr(c')=\frac{|c'|}{N}$ is the probability that a randomly selected element of $H'$ is contained in the cluster $c'\in\mathcal{C}'$, and $\Pr(c',c)=\frac{|c'\cap c|}{N}$ is the probability that a randomly selected element is contained in $c' \cap c$. Intuitively, $\mathcal{H}(\mathcal{C}'|\mathcal{C})$ quantifies the amount of information needed to describe $\mathcal{C}'$, when $\mathcal{C}$ is known. The Normalized Conditional Entropy $NCE(\mathcal{C}'|\mathcal{C})$ is defined as $NCE(\mathcal{C}'|C)=\frac{\mathcal{H}(\mathcal{C}'|\mathcal{C})}{\mathcal{H}(\mathcal{C}')}$, where $\mathcal{H}(\mathcal{C}')=-\sum_{c'\in \mathcal{C}'}\Pr(c')\cdot ln(\Pr(c'))$ is the entropy of $\mathcal{C}'$. Since $\mathcal{H}(C'|C)\leq \mathcal{H}(C')$ \cite{dmbook}, $NCE(\mathcal{C}'|\mathcal{C})$ is bounded in $[0,1]$, with $0$ implying that the two clusterings are the same and $1$ that they are independent. We apply the optimal dynamic-programming algorithm of Wang and Song \cite{ckmeans} to produce $\mathcal{C}$ and $\mathcal{C}'$ with $k=3$.
Results with different $k$ are similar (omitted). Given a histogram $H$ and an integer $k$, the algorithm produces a clustering $\mathcal{C}$ that minimizes $\sum_{c\in \mathcal{C}} \sum_{H[i]\in c} ({\bf L_2}(H[i], \bar{c}))^2$, where $c$ is a cluster, $\bar{c}$ is the mean of the elements in the cluster, and ${\bf L_2}$ is the $L_2$ distance.

Location recommendation suggests to a user, referred to as \emph{active user} and denoted with $\alpha$, a location that might interest them. A popular location recommendation approach is user-based Collaborative Filtering (CF), which works as follows (see \cite{Melville} for details): (I) It finds a set $U_{\alpha,\mathtt{k}}=\{u_1, \ldots, u_{\mathtt{k}}\}$ of $\mathtt{k}$ users who are the most similar to the active user $\alpha$, with respect to their histograms.
The similarity between a user $u$ and the active user $\alpha$ is measured by the Pearson correlation coefficient, which is defined as
$PCC(u,{\alpha}) =
\frac{\sum_{L_i\in L_{u,\alpha}} (f_{\alpha}(L_i) - \mu_{\alpha}) \cdot (f_{u}(L_i) - \mu_{u})} {\sqrt{\sum_{L_i\in L_{u,\alpha}} (f_{\alpha}(L_i) - \mu_{\alpha})^2 \cdot (f_{u}(L_i) - \mu_{u})^2}}$, where $L_{u,\alpha}$ is the set of locations in the histograms of both $u$ and $\alpha$, $f_{u}(L_i)$ (respectively, $f_{{\alpha}}(L_i)$ ) is the number of times the user $u$ (respectively, $\alpha$) visited location $L_i$ (i.e., the count of the location $L_i$ in the histogram of a user), and $\mu_{u}$ (respectively, $\mu_{\alpha}$) is the average number of times $u$ (respectively, $\alpha$) visited a location.
(II) For the active user $\alpha$ and each location $L_i$, it computes a recommendation score (predicted frequency of visiting $L_i$), defined as $r_\alpha(L_i) = \mu_{\alpha} + \frac{\sum_{u\in U_{\alpha,{\mathtt{k}}}} (f_u(L_i)-\mu_u) \cdot PCC(u,\alpha)} {\sum_{u\in U_{\alpha,{\mathtt{k}}}} PCC(u,\alpha)}$.
(III) It recommends to the active user the location with the maximum recommendation score. In our experiments, we use the aforementioned user-based CF method with $\mathtt{k}=25$. 
The \emph{recommendation error} \cite{Melville} for an active user $\alpha$ and a location $L_{test}^\alpha$ is defined as the difference between $f_\alpha(L_{test}^\alpha)$, the user's \emph{true} frequency of visits to $L_{test}^\alpha$, and $r_\alpha(L_{test}^\alpha)$, the frequency of visits as \emph{predicted} by the recommendation algorithm based on the given dataset. Below we use both the absolute error $|f_\alpha(L_{test}^\alpha) - r_\alpha(L_{test}^\alpha)|$ and the square error $(f_\alpha(L_{test}^\alpha) - r_\alpha(L_{test}^\alpha))^2$.

To capture the impact of the sanitization algorithm on recommendation quality, we compute the above defined recommendation error in two ways: First, we compute recommendations based on the dataset of original user histograms, and then based on the dataset of sanitized histograms.
Clearly, the impact of sanitization is small when the average recommendation error for the dataset of original user histograms is similar to that for the dataset of sanitized histograms.
In particular, we perform the following steps on each of the two datasets (we use the absolute difference as the recommendation error, but the same steps apply for the squared difference):\\
(I) We randomly partition the dataset into 2 subsets, a training set $D_{train}$ with 90\% of the histograms and a test set $D_{test}$ with 10\% of the histograms.\\
(II) For an active user $\alpha$ in the test set $D_{test}$, we randomly select a location $L_{test}^\alpha$ in the histogram of $\alpha$. We compute the similarities $PCC(u,\alpha)$ between $\alpha$ and all users $u$ in the training set $D_{train}$ (see step (I) of the Collaborative Filtering algorithm). In the computation of $PCC$, we exclude $L_{test}^\alpha$ from the set of locations $L_{u,\alpha}$ that are in the histograms of both $u$ and $\alpha$. Then, we compute the recommendation score $r_\alpha(L_{test}^\alpha)$, and finally we compute the error $|f_\alpha(L_{test}^\alpha) - r_\alpha(L_{test}^\alpha)|$.\\
(III) We compute the absolute recommendation error considering each user in the test dataset $D_{test}$ as the active user $\alpha$, and then we average the errors to obtain the \emph{Mean Absolute Error} $MAE(D_{test}) = \frac{1}{|D_{test}|}\sum_{(\alpha, L_{test}^\alpha)} |f_\alpha(L_{test}^\alpha) - r_\alpha(L_{test}^\alpha)|$.

For the square error, the average we compute is the \emph{Root Mean Square Error} $RMSE(D_{test}) = \sqrt{\frac{1}{|D_{test}|}\sum_{(\alpha, L_{test}^\alpha)} (f_\alpha(L_{test}^\alpha) - r_\alpha(L_{test}^\alpha))^2}$.

All algorithms are implemented in Python and applied to the New York City (\emph{NYC}) and Tokyo (\emph{TKY}) datasets.
The datasets were downloaded from \cite{dataseturl} and include long-term check-in data in New York city and Tokyo, collected from Foursquare from 12 April 2012 to 16 February 2013.
The datasets have been used in several prior works \cite{datasetpaper,lpd,datasetpaper3}.
Each record in the datasets contains a location that was visited by a user at a certain time and
corresponds to a leaf in the Foursquare taxonomy (available at \url{https://developer.foursquare.com/docs/resources/categories}). There are in total $713$ locations in the taxonomy, and on average each user visits fewer than $41$ locations.
For each dataset, we produce the input histograms for our algorithms by constructing one histogram $H$ per user. The histogram $H$ contains a count $f(L_i)>0$ for every location $L_i$ visited by the user. That is, $H$ is constructed based on the user's values (location visits), which is line with histogram sanitization methods \cite{acs2012differentially,xu2013differentially, kellaris, sdmhistogram,sortaki,qardaji,li2016improving}. Table~\ref{tabdata} shows the characteristics of \emph{NYC} and \emph{TKY}, and Table \ref{tabdefault} shows the default values used in our experiments.
\begin{table}[!ht]
	\footnotesize\centering
	\scalebox{0.9}{
		\begin{tabular}{|c||c|c|c|c|}
			\hline 
			\textbf{Dataset} & \textbf{\# histograms} & \textbf{mean of length} $n$  & \textbf{max. length} $n$ & \textbf{mean of size} $N$   \tabularnewline
			\hline
			\emph{NYC} & 1,083 & 40.28 & 139 & 209.99   \tabularnewline \hline
			\emph{TKY} & 2,293 & 32.36 & 158 & 221.57 \tabularnewline \hline
	\end{tabular}}
	\caption{Characteristics of datasets}\label{tabdata}
	\vspace{-1mm}
\end{table}

\begin{table}[!ht]\centering
	\footnotesize
	\scalebox{0.9}{
		\begin{tabular}{|c||c|c|c|c|c|}
			\hline {\bf Dataset} &  $n$ & $K$ & $|L'|$ & $\epsilon$ & $N$  \tabularnewline\hline
			\emph{NYC} & 25 & 20 & 5 & $5\cdot10^{-3}$ & 100 \tabularnewline \hline
			\emph{TKY} & 35 & 20 & 5 & $5\cdot10^{-3}$ & 100 \tabularnewline \hline
	\end{tabular}}
	\caption{Default values for each dataset with respect to: length $n$, total frequency of sensitive locations $K$, number of sensitive locations $|L'|$, threshold $\epsilon$, and size of histogram $N$}\label{tabdefault}
	\vspace{-1mm}
\end{table}

We also construct synthetic histograms containing some bins whose frequency is equal to zero. These bins correspond to locations that are not visited by a user but should be considered in sanitization, e.g. to allow the sanitization algorithm to redistribute frequency counts to locations that are not visited by the user. The synthetic histograms are constructed by appending zeros to a histogram of length $n=78$ and size $N=192$ in \emph{NYC} and to a histogram of $n=99$ and $N=642$ in \emph{TKY}, and their length is up to $400$, including the zero-frequency bins. We use the synthetic histograms to test the impact of length on the runtime performance of our methods. In total, we test the algorithms on approximately 3400 different histograms. For all experiments we use an Intel Xeon at 2.60GHz with 256GB of RAM.

\vspace{-3mm}
\subsection{Evaluation of the $LHO$ algorithm}

We evaluate the quality and runtime performance of the $LHO$ algorithm as a function of (I) $n$, the length of the original histogram, (II) $K$, the total frequency of sensitive locations, and
(III) $|L'|$, the number of sensitive locations. We consider JS-divergence, $L_2$ distance, and $NCE$ as measures of quality loss $d_q()$. The results for the $L_2$ distance are in Appendix~\ref{L2appendix}, because they
are similar to those for JS-divergence. Unless otherwise stated, the set of sensitive locations $L'$ is constructed by selecting $5$ sensitive locations uniformly at random.

\begin{figure*}[htbp]\hspace{-2mm}
	\begin{subfigure}[b]{.24\textwidth}\centering
		\includegraphics[trim={0 0 0 0},scale=0.3,clip]{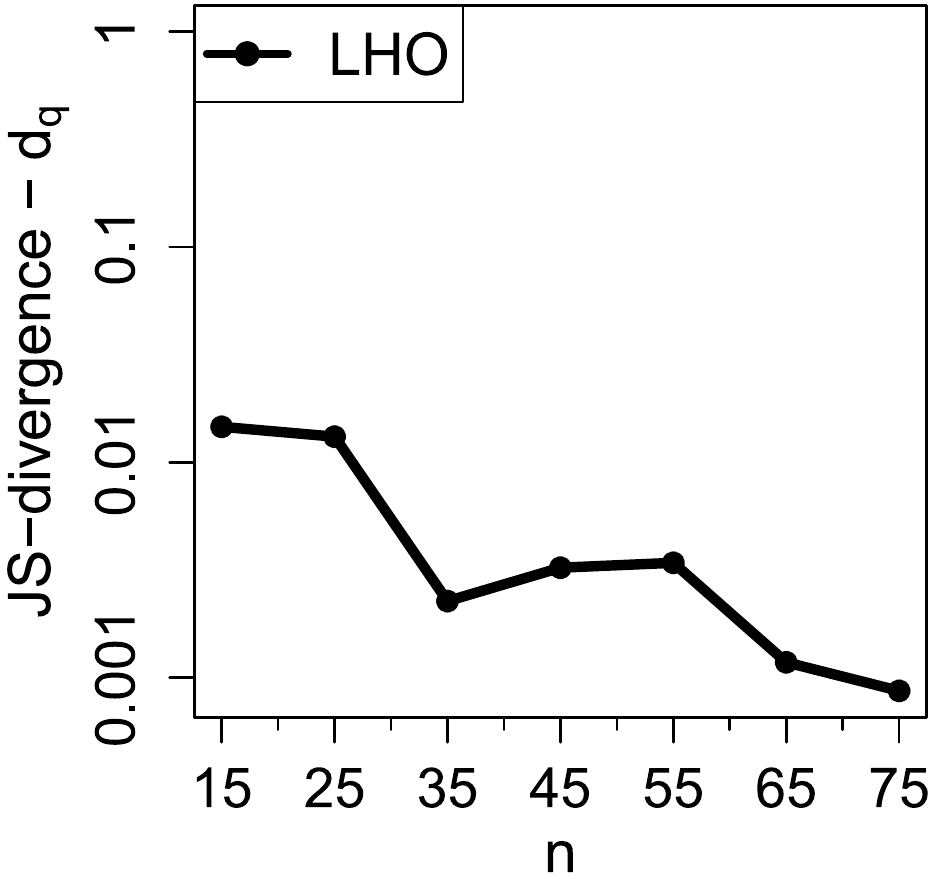}
		\caption{NYC}\label{fig1a}
	\end{subfigure}\hspace{+1mm}
	\begin{subfigure}[b]{.24\textwidth}\centering
		\includegraphics[trim={0 0 0 0},scale=0.3,clip]{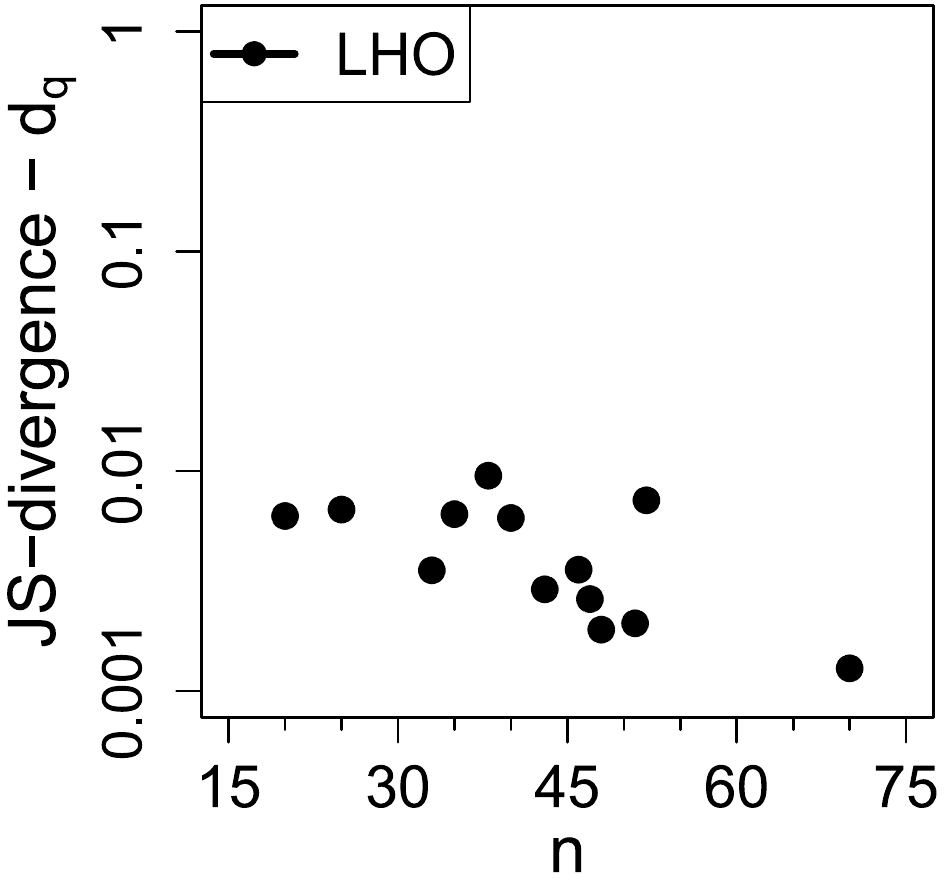}
		\caption{NYC}\label{fig1b}
	\end{subfigure}\hspace{+1mm}
	\begin{subfigure}[b]{.24\textwidth}\centering
		\includegraphics[trim={0 0 0 0},scale=0.3,clip]{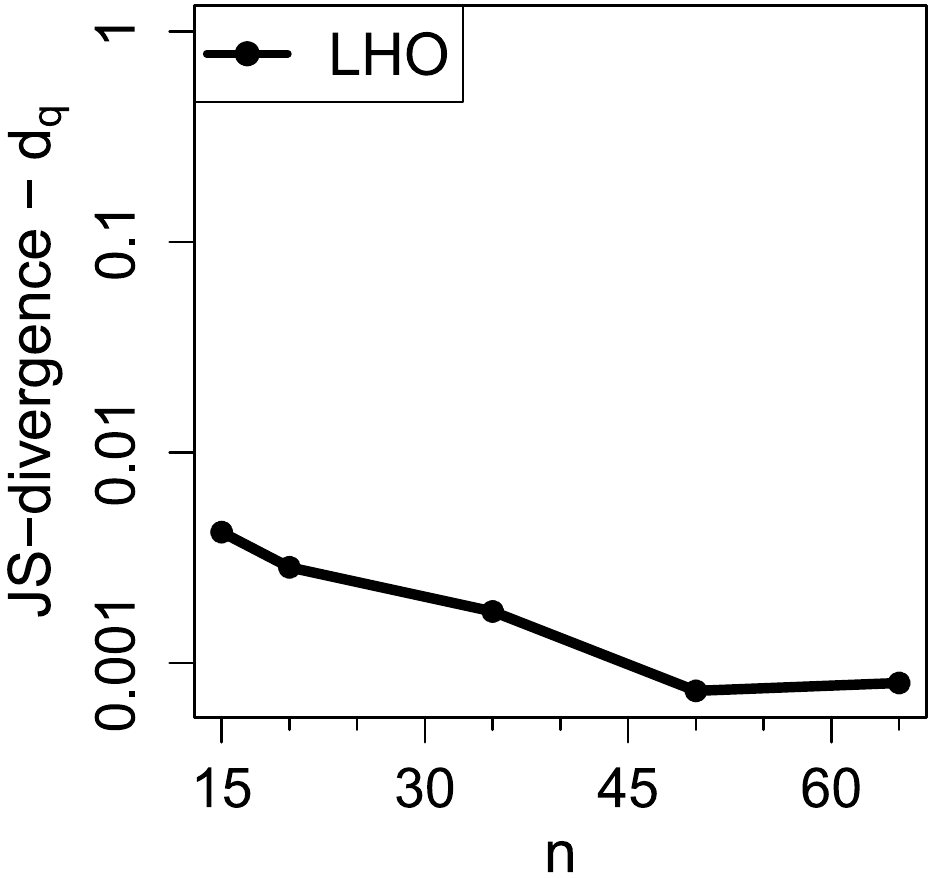}
		\caption{TKY}\label{fig1c}
	\end{subfigure}\hspace{+1mm}
	\begin{subfigure}[b]{.24\textwidth}\centering
		\includegraphics[trim={0 0 0 0},scale=0.3,clip]{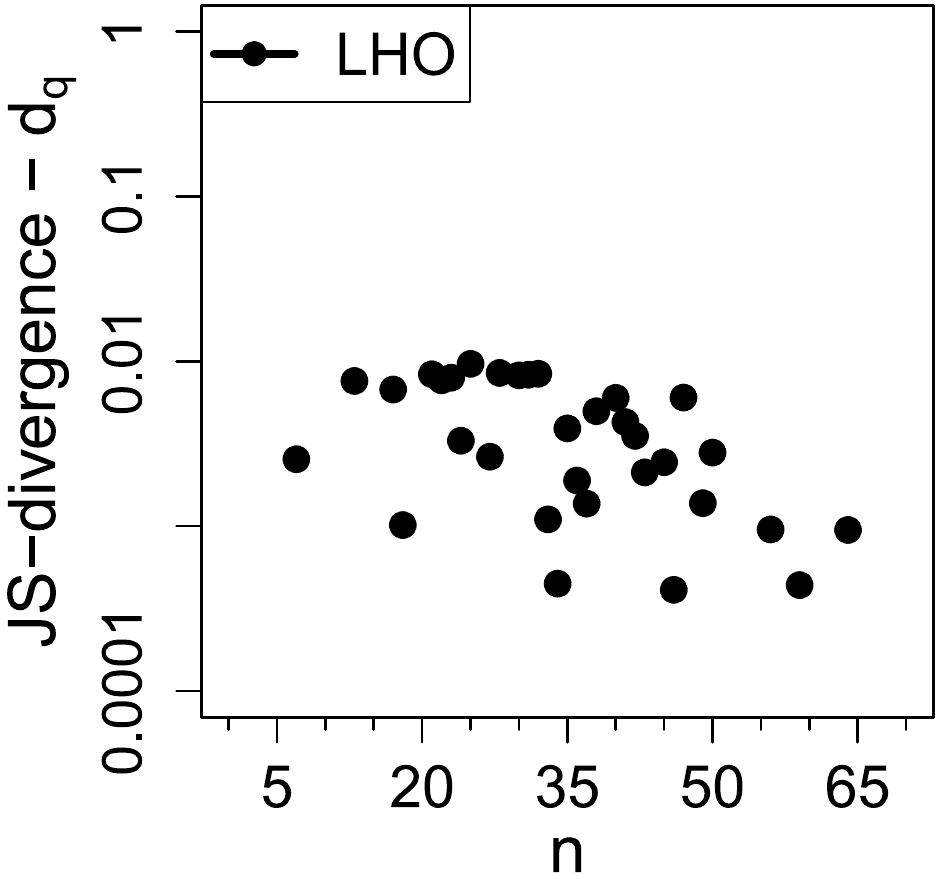}
		\caption{TKY}\label{fig1d}
	\end{subfigure}
	\caption{JS-divergence vs length $n$: (a) Median JS-divergence for histograms of length $n$ in \emph{NYC}. (b) JS-divergence for each histogram with $K=20$ in \emph{NYC}. (c)
		Median JS-divergence for histograms of length $n$ in \emph{TKY}. (d) JS-divergence for each histogram with $K=20$ in \emph{TKY}}\label{fig1}
\end{figure*}

\subsubsection{Quality preservation for the $LHO$ algorithm}

\noindent\paragraph{Impact of histogram length $n$} We show that JS-divergence decreases with $n$, in \figref{fig1} (the $y$ axis is in logarithmic scale).
This is because there are more bins whose counts may increase: The space considered by $LHO$ is larger and the change can be ``smoothed'' over more bins.
In addition, the JS-divergence scores are relatively low (recall that JS-divergence takes values in [0,1]).
This suggests that sanitization preserves the distribution of nonsensitive locations fairly well.

We also show that $NCE$ decreases with $n$, in \figref{fignew1}. This is because there are more bins whose frequency does not change, or it changes slightly so that they are not moved into a different cluster.
In addition, the $NCE$ scores are relatively low (recall that $NCE$ takes values in [0,1]). This suggests that the clustering quality is preserved fairly well after sanitization.

\begin{wrapfigure}[7]{l}{.48\textwidth}
	\begin{subfigure}[b]{.23\textwidth}\centering
		\includegraphics[trim={0 0 0 0},scale=0.25,clip]{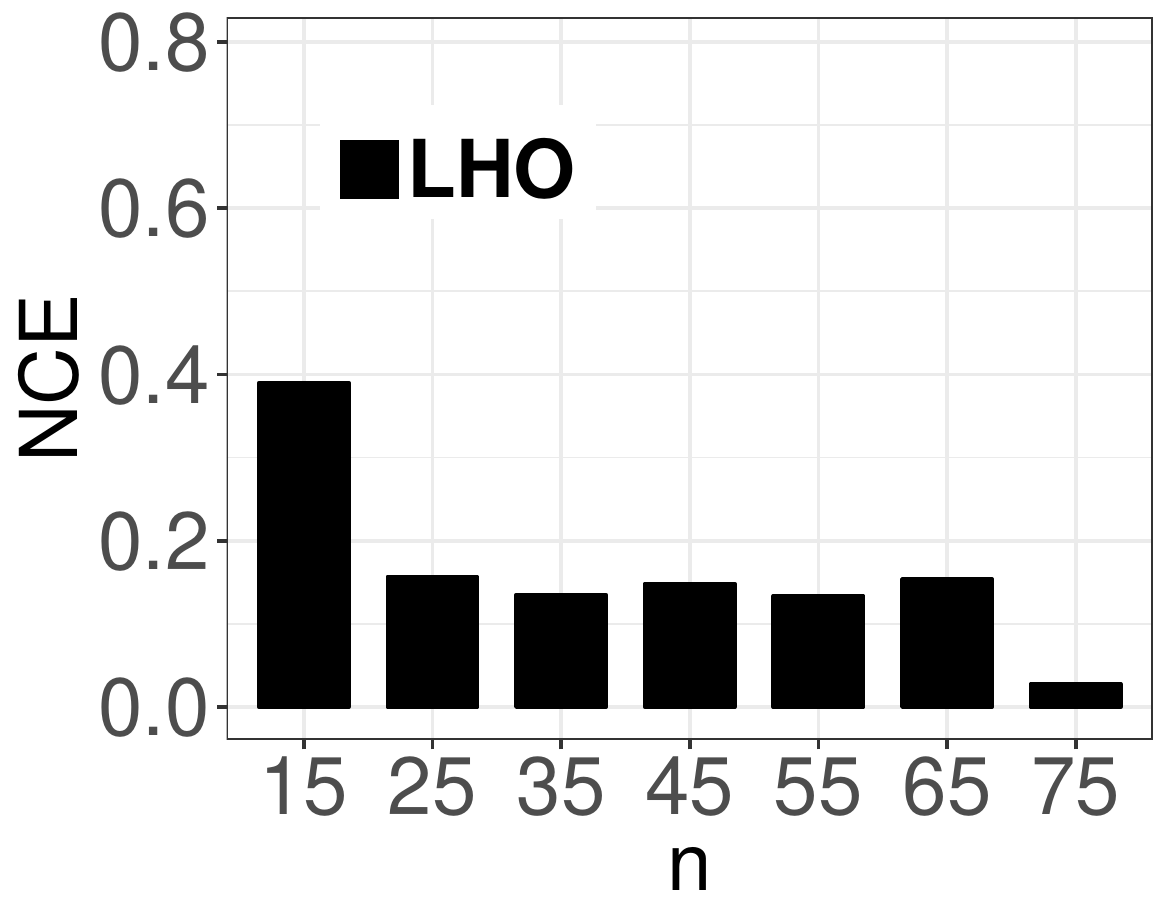}
		\caption{NYC}\label{fig2a}
	\end{subfigure}
	\begin{subfigure}[b]{.23\textwidth}\centering
		\includegraphics[trim={0 0 0 0},scale=0.25,clip]{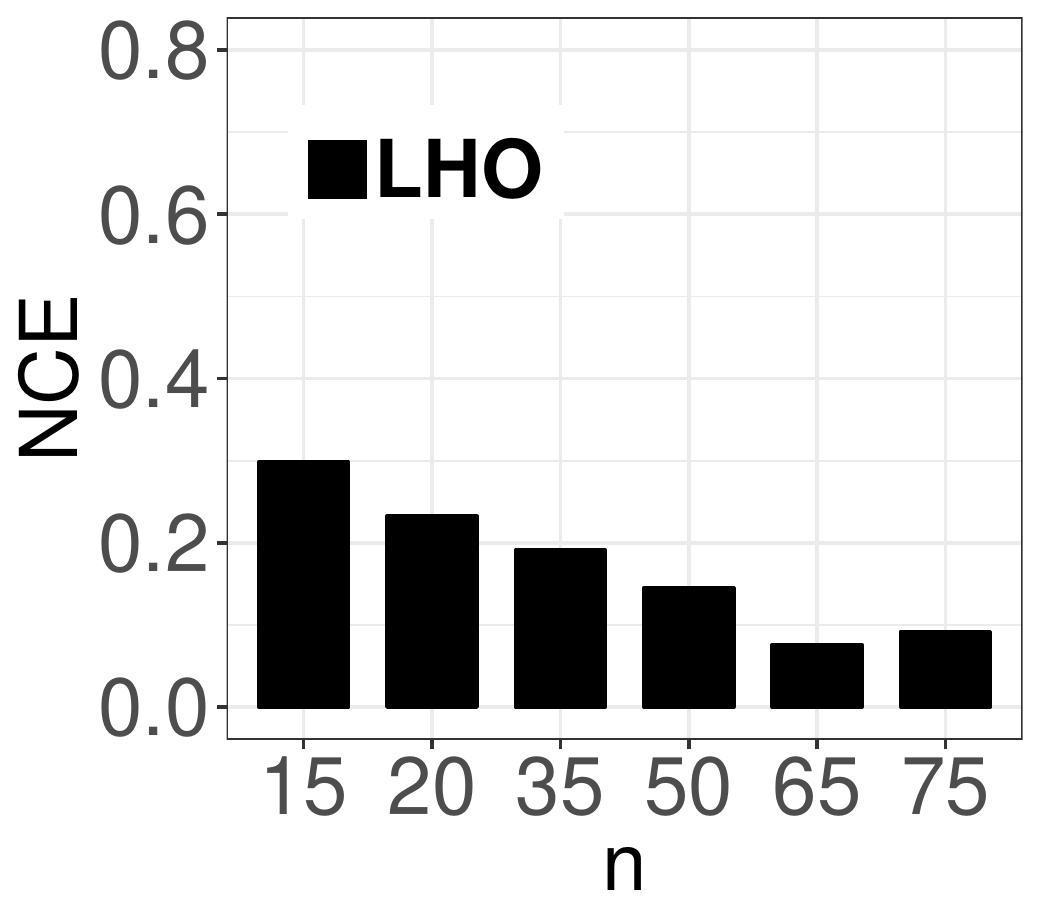}
		\caption{TKY}\label{fig2b}
	\end{subfigure}
	\caption{$NCE$ vs length $n$ for: (a) \emph{NYC}, and (b) \emph{TKY}}\label{fignew1}
\end{wrapfigure}

\noindent\paragraph{Impact of total frequency of sensitive locations $K$}
We show that JS-divergence increases with $K$ in \figref{fig2}
(the $y$-axes are in logarithmic scale).
This is because there are more counts that need to be redistributed into the bins of nonsensitive locations, and this incurs a larger amount of distortion.
In addition, the JS-divergence scores are relatively low. This suggests that the distribution of nonsensitive locations is preserved fairly well after sanitization.
\begin{figure*}[htbp]\hspace{-2mm}
	\begin{subfigure}[b]{.24\textwidth}\centering
		\includegraphics[trim={0 0 0 0},scale=0.3,clip]{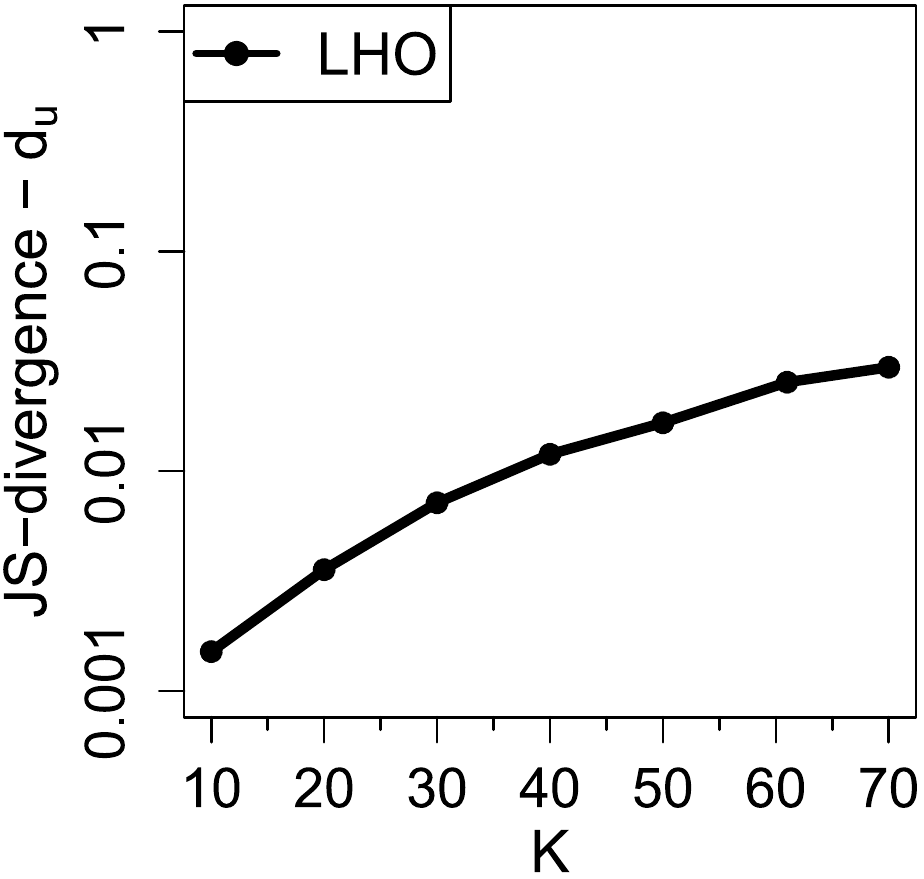}
		\caption{NYC}\label{fig2c}
	\end{subfigure}\hspace{+1mm}
	\begin{subfigure}[b]{.24\textwidth}\centering
		\includegraphics[trim={0 0 0 0},scale=0.3,clip]{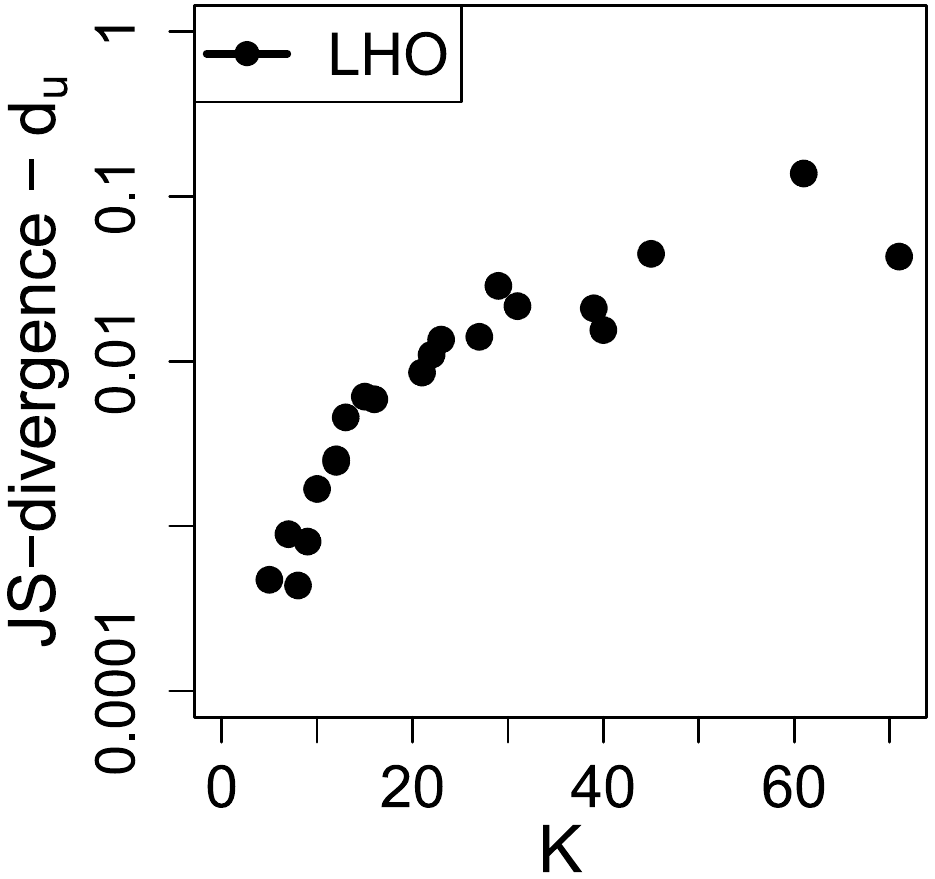}
		\caption{NYC}\label{fig2d}
	\end{subfigure}\hspace{+1mm}
	\begin{subfigure}[b]{.24\textwidth}\centering
		\includegraphics[trim={0 0 0 0},scale=0.3,clip]{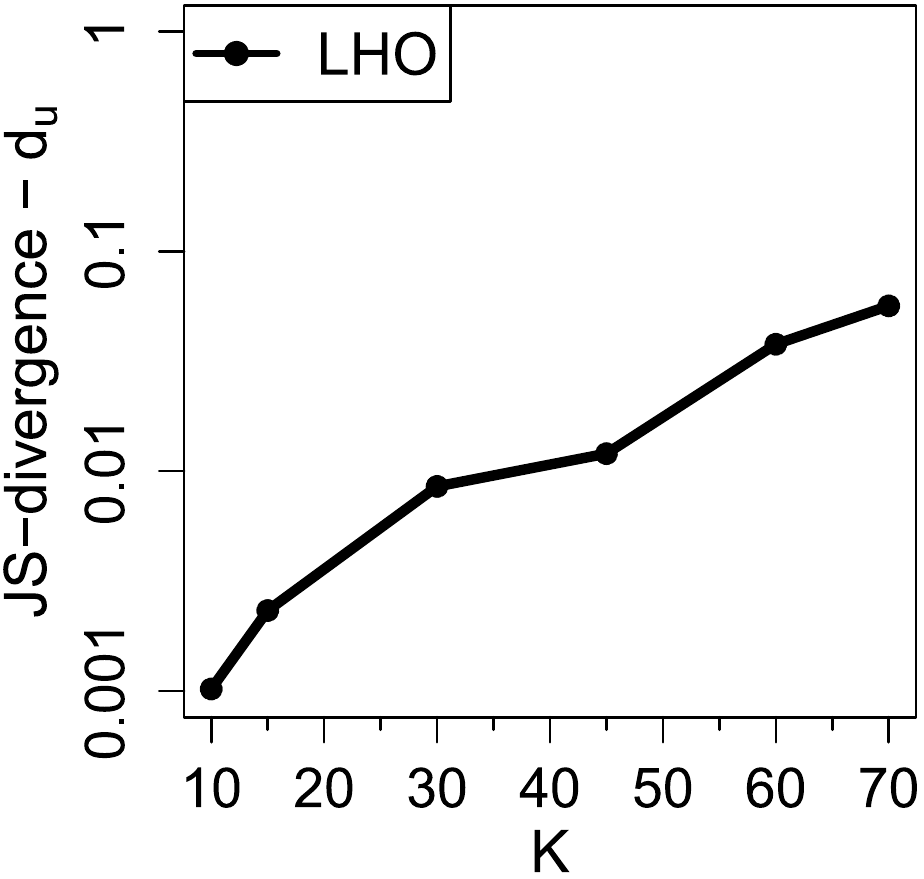}
		\caption{TKY}\label{fig3a}
	\end{subfigure}\hspace{+1mm}
	\begin{subfigure}[b]{.24\textwidth}\centering
		\includegraphics[trim={0 0 0 0},scale=0.3,clip]{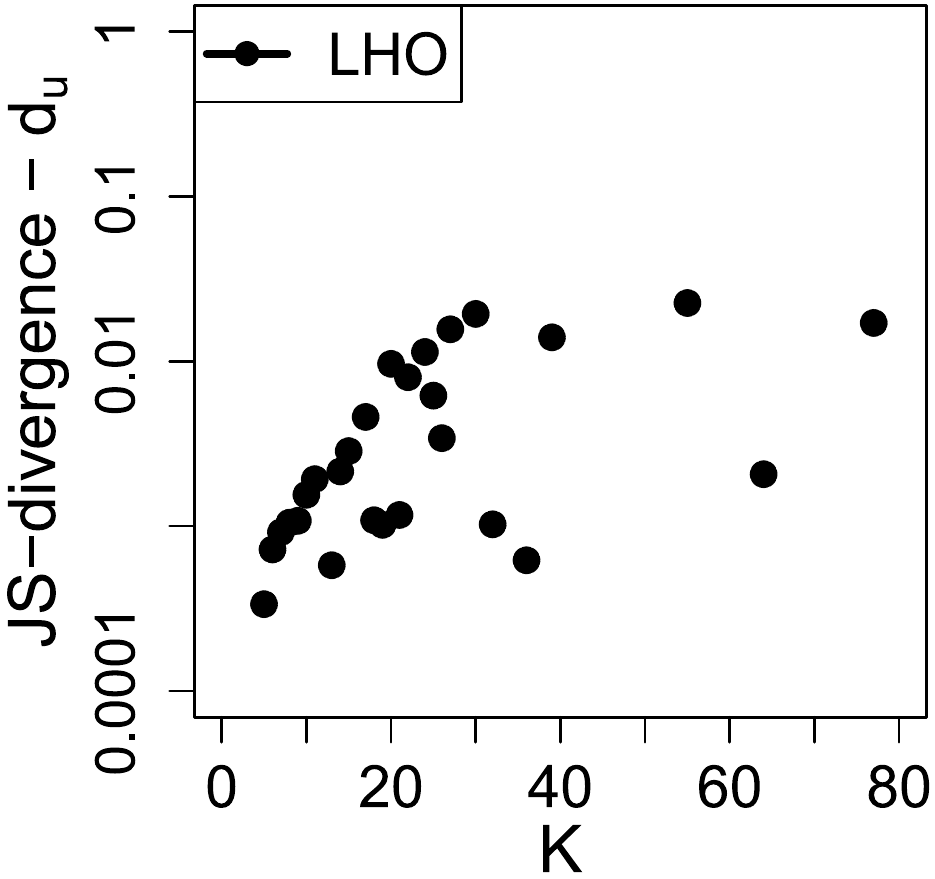}
		\caption{TKY}\label{fig3b}
	\end{subfigure}
	\caption{JS-divergence vs total frequency of sensitive locations $K$:
		(a) Median JS-divergence for varying $K$ in \emph{NYC}.
		(b) JS-divergence for each histogram with $n=30$ in \emph{NYC}.
		(c) Median JS-divergence for varying $K$ in \emph{TKY}.
		(d) JS-divergence for each histogram with $n=40$ in \emph{TKY}}\label{fig2}
\end{figure*}
\begin{wrapfigure}[10]{L}{.5\textwidth}
	\begin{subfigure}[b]{.24\textwidth}\centering
		\includegraphics[trim={0 0 0 0},scale=0.25,clip]{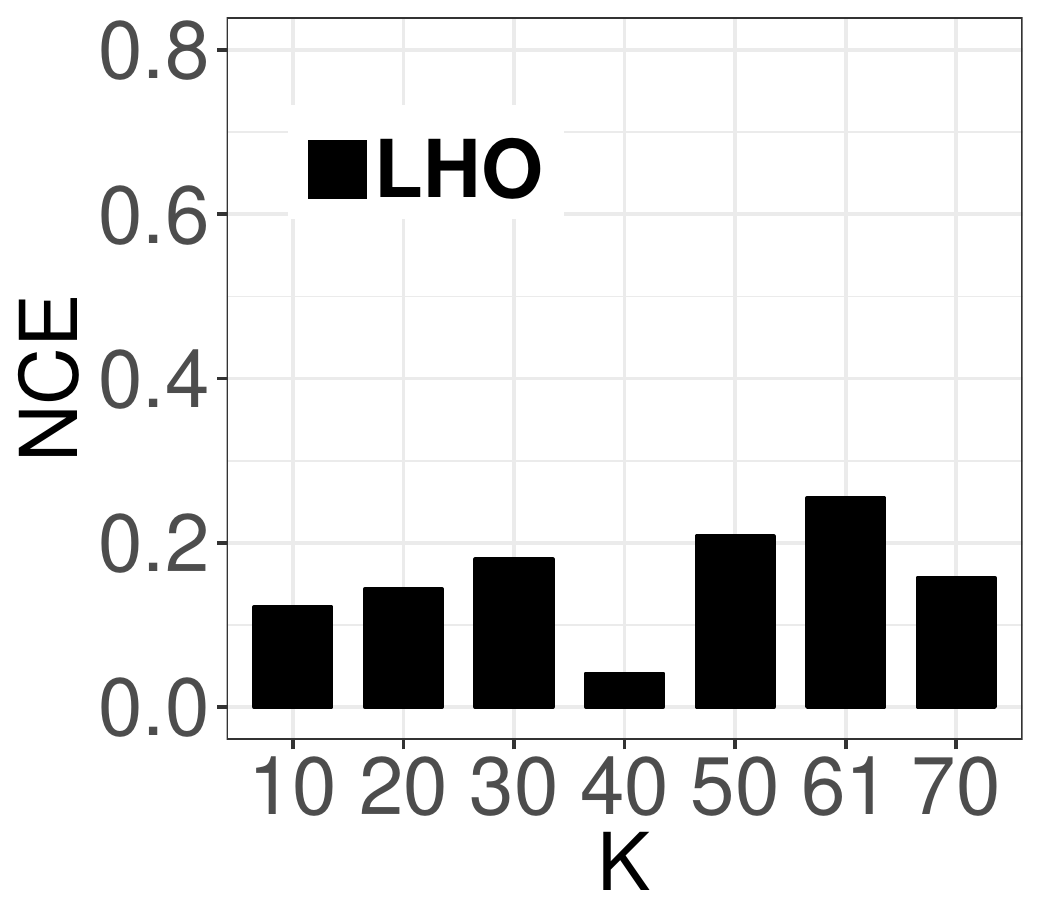}
		\caption{NYC}\label{fig3c}
	\end{subfigure}\hspace{+1mm}
	\begin{subfigure}[b]{.24\textwidth}\centering
		\includegraphics[trim={0 0 0 0},scale=0.25,clip]{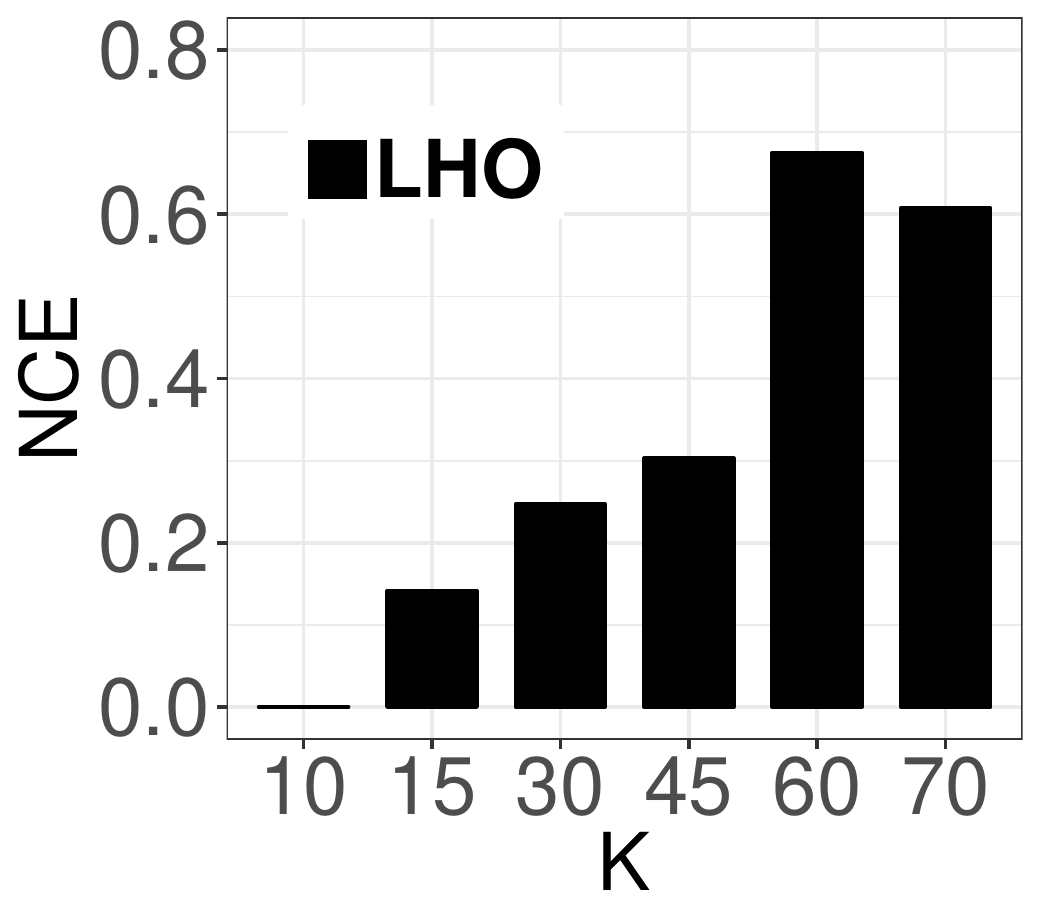}
		\caption{TKY}\label{fig3d}
	\end{subfigure}
	\caption{Median $NCE$ for varying $K$ vs total frequency of sensitive locations $K$ for: (c) \emph{NYC}, and (d) \emph{TKY}}\label{fignew2}
\end{wrapfigure}

We also show that $NCE$ increases with $K$, in \figref{fignew2}.  This is because a larger $K$ incurs larger changes to the frequency of the nonsensitive locations, which negatively impact the quality of
clustering. The $NCE$ scores are relatively low (on average $0.15$ and $0.25$ for \emph{NYC} and \emph{TKY}, respectively). This suggests that the clustering quality is preserved fairly well.
The high scores for $K\geq 60$ in the case of \emph{TKY} are obtained because the histograms have small total frequency (i.e., $K$ corresponds to $46\%$ of the total frequency of all locations on average). In this case,
the impact of sanitization on the histogram is inevitably large.

\begin{figure*}[ht!]\hspace{-2mm}
	\begin{subfigure}[b]{.24\textwidth}\centering
		\includegraphics[trim={0 0 0 0},scale=0.3,clip]{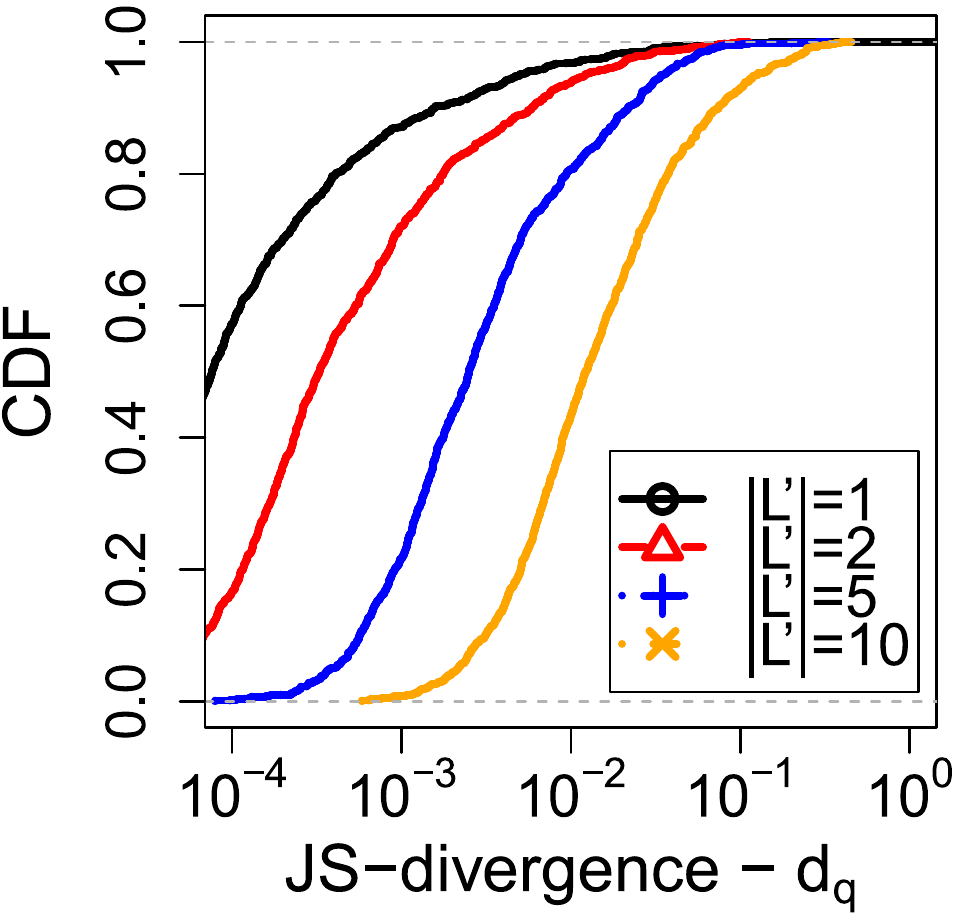}
		\caption{NYC}\label{fig4a}
	\end{subfigure}\hspace{+1mm}
	\begin{subfigure}[b]{.24\textwidth}\centering
		\includegraphics[trim={0 0 0 0},scale=0.3,clip]{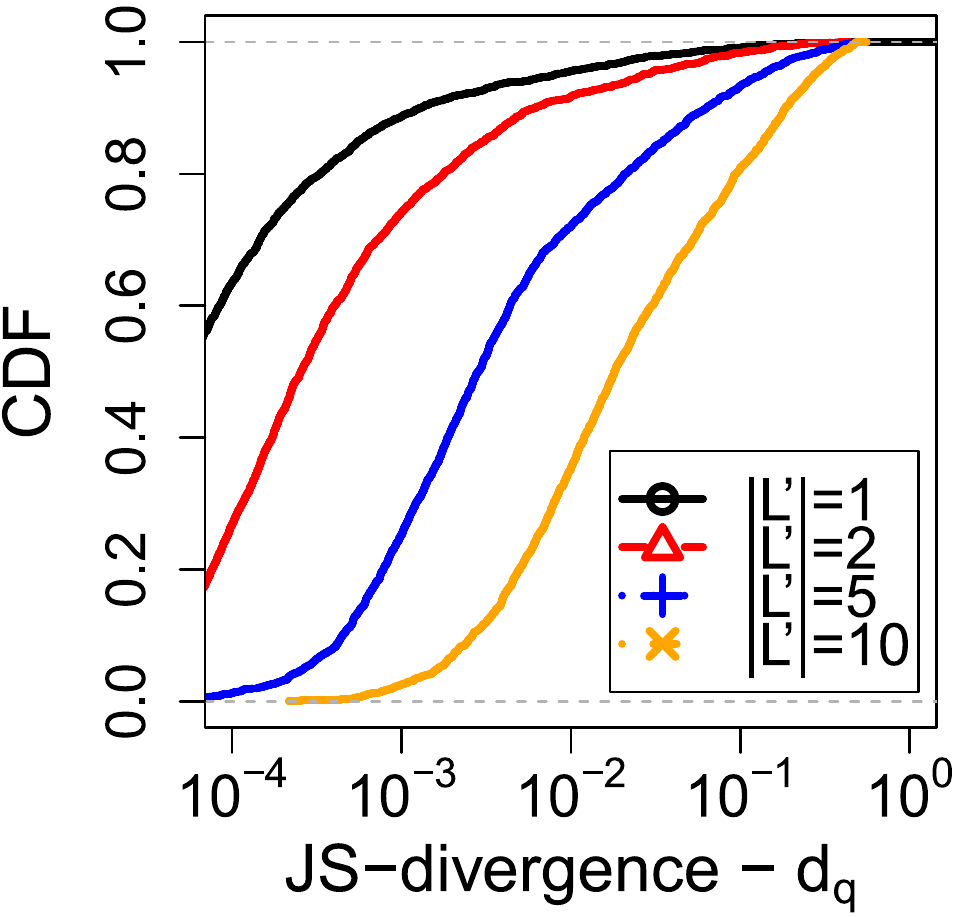}
		\caption{TKY}\label{fig4b}
	\end{subfigure}\hspace{+1mm}
	\begin{subfigure}[b]{.24\textwidth}\centering
		\includegraphics[trim={0 0 0 0},scale=0.3,clip]{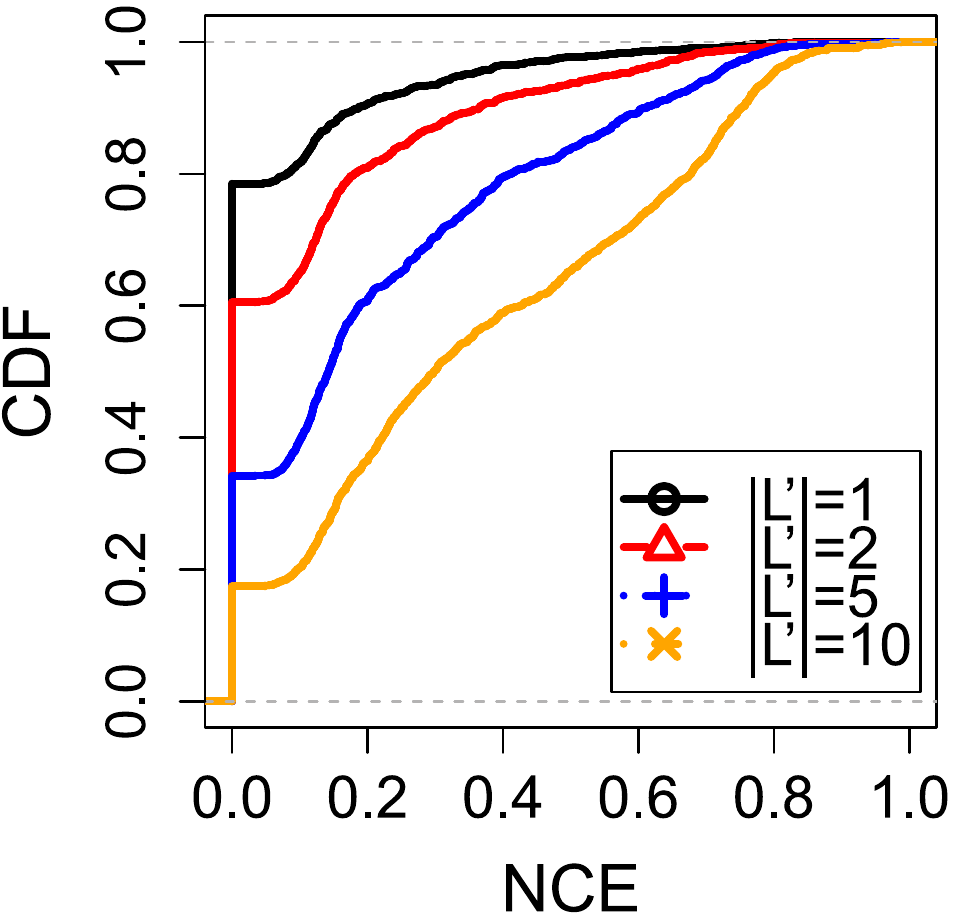}
		\caption{NYC}\label{fig4c}
	\end{subfigure}\hspace{+1mm}
	\begin{subfigure}[b]{.24\textwidth}\centering
		\includegraphics[trim={0 0 0 0},scale=0.3,clip]{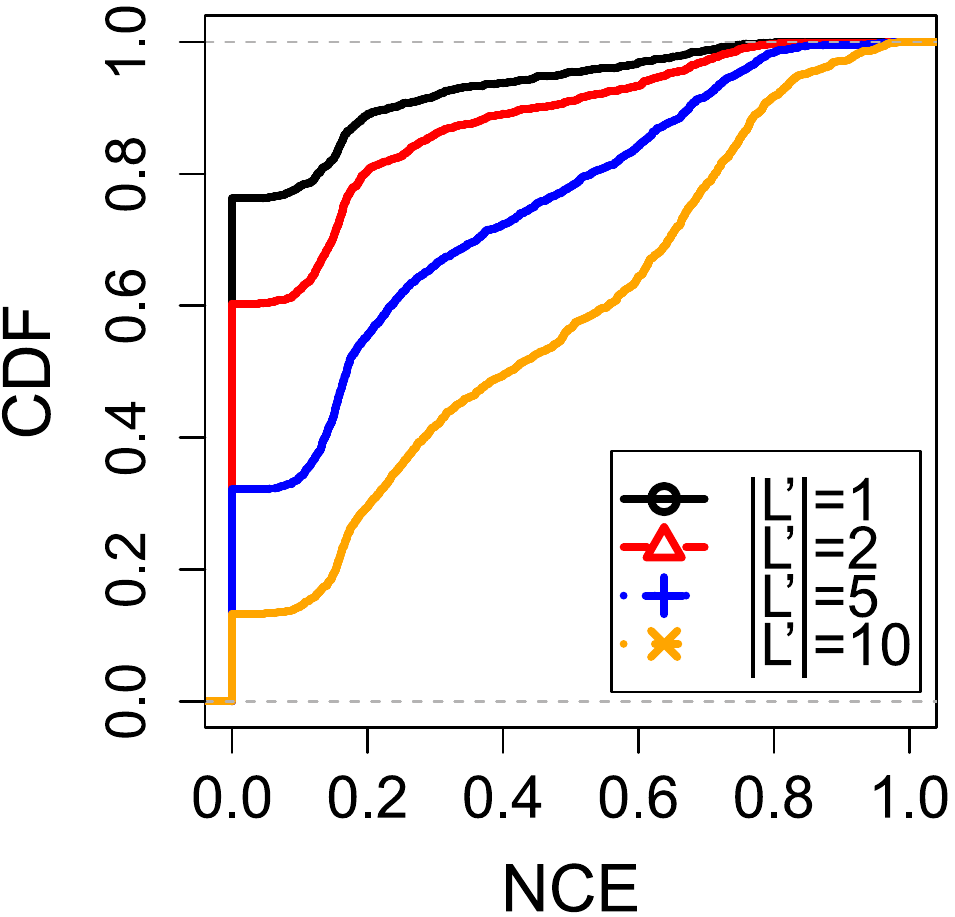}
		\caption{TKY}\label{fig4d}
	\end{subfigure}
	\caption{Cumulative Distribution Function of JS-divergence (i.e., ratio of histograms with JS-divergence at most equal to a score in $x$ axis) for varying $|L'|$ for: (a) \emph{NYC}, and (b) \emph{TKY}.
		Cumulative Distribution Function of $NCE$ for varying $|L'|$ for: (c) \emph{NYC}, and (b) \emph{TKY}}\label{fig4}
\end{figure*}

\vspace{-1mm}
\noindent\paragraph{Impact of number of sensitive locations $|L'|$} We show that JS-divergence increases with $|L'|$, in \figsref{fig4a} and \ref{fig4b}. This is because there are (I) more counts that need to be redistributed
into the bins of the nonsensitive locations, and (II) fewer bins to which the counts may be redistributed into, and, as demonstrated above, both (I) and (II) increase the amount of distortion incurred by sanitization to the original histogram. For instance, the median value for JS-divergence increases from 7.6E-06 to 1.2E-02 when $|L|'$
increases from 1 to 10, in the case of the \emph{NYC} dataset. However, the distributions of the original histograms are preserved well even when $|L'|=10$, with
$90\%$ of them having a JS-divergence score of at most $0.07$. The remaining histograms have higher scores because on average $70\%$ of their locations are treated as sensitive.

\begin{wrapfigure}[10]{L}{.48\textwidth}
	\begin{subfigure}[b]{.23\textwidth}
		\includegraphics[trim={0 0 0 0},scale=0.25,clip]{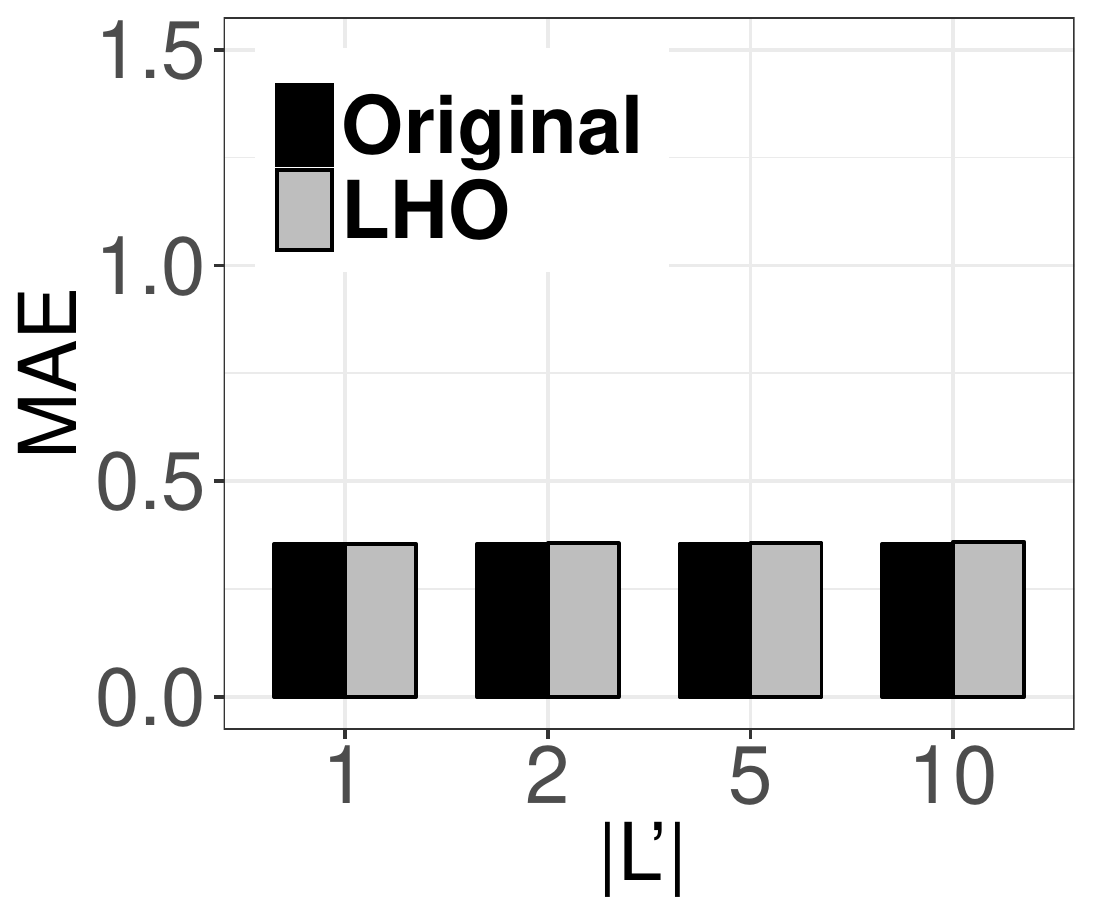}
		\caption{NYC}\label{recom1a}
	\end{subfigure}\hspace{+0.1mm}
	\begin{subfigure}[b]{.23\textwidth}
		\includegraphics[trim={0 0 0 0},scale=0.25,clip]{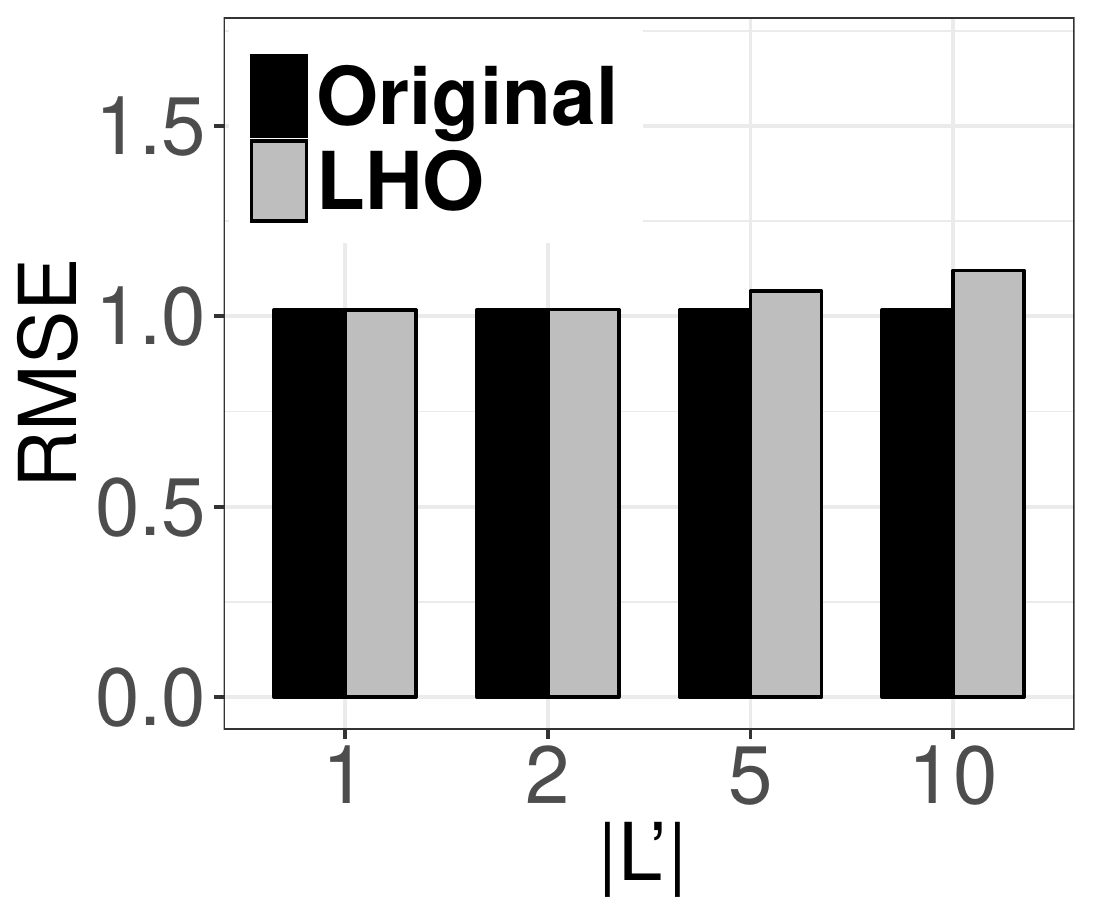}
		\caption{NYC}\label{recom1b}
	\end{subfigure}
	\caption{Recommendation quality for varying $|L'|$ with respect to: (a) $MAE$, and (b) $RMSE$}\label{recom1}
\end{wrapfigure}

We also show that $NCE$ increases with $|L'|$, in \figsref{fig4c} and \ref{fig4d}.
This is because a larger $|L'|$ causes larger changes to the frequency of the nonsensitive locations, which negatively impact the quality of clustering.
However, the $NCE$ scores are relatively low, which suggests that the clustering quality is preserved fairly well.
Specifically, sanitization does not affect at all the clustering quality (i.e., $NCE=0$) for approximately
$70\%$ of histograms when $|L'|=1$ and for approximately $20\%$ when $|L'|=10$. The median $NCE$ was $0$, $0$, $0.14$, and $0.3$ for $|L'|$ equal to $1$, $2$, $5$, and $10$, respectively.

\vspace{-1mm}
\noindent\paragraph{Recommendation quality} We investigate the impact of sanitization on recommendation quality by using test datasets of original histograms vs test datasets of histograms that are sanitized with different values of $|L'|$. Figures \ref{recom1a} and \ref{recom1b} show that $MAE$ and $RMSE$ are not substantially affected by sanitization, for all tested $|L'|$ values. The change in $MAE$ and $RMSE$ is on average $0.1\%$ and $2.4\%$, respectively. This suggests that recommendation quality is preserved fairly well.

\subsubsection{Runtime performance for the $LHO$ algorithm}

We evaluate the runtime performance of $LHO$ as a function of (I) $n$, histogram length, (II) $K$, total frequency of sensitive locations, and (III) $|L'|$, number of sensitive locations. To isolate the effect of each parameter, we vary just one and keep the other two fixed. We then examine the joint impact of all three parameters, which is given by the time complexity formula
$O\left((n-|L'|)\cdot K^2\cdot \log((n-|L'|)\cdot K^2)\right)$, because we used Dijkstra's algorithm with binary heap to find shortest paths
(see Section \ref{LHOalgorithmsectin}). For brevity, we use $\lambda$ to denote $(n-|L'|)\cdot K^2\cdot \log((n-|L'|)\cdot K^2)$. Thus, we expect the runtime 
to be linear in $\lambda$.
\begin{figure*}[ht!]\hspace{-2mm}
	\begin{subfigure}[b]{.23\textwidth}\centering
		\includegraphics[scale=0.32,clip]{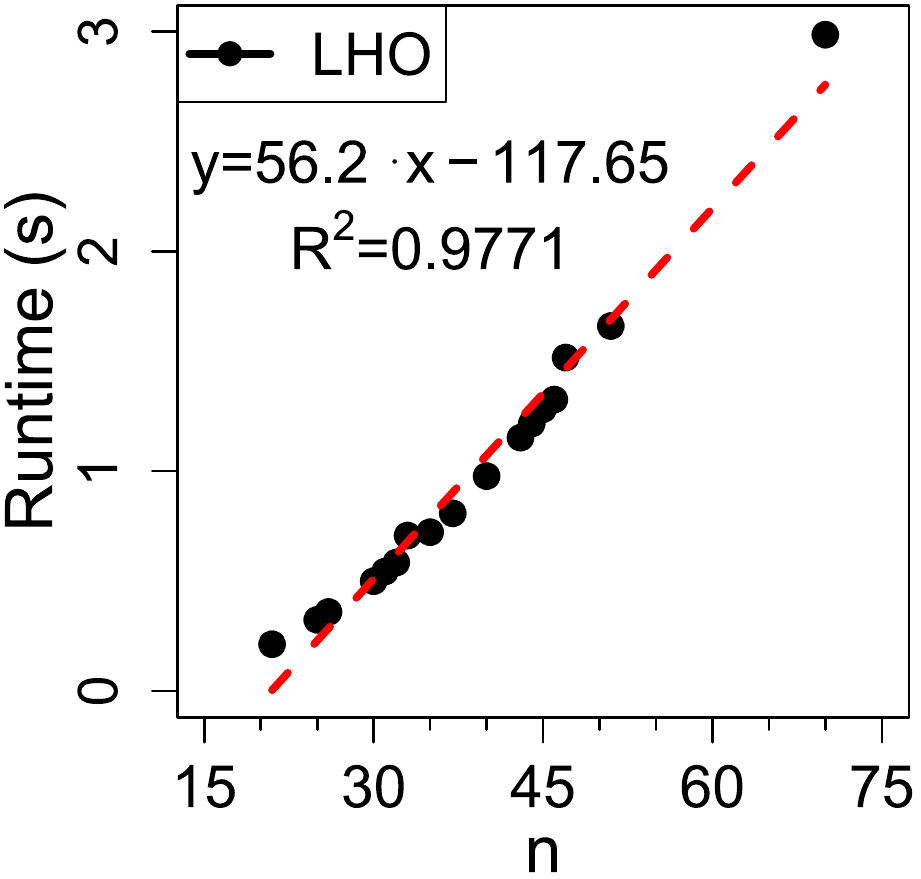}
		\caption{NYC}\label{fig5a}
	\end{subfigure}\hspace{+1mm}
	\begin{subfigure}[b]{.23\textwidth}\centering
		\includegraphics[scale=0.32,clip]{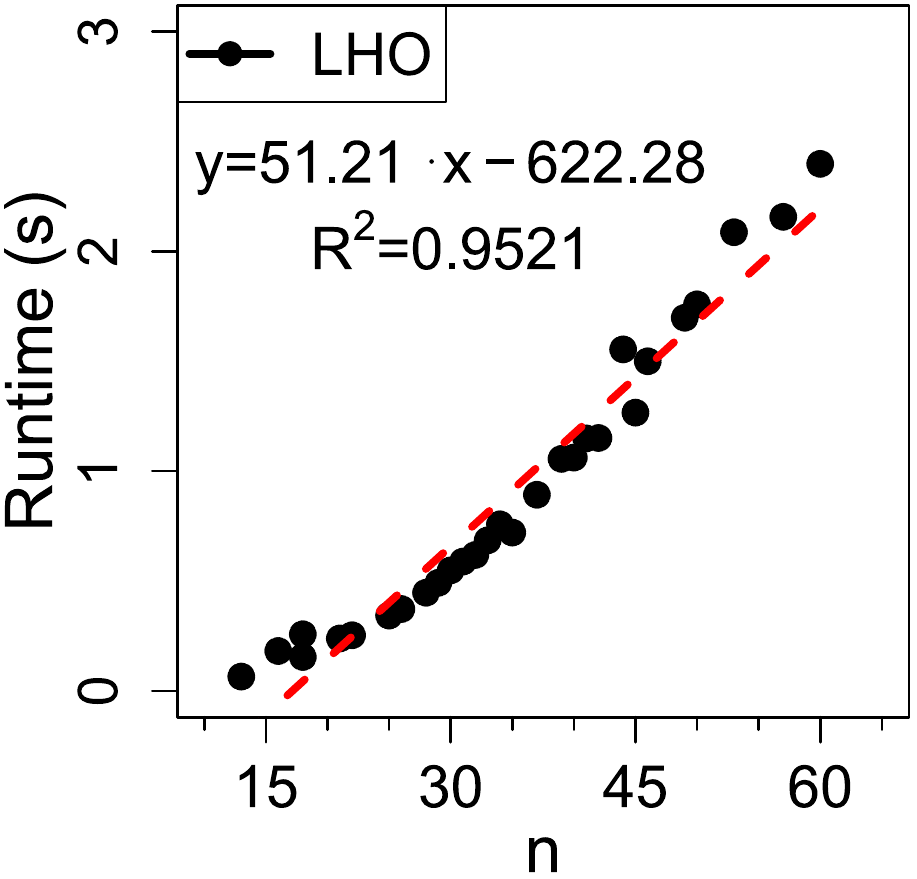}
		\caption{TKY}\label{fig5b}
	\end{subfigure}\hspace{+1mm}
	\begin{subfigure}[b]{.24\textwidth}\centering
		\includegraphics[scale=0.32,clip]{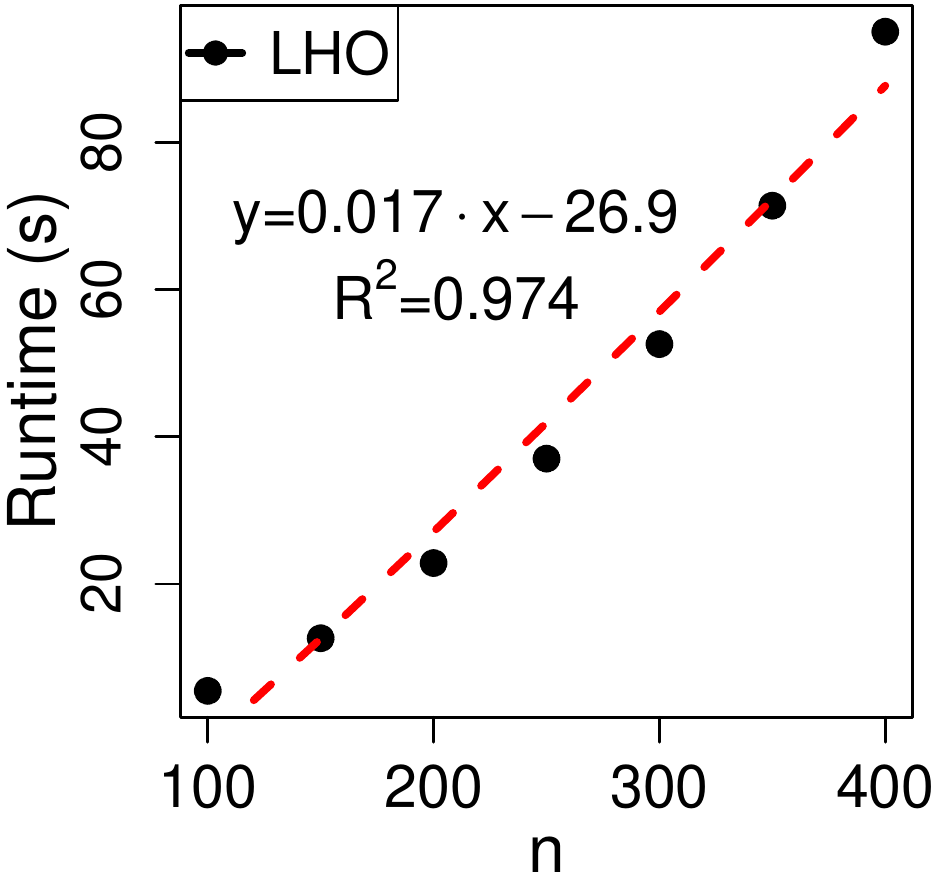}
		\caption{NYC}\label{figsyn1}
	\end{subfigure}\hspace{+1mm}
	\begin{subfigure}[b]{.24\textwidth}\centering
		\includegraphics[scale=0.32,clip]{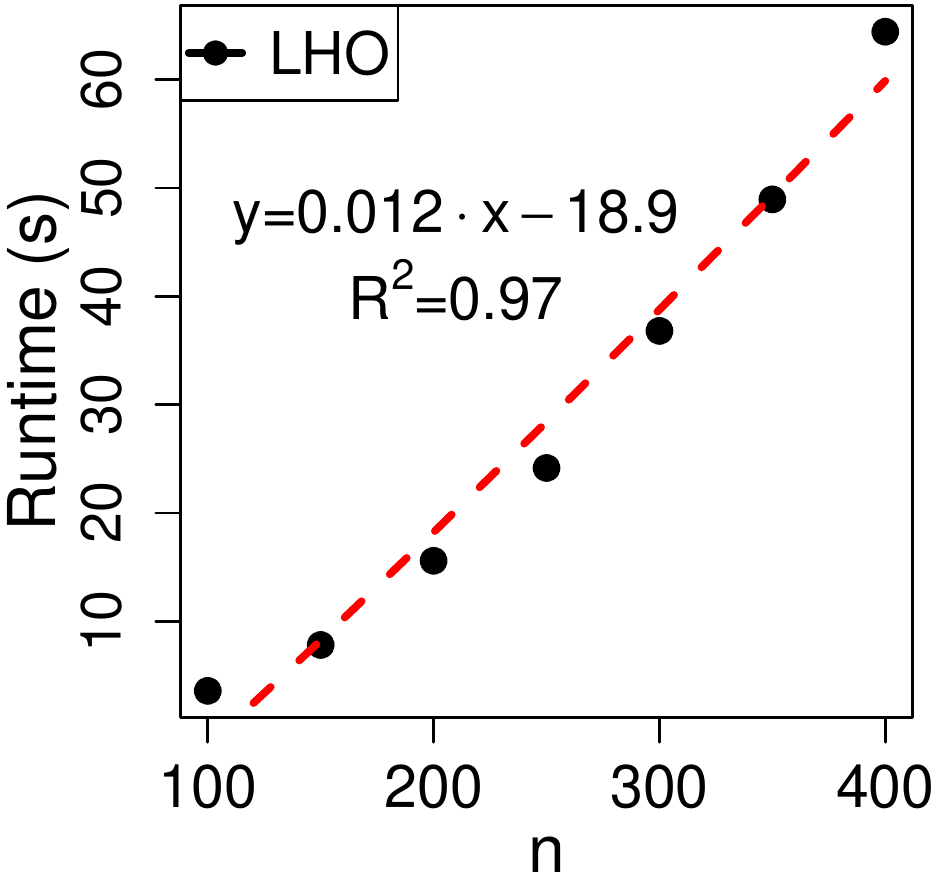}
		\caption{TKY}\label{figsyn2}
	\end{subfigure}
	\caption{Runtime vs length $n$, for each histogram with $K=20$ in: (a) \emph{NYC}, and (b) \emph{TKY}.
		Runtime vs length $n$, for synthetic histograms with varying $n$, $K=20$, and: (a) N = 192, (b) N = 642}\label{fig5}
\end{figure*}

\begin{wrapfigure}[13]{L}{.5\textwidth}
	\begin{subfigure}[b]{.24\textwidth}\centering
		\includegraphics[scale=0.32,clip]{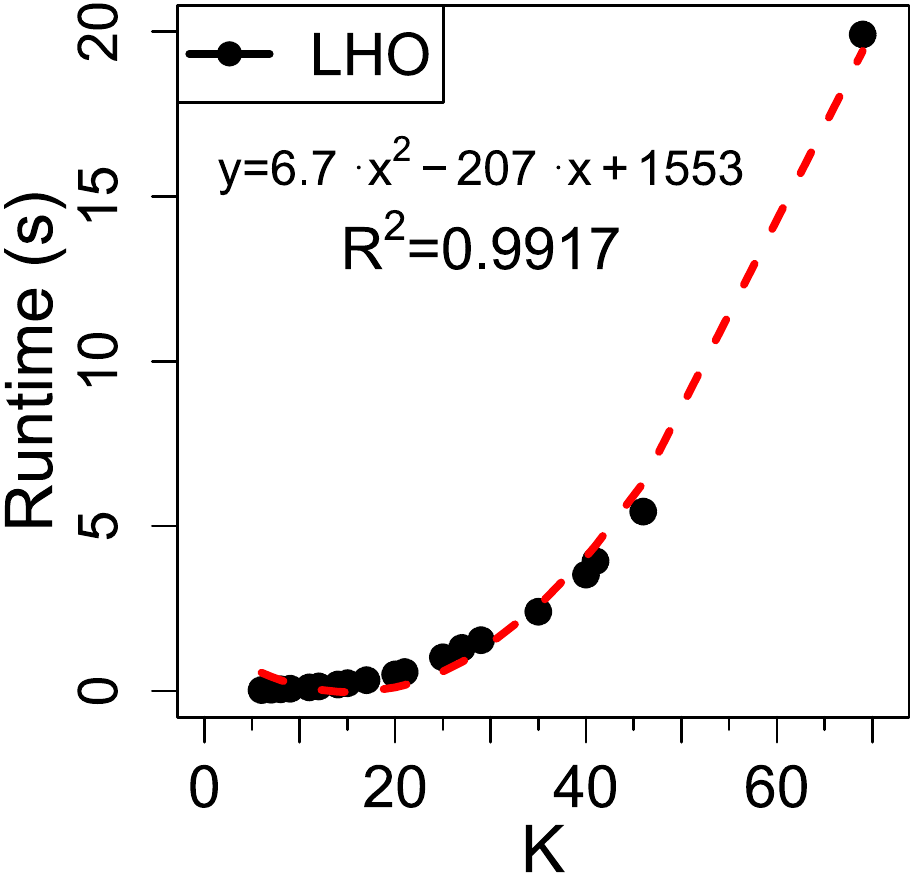}
		\caption{NYC}\label{fig5c}
	\end{subfigure}\hspace{+1mm}
	\begin{subfigure}[b]{.24\textwidth}\centering
		\includegraphics[scale=0.32,clip]{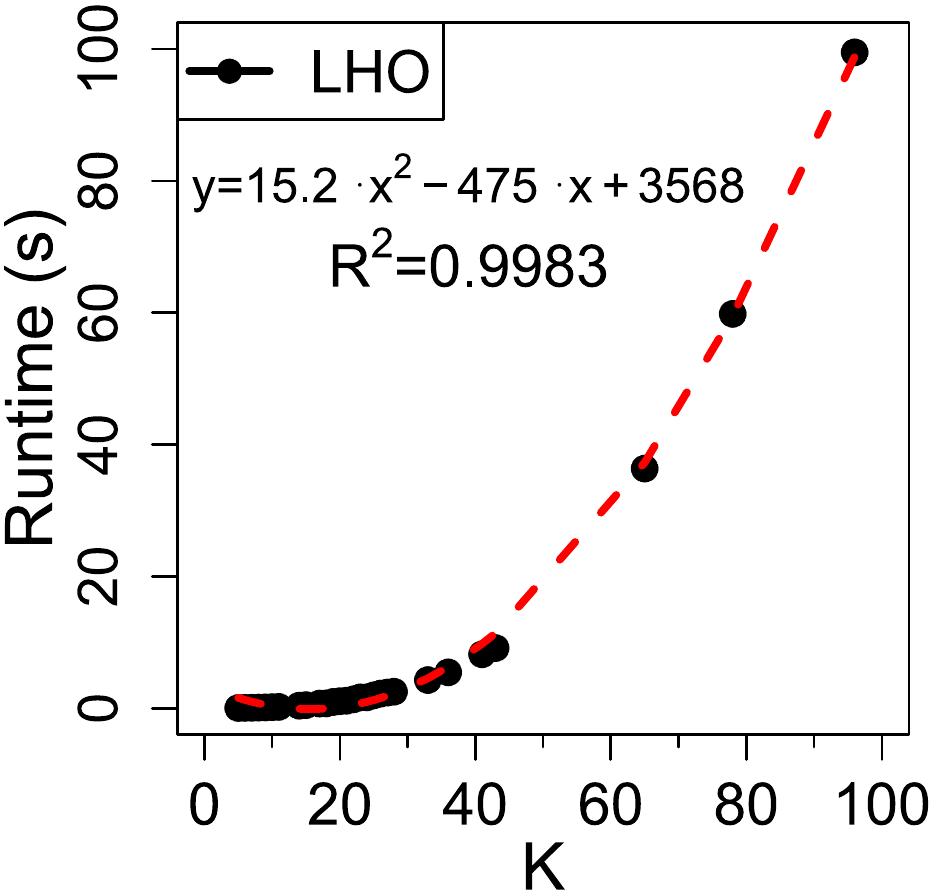}
		\caption{TKY}\label{fig5d}
	\end{subfigure}
	\vspace{-1mm}
	\caption{Runtime vs total frequency of sensitive locations $K$, for each histogram with: (a) $n=30$ in \emph{NYC}, and (b) $n=40$ in \emph{TKY}}\label{fig5vsK}
\end{wrapfigure}

\vspace{-1mm}
\noindent\paragraph{Impact of length $n$} We show that runtime increases with $n$, in \figsref{fig5a} and \ref{fig5b}. This is because, when $n$ is larger, there are more bins into which the counts may be redistributed.
More bins means that the multipartite graph $G_{TR}$, created by the $LHO$ algorithm, has more layers (and consequently more nodes and edges).
Note also that runtime increases linearly with $n$ (i.e., the linear regression models in \figsref{fig5a} and \ref{fig5b} are good fit), as expected by the time complexity analysis
(see Section \ref{LHOalgorithmsectin}), and that the algorithm took less than $3$ seconds. We also show that runtime increases linearly with $n$ when the algorithm is applied to the synthetic histograms, which are more demanding to sanitize (see \figsref{figsyn1} and \ref{figsyn2}).

\vspace{-1mm}
\noindent\paragraph{Impact of total frequency of sensitive locations $K$} We show that runtime increases with $K$, in \figsref{fig5c} and \ref{fig5d}. This is because there are more counts that are redistributed into the bins of
nonsensitive locations when $K$ is larger. That is, the graph $G_{TR}$ contains more edges and nodes. 
Note also that the runtime increases approximately quadratically with $K$ (i.e., the quadratic regression models in \figsref{fig5c} and \ref{fig5d} are good fit), as expected by the time complexity analysis (see Section \ref{LHOalgorithmsectin}), and that the algorithm took less than $100$ seconds.

\vspace{-1mm}
\noindent\paragraph{Impact of number of sensitive locations $|L'|$} We show that runtime increases with $|L'|$, in \figsref{fig6a} and \ref{fig6b}, which report results for each histogram in NYC and TKY, respectively.
This is because there are (I) more counts that need to be redistributed into the bins of the nonsensitive locations, and (II) fewer bins to which the counts may be redistributed to, and, as demonstrated above,
the impact of more counts on runtime is larger than that of fewer bins (quadratic increase vs linear decrease).
For example, $95\%$ of the histograms in the \emph{NYC} dataset take less than $1$ second to be sanitized
when $|L'| = 1$, but the corresponding percentage was $25\%$ when $|L'| = 10$.
However, the algorithm remains relatively efficient even for $|L'| = 10$, with $99\%$ of the histograms in \emph{NYC} requiring less than $5$ minutes to be sanitized.

\vspace{-1mm}
\noindent\paragraph{Joint impact of $n$, $K$, $|L'|$} In \figsref{fig6c} and \ref{fig6d}, we report results for all histograms in NYC and TKY, respectively.
Note that runtime increases linearly with $\lambda=(n-|L'|)\cdot K^2\cdot log((n-|L'|)\cdot K^2)$ (i.e., the linear regression models are good fit). This is in line with the time complexity analysis (see Section \ref{LHOalgorithmsectin}).
\begin{figure*}[ht!]\hspace{-2mm}
	\begin{subfigure}[b]{.24\textwidth}\centering
		\includegraphics[scale=0.32,clip]{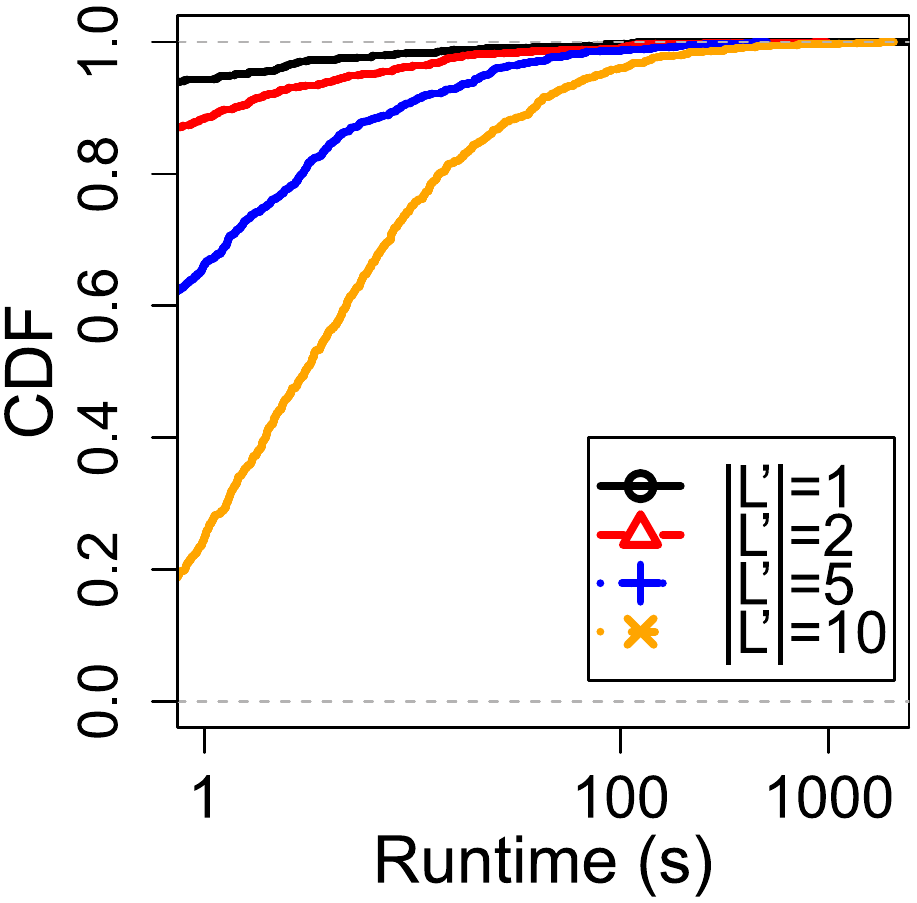}
		\caption{NYC}\label{fig6a}
	\end{subfigure}\hspace{+0.5mm}
	\begin{subfigure}[b]{.24\textwidth}\centering
		\includegraphics[scale=0.32,clip]{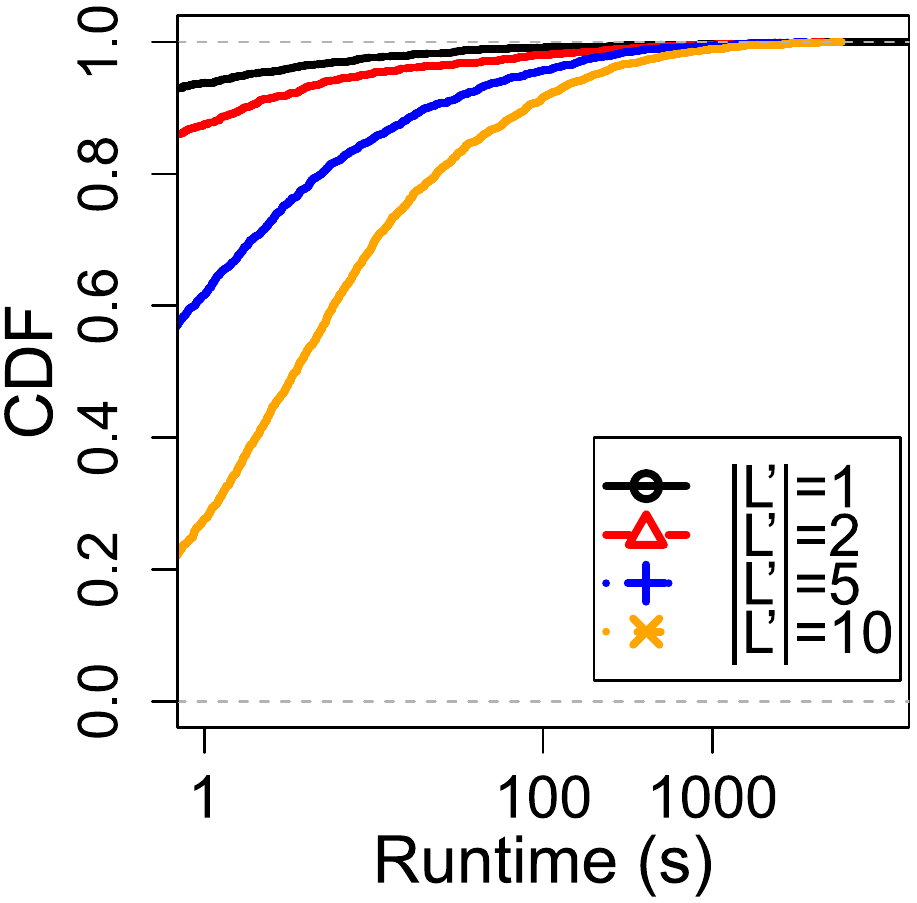}
		\caption{TKY}\label{fig6b}
	\end{subfigure}\hspace{+0.5mm}
	\begin{subfigure}[b]{.25\textwidth}\centering
		\includegraphics[scale=0.32,clip]{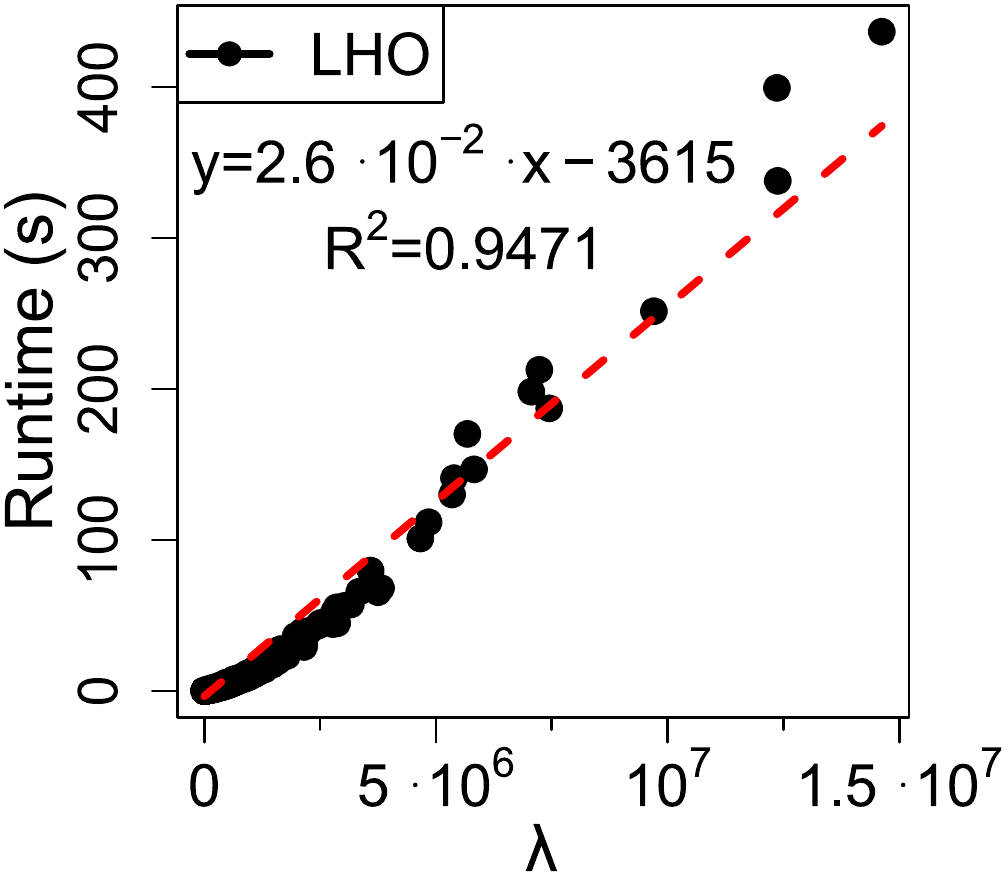}
		\caption{NYC}\label{fig6c}
	\end{subfigure}\hspace{+0.5mm}
	\begin{subfigure}[b]{.25\textwidth}\centering
		\includegraphics[scale=0.32,clip]{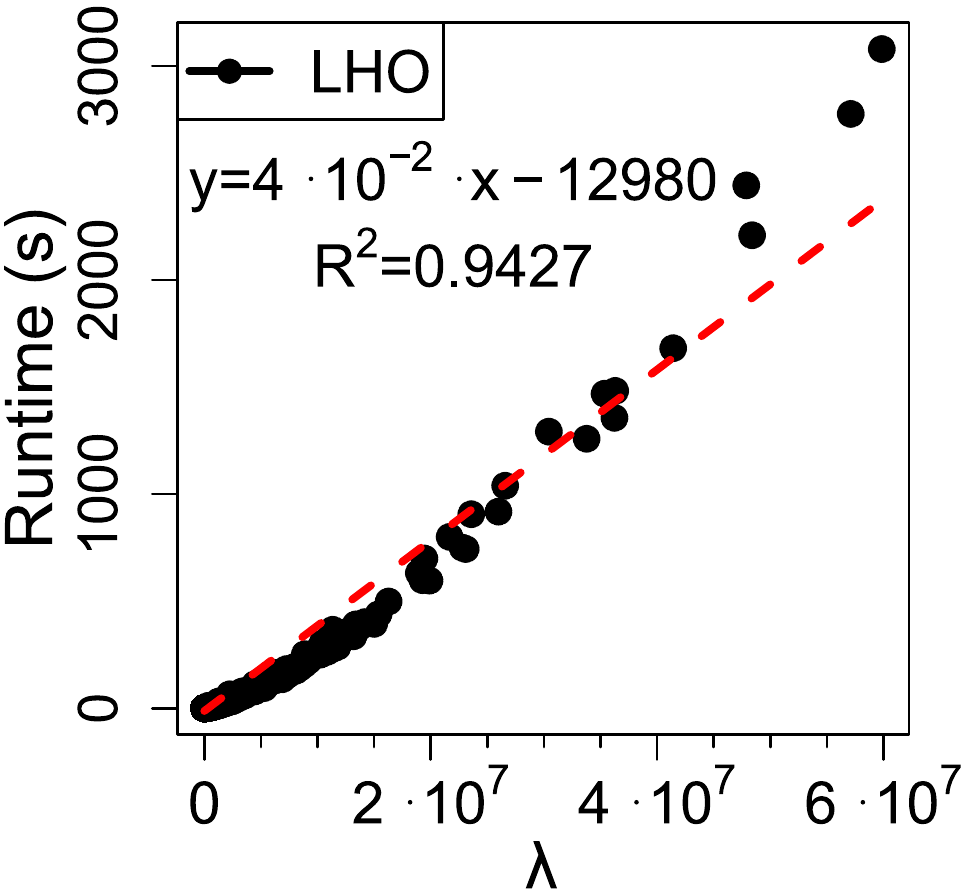}
		\caption{TKY}\label{fig6d}
	\end{subfigure}
	\caption{Cumulative Distribution Function of runtime (i.e., ratio of histograms with runtime at most equal to a score in $x$ axis) for varying $|L'|$ for: (a) each histogram in \emph{NYC}, and (b) each histogram in \emph{TKY}.
		Runtime vs $\lambda=(n-|L'|)\cdot K^2\cdot log((n-|L'|)\cdot K^2)$ (i.e., joint impact of $n$, $K$, $|L'|$ according to the time complexity analysis) for: (c) each histogram in \emph{NYC}, and (d) each histogram in \emph{TKY}}
\end{figure*}

\subsection{Target Resemblance}

We evaluate the quality and runtime of $RO$ and $RH$, as a function of (I) $n$, the length of original histogram, (II) $N$, the size (i.e., total frequency of locations) of the histogram, and
(III) $\epsilon$, the quality threshold. We additionally examine the impact of the target histogram $H''$ on the runtime, as well as the runtime of $RO$ and $RH$ when applied to histograms with large (up to the maximum possible) length.
To measure quality, we use JS-divergence, $L_2$ distance, and $NCE$. The results for the $L_2$ distance are in Appendix \ref{L2appendixRORH}, because they
are similar to those for JS-divergence. 
Unless stated otherwise, the target histogram for a histogram $H$ is a ``uniform'' histogram $H''$,
such that $H''$ has the same size, $N$, and length, $n$, as $H$, and each count of $H''$ is approximately equal to  $\frac{N}{n}$.
Aiming to resemble a uniform histogram indicates a user with strong privacy requirements, since the uniform distribution has the maximum entropy (i.e., provides the least information about the frequencies in $H$ to an attacker with no knowledge except $N$ and $n$). Moreover, uniform target histograms are difficult to resemble, because the original histograms typically follow skewed distributions.

\subsubsection{Quality and Privacy for the $RO$ algorithm and the $RH$ heuristic}

\noindent \paragraph{Impact of length $n$} To illustrate the impact of $n$ on quality and privacy, we present results obtained for randomly selected histograms of varying $n$ and $N=100$. We do not report the median of all histograms of certain $n$, because the results followed skewed distributions
(e.g., the runtime for histograms with $n=26$ and $N=100$ varied from $2.5$ to $45$ seconds).

We show that the privacy measure $d_p$ (JS-divergence) decreases with $n$, in \figsref{fig7a} and \ref{fig7b}.
This is because the larger number of bins gives more choices to the algorithm to reduce $d_p$ without substantially increasing $d_q$.
In addition, \figsref{fig7a} and \ref{fig7b} show that $RO$ and $RH$ achieve very similar results: the $d_p$ values for $RH$ were no more than $1\%$ and $2.1\%$ higher for \emph{NYC} and \emph{TKY}, respectively. This suggests that $RH$ is an effective heuristic.

\begin{figure*}[ht!]\hspace{-2mm}
	\begin{subfigure}[b]{.24\textwidth}\centering
		\includegraphics[scale=0.26,clip]{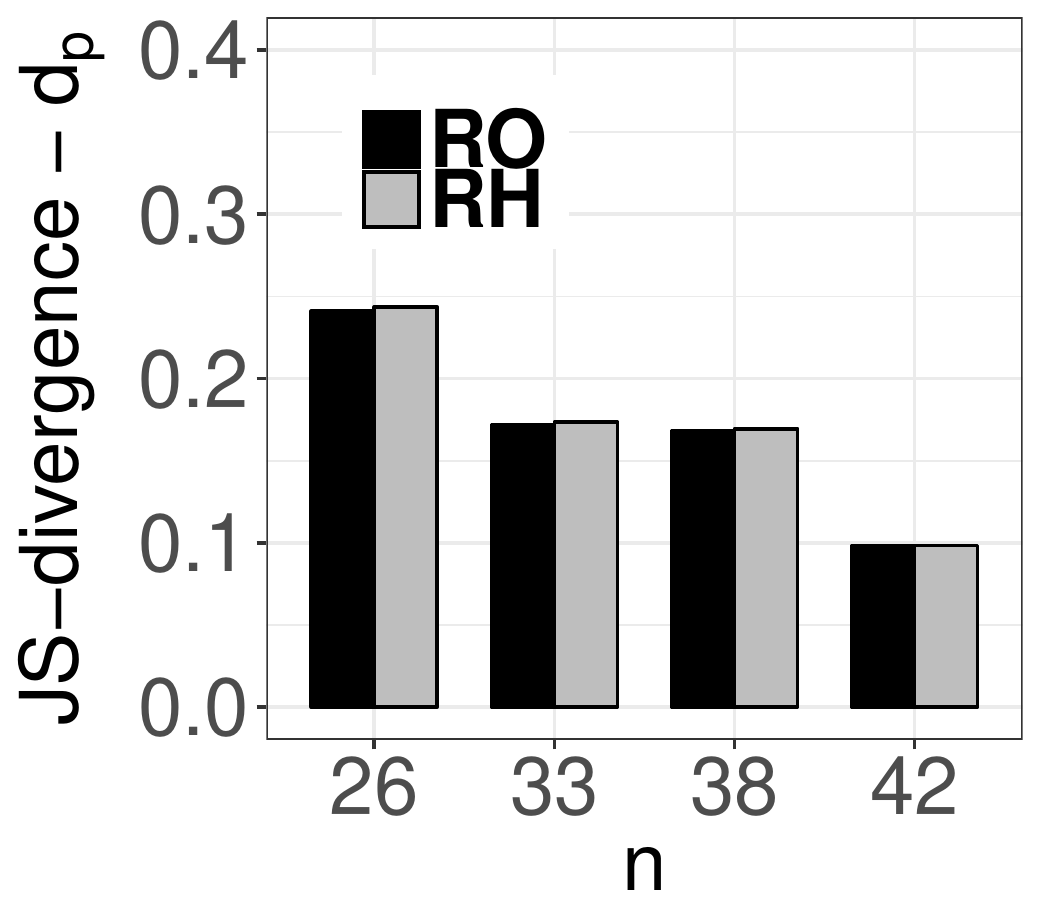}
		\caption{NYC}\label{fig7a}
	\end{subfigure}\hspace{+1mm}
	\begin{subfigure}[b]{.24\textwidth}\centering
		\includegraphics[scale=0.26,clip]{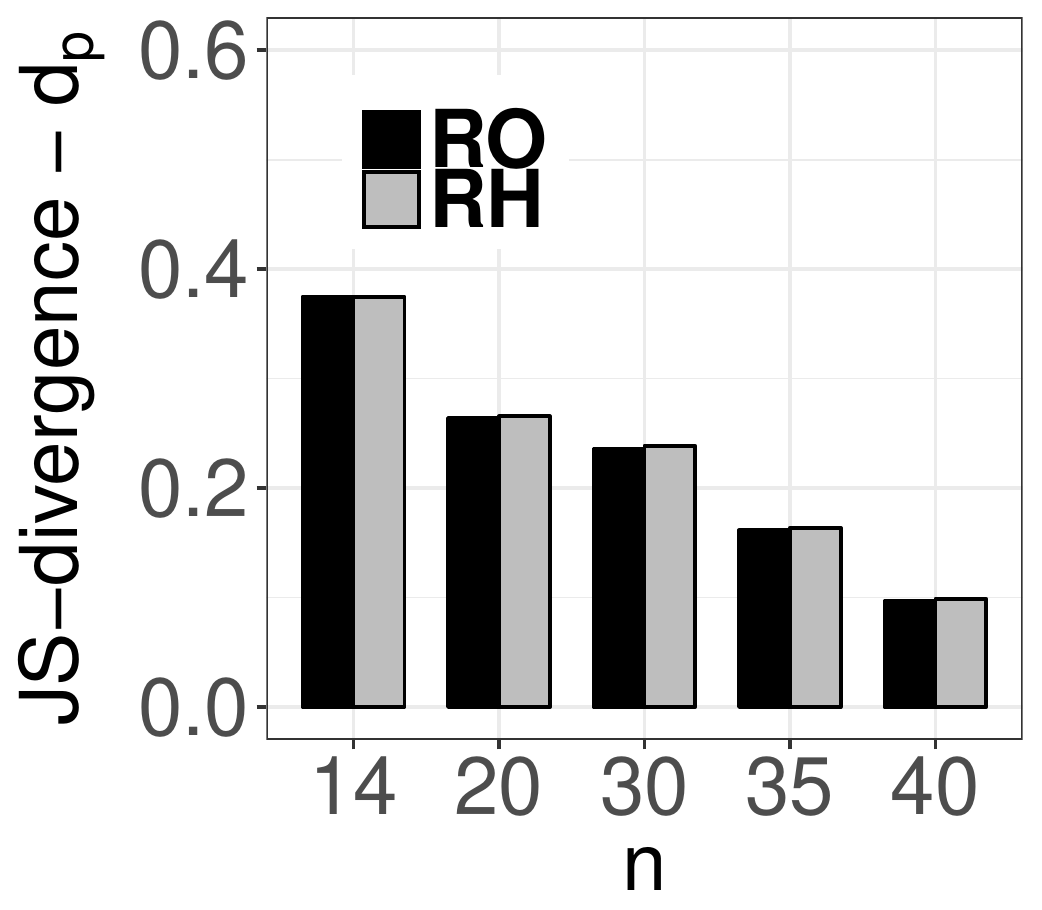}
		\caption{TKY}\label{fig7b}
	\end{subfigure}\hspace{+1mm}
	\begin{subfigure}[b]{.24\textwidth}\centering
		\includegraphics[scale=0.26,clip]{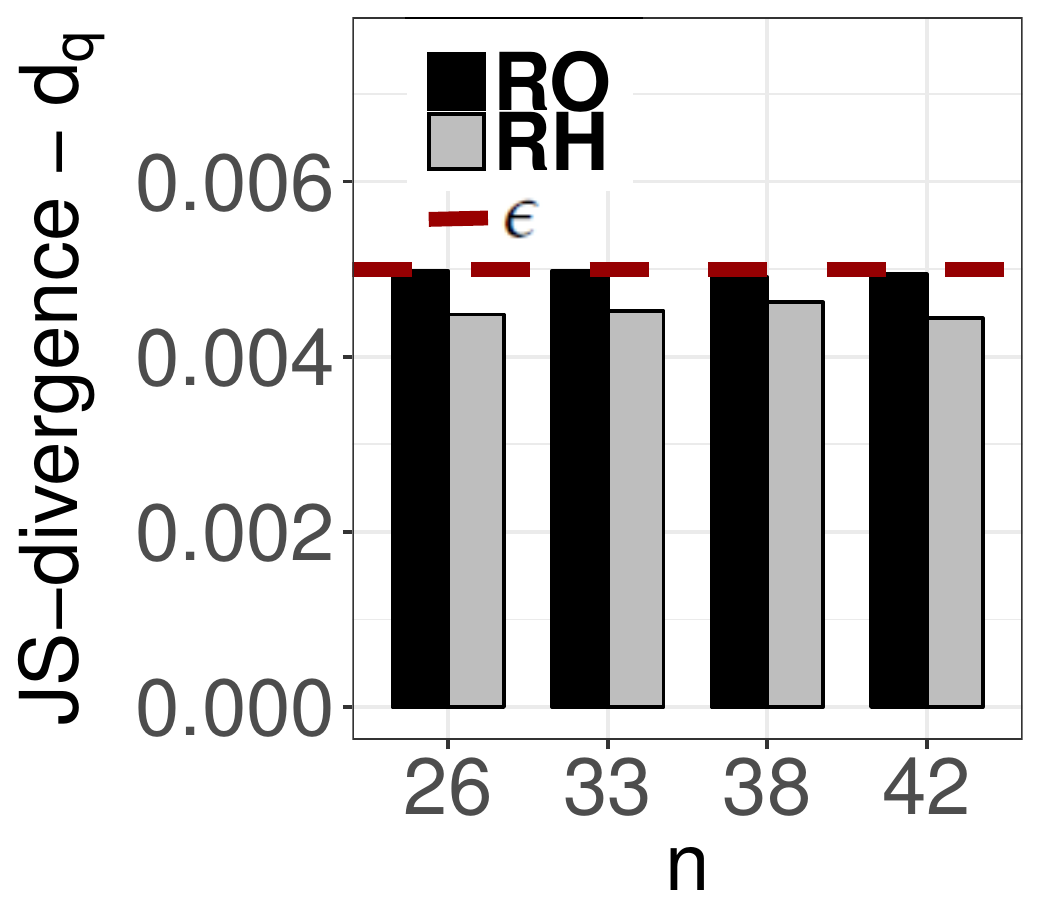}
		\caption{NYC}\label{fig7c}
	\end{subfigure}\hspace{+1mm}
	\begin{subfigure}[b]{.24\textwidth}\centering
		\includegraphics[scale=0.26,clip]{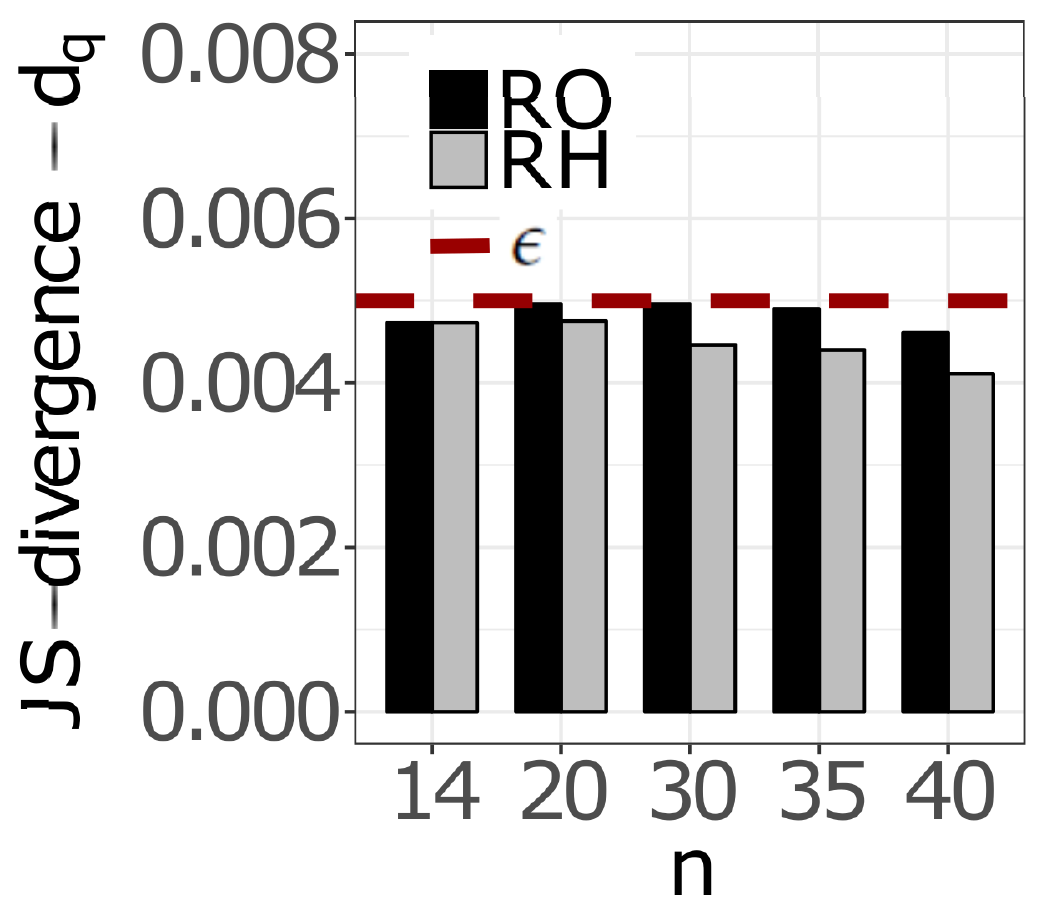}
		\caption{TKY}\label{fig7d}
	\end{subfigure}
	\caption{$d_p$ vs length $n$ for histograms with $N=100$ in: (a) \emph{NYC}. (b) \emph{TKY}. $d_q$ vs length $n$ for histograms with $N=100$ in: (c) \emph{NYC}. (d) \emph{TKY}}
\end{figure*}

We also show that the quality measure $d_q$ (JS-divergence) is not affected by $n$ and, as expected, it does not exceed the threshold $\epsilon$, in \figsref{fig7c} and \ref{fig7d}.
$RO$ finds solutions with larger $d_q$ than $RH$. This is because $RH$ works in a greedy fashion. That is, the initial bins are sanitized heavily, which increases $d_q$ and does not leave
much room for sanitizing the subsequent bins without exceeding $\epsilon$.

In addition, we show the impact of $n$ on $NCE$, in \figsref{fig8a} and \ref{fig8b}. Both algorithms achieve similar scores, because they
aim to optimize $d_p$ with constraint $d_q\leq \epsilon$ and achieve similar results with respect to $d_q$ (see \figsref{fig7c} and \ref{fig7d}).
The scores were zero (no quality loss), or low, which suggests that both $RO$ and $RH$ are able to preserve clustering quality. 

\begin{figure*}[ht!]\hspace{-2mm}
	\begin{subfigure}[b]{.24\textwidth}\centering
		\includegraphics[scale=0.25,clip]{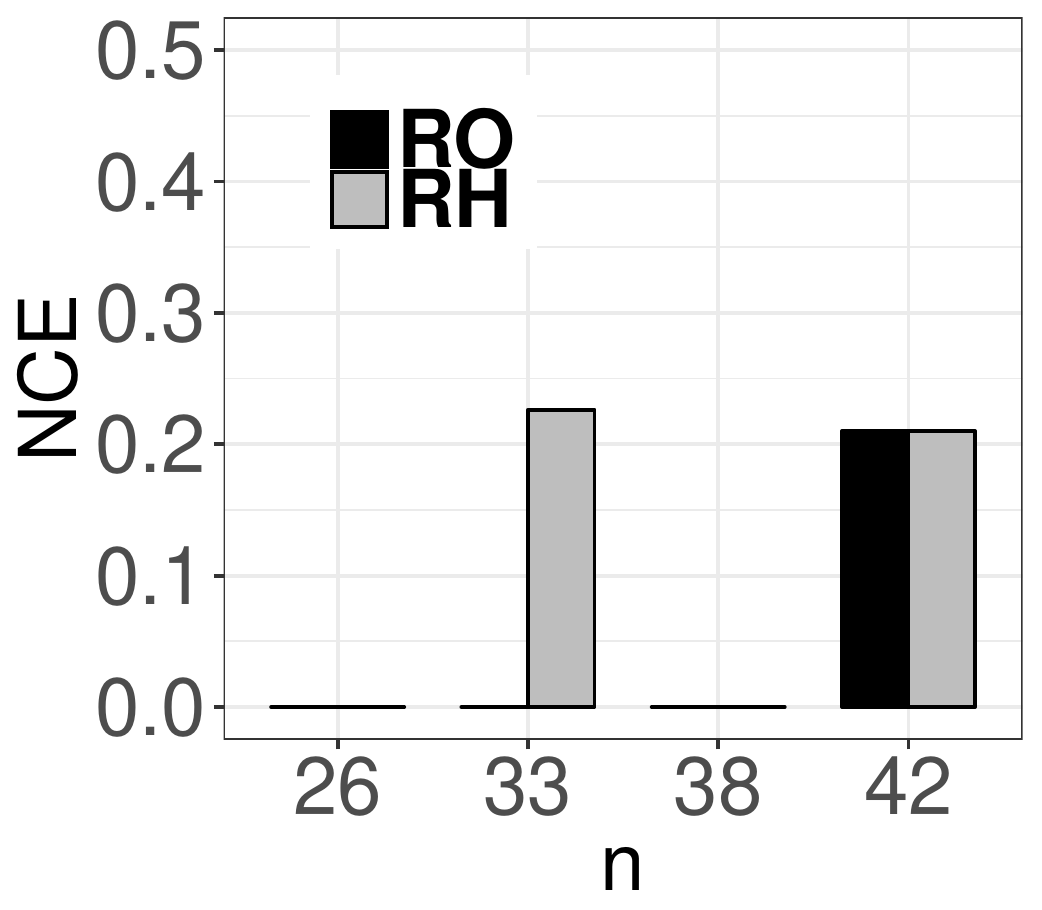}
		\caption{NYC}\label{fig8a}
	\end{subfigure}\hspace{+1mm}
	\begin{subfigure}[b]{.24\textwidth}\centering
		\includegraphics[scale=0.25,clip]{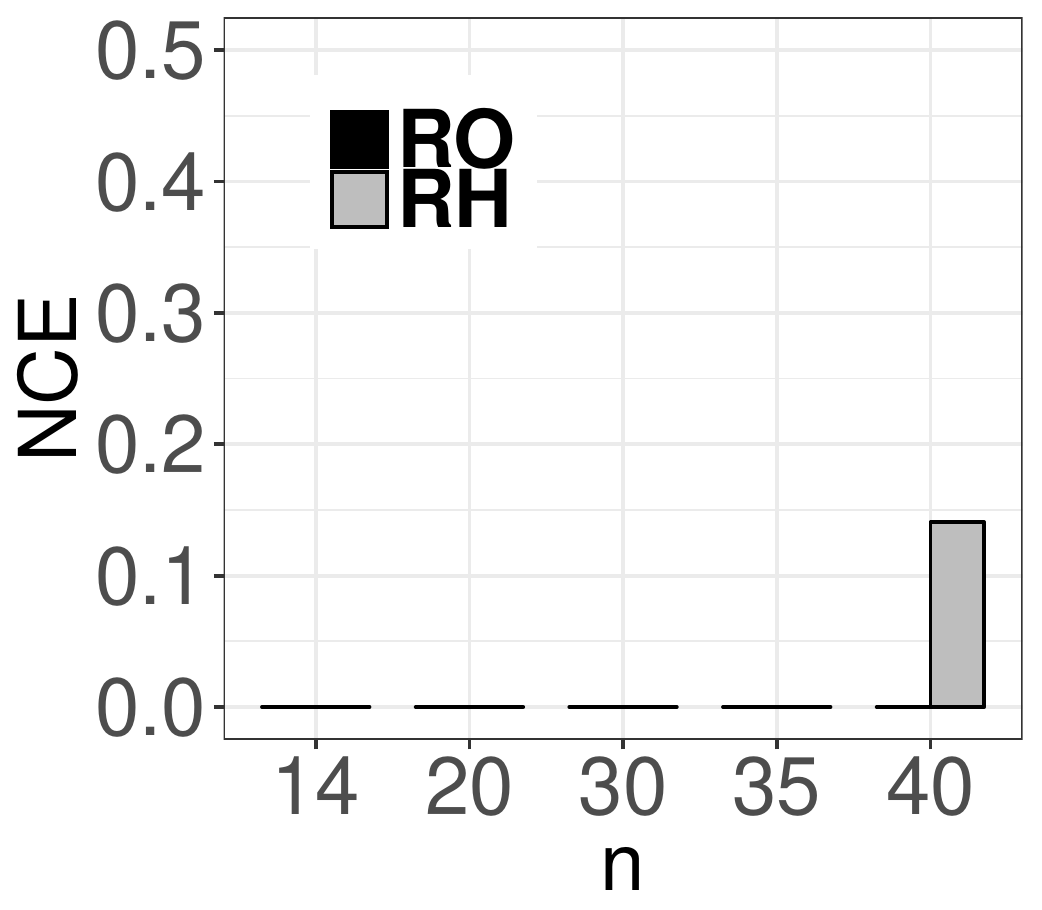}
		\caption{TKY}\label{fig8b}
	\end{subfigure}\hspace{+1mm}
	\begin{subfigure}[b]{.24\textwidth}\centering
		\includegraphics[scale=0.25,clip]{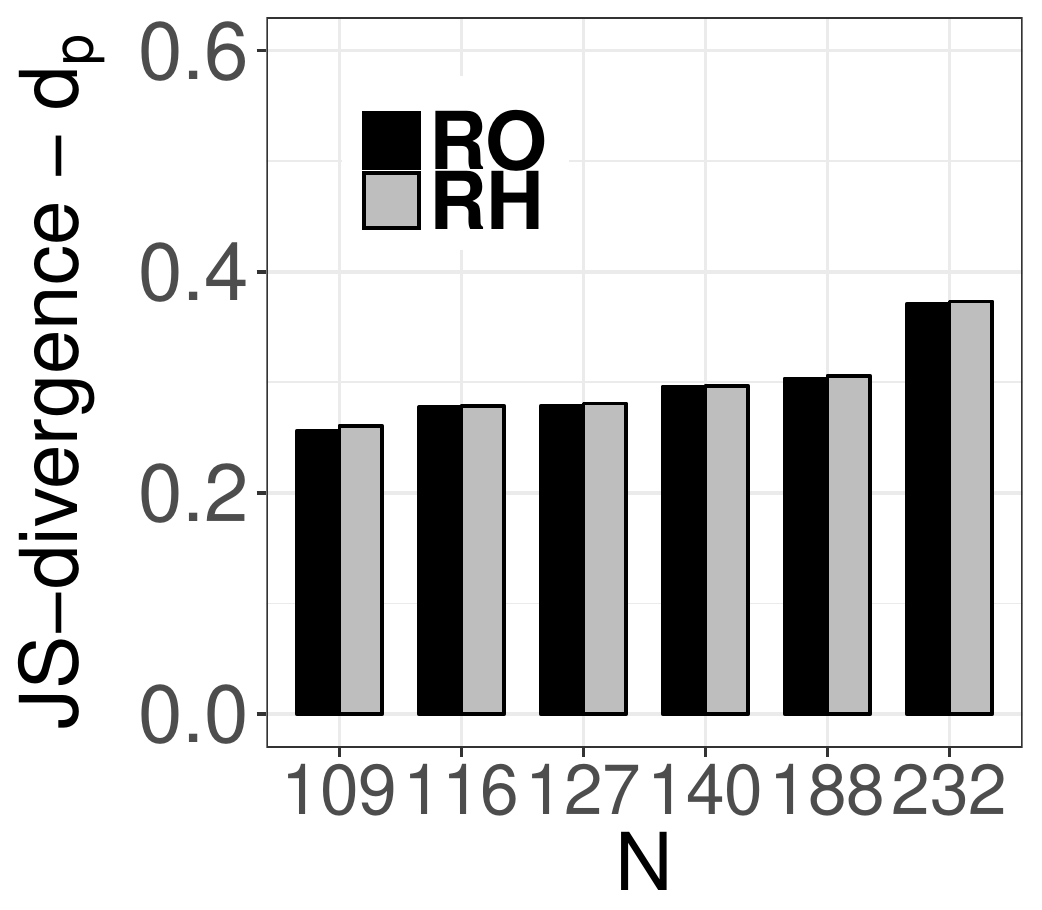}
		\caption{NYC}\label{fig8c}
	\end{subfigure}\hspace{+1mm}
	\begin{subfigure}[b]{.24\textwidth}\centering
		\includegraphics[scale=0.25,clip]{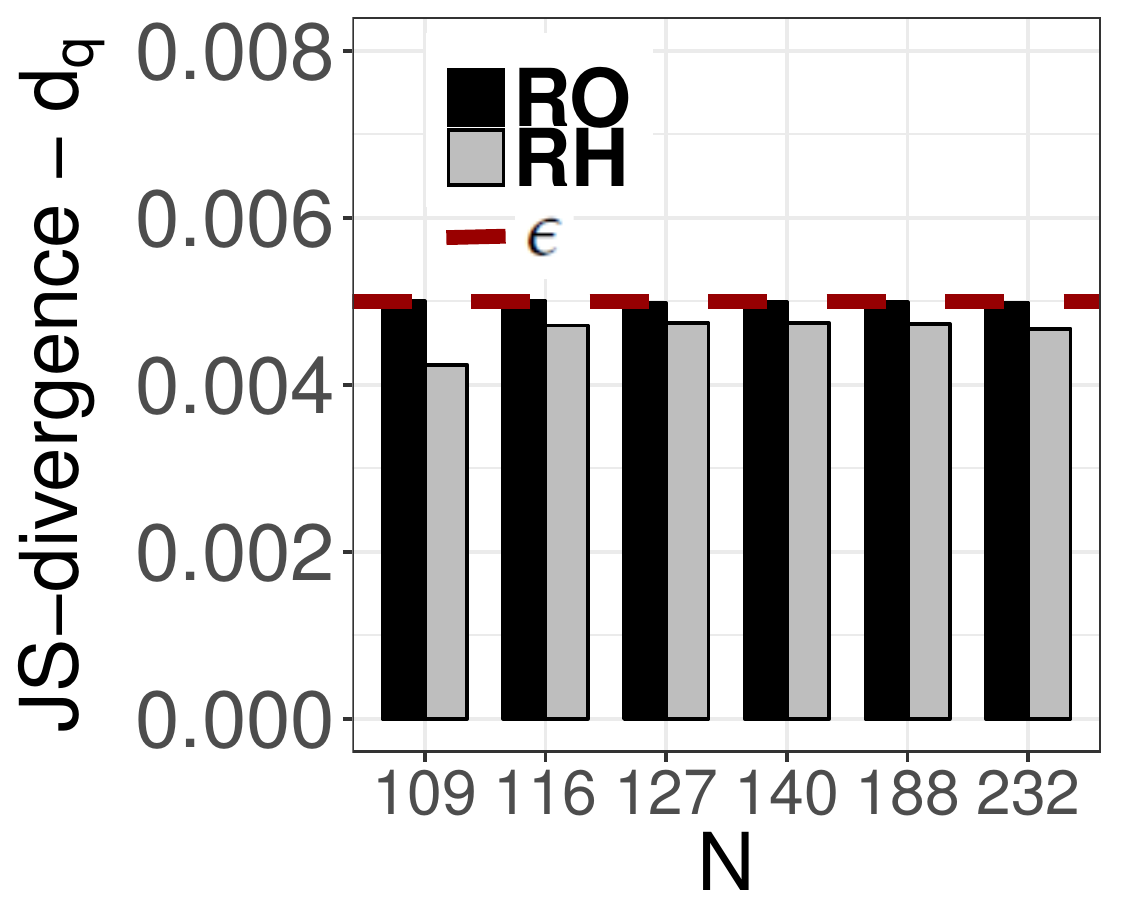}
		\caption{NYC}\label{fig8d}
	\end{subfigure}
	\caption{NCE vs length $n$ for histograms with $N=100$ in: (a) \emph{NYC}. (b) \emph{TKY}.
		(c) $d_p$ vs size $N$, and (d) $d_q$ vs size $N$, for histograms with $n=25$ in \emph{NYC}}
\end{figure*}

\noindent \paragraph{Impact of size $N$}  To illustrate the impact of $N$ on quality and privacy, we present results obtained for randomly selected histograms of varying $N$ with $n=25$ for \emph{NYC}. The results for \emph{TKY} are qualitatively similar (omitted). We do not report the median of all histograms of certain $N$, because the results follow skewed distributions.

We show that the privacy measure $d_p$ (JS-divergence) increases with $N$, in \figref{fig8c}. This is because there are more counts that need to change (increase or decrease) to minimize $d_p$ subject to
$d_q\leq \epsilon$. The results for $RH$ are very close to those for $RO$; the $d_p$ scores for $RH$ are no more than $1.8\%$ larger. This suggests that $RH$ is an effective heuristic.

We also show that the quality measure $d_q$ (JS-divergence) is not affected by $N$ and that it does not exceed the threshold $\epsilon$, in \figref{fig8d}.
Again, $RO$ finds solutions with larger $d_q$ than $RH$. This is  because, due to its greedy nature, $RH$ sanitizes heavily the first bins,
which increases $d_q$ and prevents the sanitization of subsequent bins without exceeding $\epsilon$.

\begin{wrapfigure}[9]{L}{.4\textwidth}
	\centering
	\includegraphics[scale=0.27,clip]{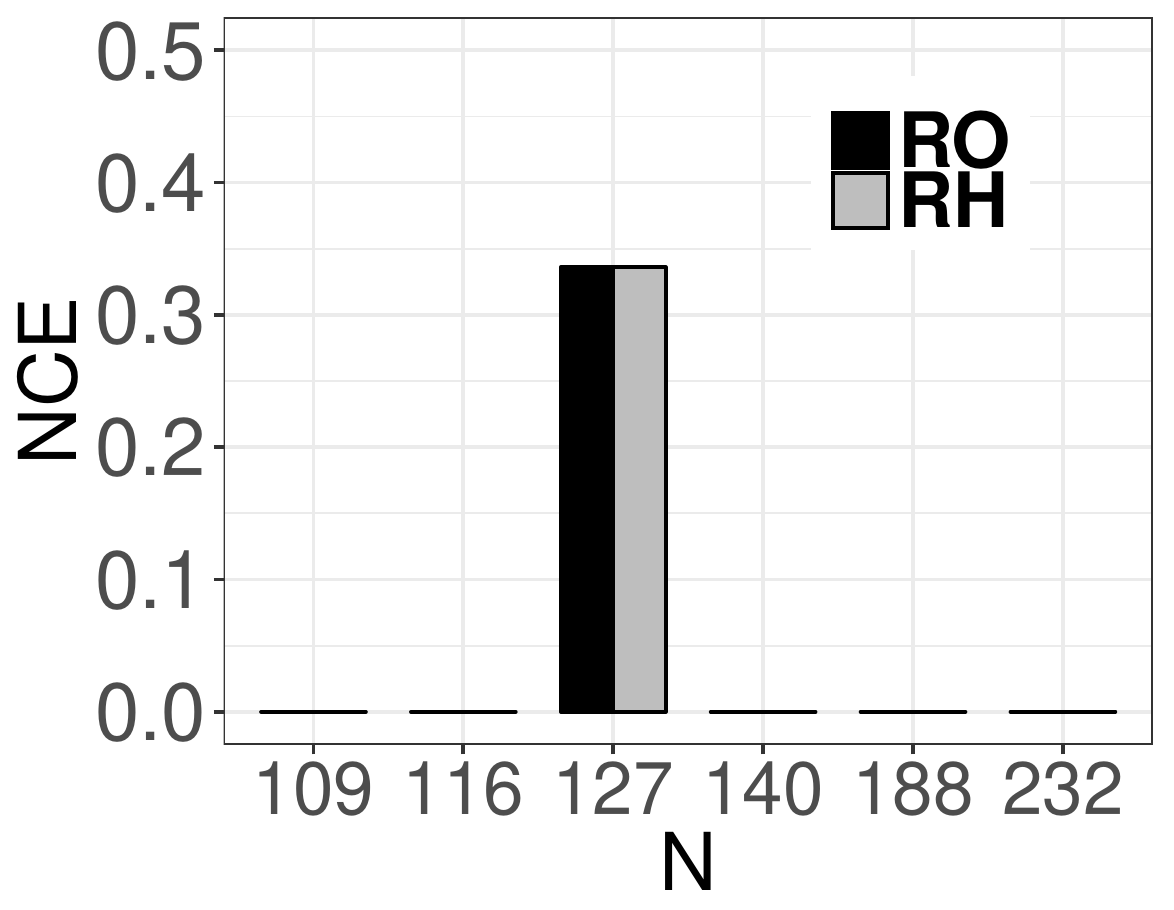}
	\caption{$NCE$ vs size $N$ for histograms with $n=25$ in \emph{NYC}}\label{NCEvsNk3}
\end{wrapfigure}

In addition, we show the impact of $N$ on $NCE$, in \figref{NCEvsNk3}. Both algorithms achieve similar scores, because they aim to optimize $d_p$ with constraint $d_q\leq \epsilon$ and achieve similar results with respect to
$d_q$ (see \figref{fig8d}). Their scores were low or zero. Thus, they are able to preserve clustering quality. 

\noindent\paragraph{Impact of threshold $\epsilon$} To illustrate the impact of $\epsilon$ on quality and privacy, we present results obtained for a histogram with $n=40$ and $N=100$ in \emph{NYC}.
The results for \emph{TKY} are similar (omitted).

We show that the privacy measure $d_p$ (JS-divergence) decreases with $\epsilon$, in \figref{fig9a}.
This is because both $RO$ and $RH$ consider a larger space of possible solutions when $\epsilon$ is larger, and thus they are able to find a better solution with respect to $d_p$.
In addition, the results for $RH$ and $RO$ are very similar; the $d_p$ for $RH$ is at most $2.4\%$ (on average $0.5\%$) higher than that for $RO$.

\begin{figure*}[ht!]\hspace{-2mm}
	\begin{subfigure}[b]{.32\textwidth}\centering
		\includegraphics[scale=0.26,clip]{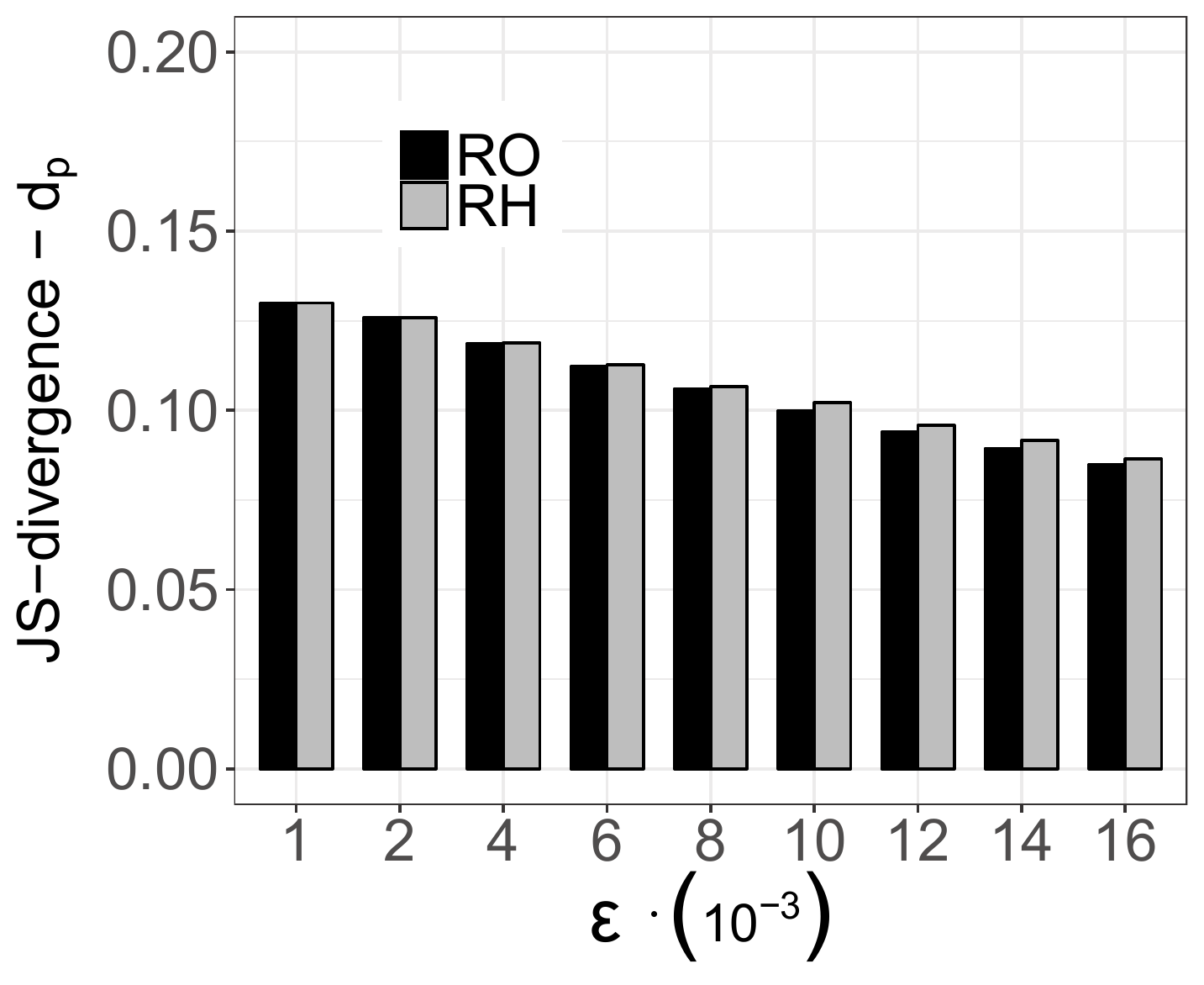}
		\caption{}\label{fig9a}
	\end{subfigure}\hspace{+1mm}
	\begin{subfigure}[b]{.32\textwidth}\centering
		\includegraphics[scale=0.26,clip]{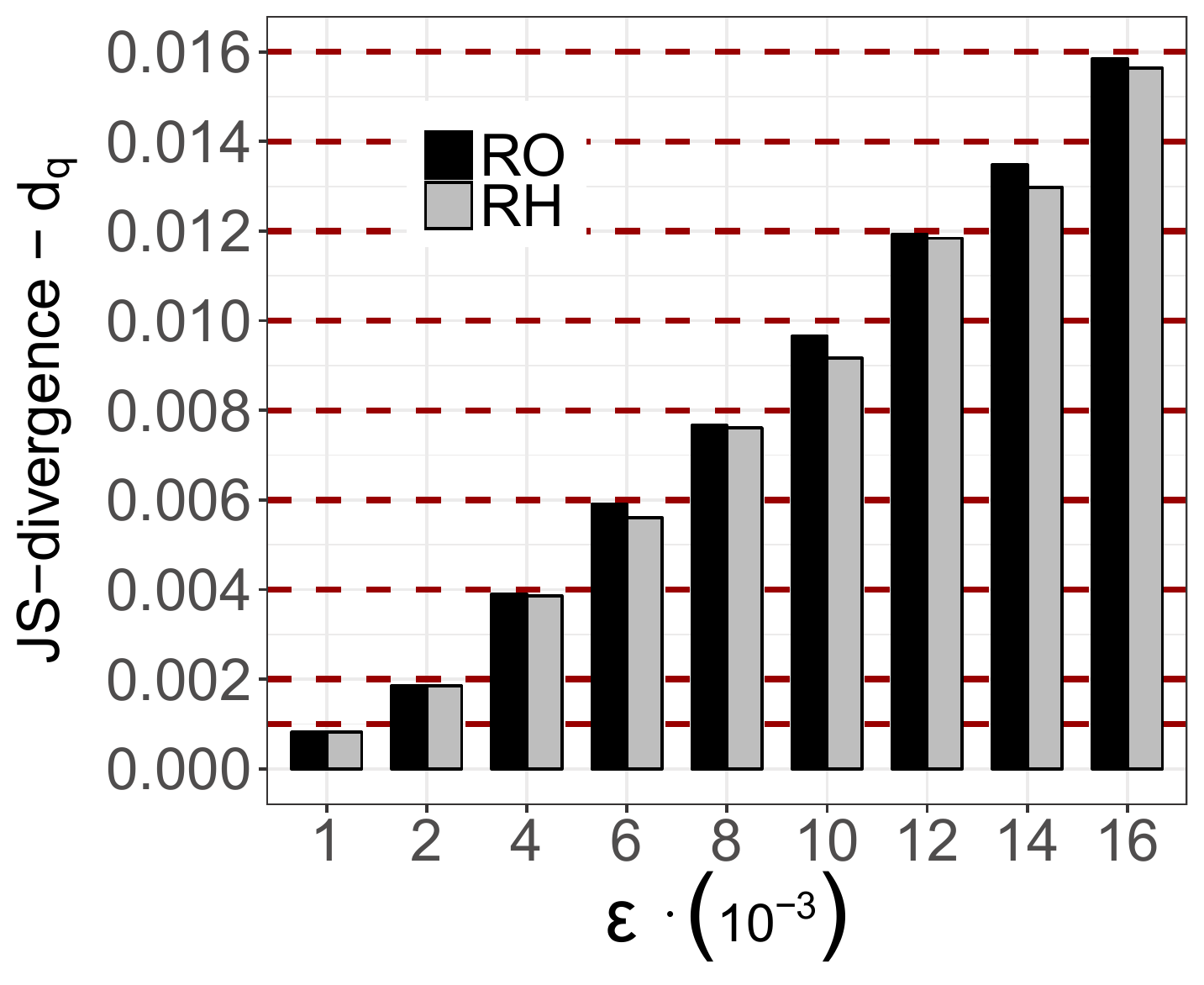}
		\caption{}\label{fig9b}
	\end{subfigure}\hspace{+1mm}
	\begin{subfigure}[b]{.32\textwidth}\centering
		\includegraphics[scale=0.27,clip]{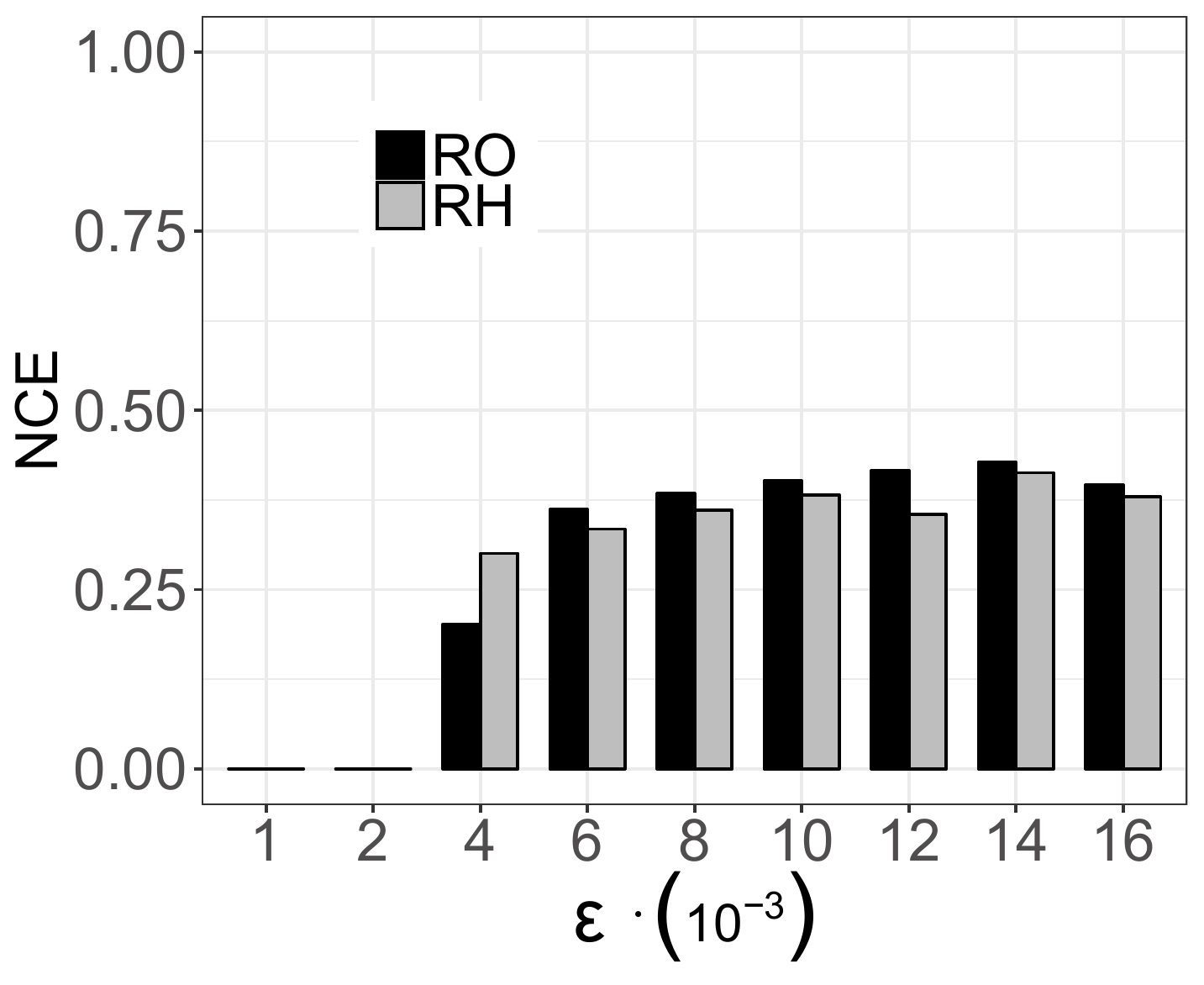} 
		\caption{}\label{fig9c}
	\end{subfigure}
	\caption{(a) $d_p$ vs threshold $\epsilon$, and (b) $d_q$ vs threshold $\epsilon$, for a histogram with length $n=40$ and size $N=100$ in \emph{NYC}.
		(c) $NCE$ vs threshold $\epsilon$, for a histogram with length $n=40$ and size $N=100$ in \emph{NYC}}
\end{figure*}

\begin{wrapfigure}[10]{L}{.52\textwidth}
	\begin{subfigure}[b]{.24\textwidth}\centering
		\includegraphics[scale=0.25,clip]{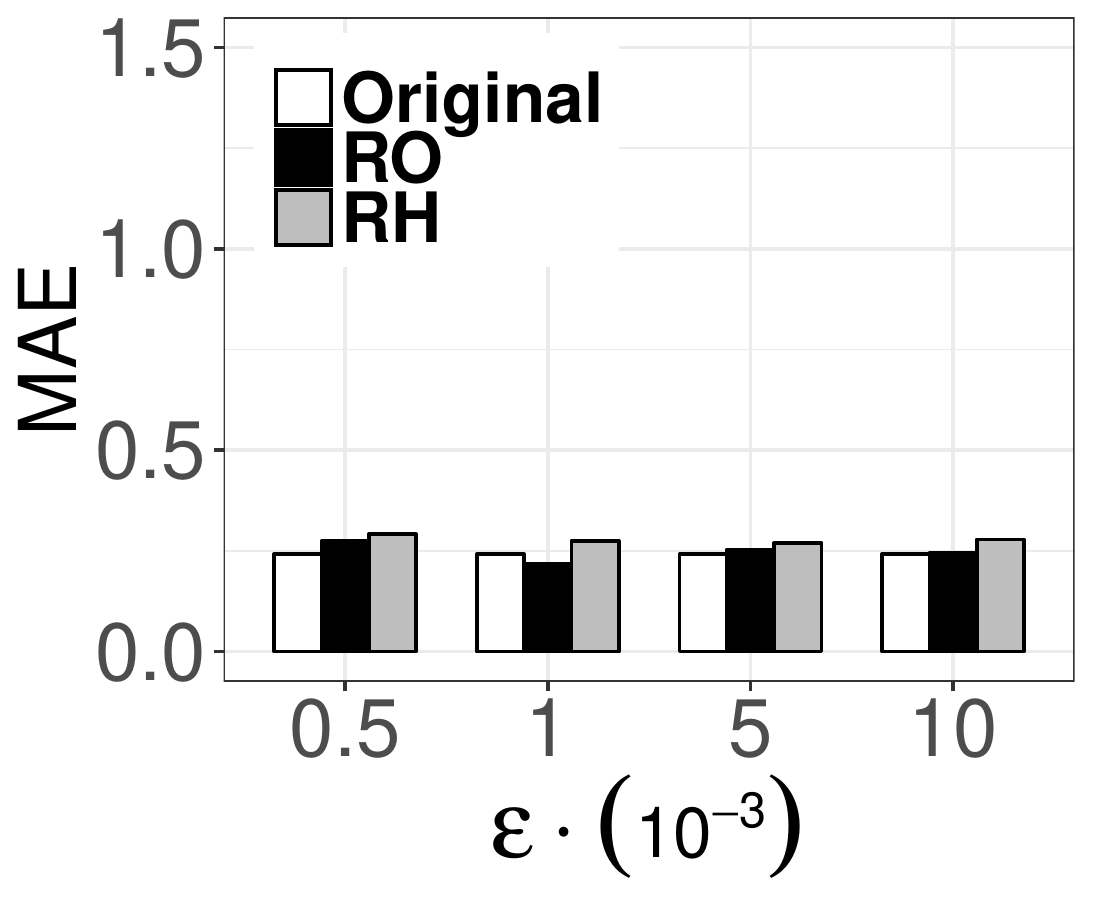}
		\caption{NYC}\label{recom2a}
	\end{subfigure}\hspace{+1mm}
	\begin{subfigure}[b]{.24\textwidth}\centering
		\includegraphics[scale=0.25,clip]{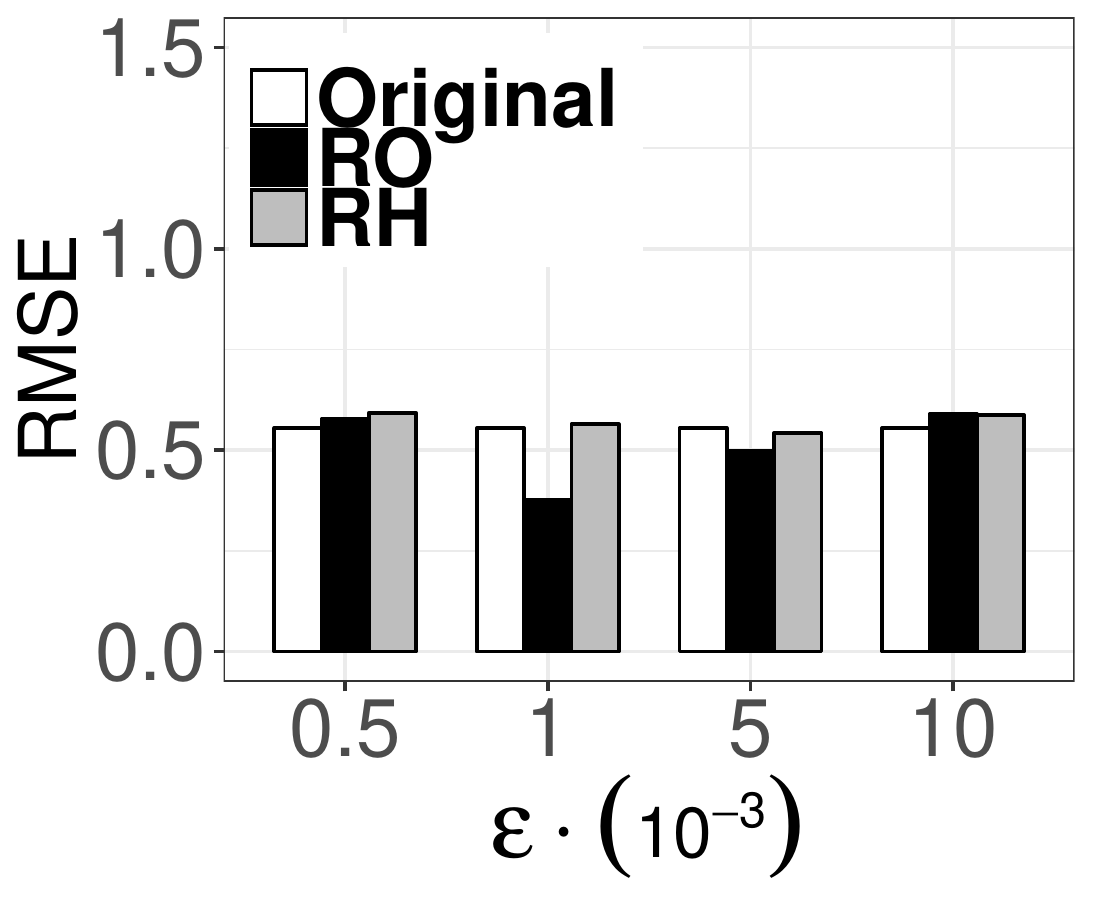}
		\caption{NYC}\label{recom2b}
	\end{subfigure}
	\caption{Recommendation quality for varying $\epsilon$ with respect to: (a) $MAE$, and (b) $RMSE$}\label{recom2}
\end{wrapfigure}

We also show that the quality measure $d_q$ (JS-divergence) for both $RO$ and $RH$ is close to $\epsilon$, in \figref{fig9b}.
Again, the $d_q$ scores of $RO$ are slightly larger than those of $RH$, because $RH$ works in a greedy fashion (i.e., the first bins are sanitized heavily, which increases
$d_q$ and does not leave much room for sanitizing the subsequent bins without exceeding $\epsilon$), as explained above.

In addition, we show that $NCE$ increases with $\epsilon$, in \figref{fig9c}. This is because the algorithms trade-off quality for privacy when $\epsilon$ is larger (i.e., $d_q$ can be as high as $\epsilon$).
The $NCE$ scores for $RO$ and $RH$ are similar, with those for $RH$ being lower (better) by $1.5\%$ on average.

\paragraph{Recommendation quality} We investigate the impact of sanitization on recommendation quality by using test datasets of original vs test datasets of histograms that were sanitized by applying $RO$ or $RH$ with different values of $\epsilon$. Figures \ref{recom2a} and \ref{recom2b} show that $MAE$ and $RMSE$ are not substantially affected by sanitization, for all tested $\epsilon$ values. This suggests that recommendation quality is preserved fairly well. In some cases, the $MAE$ and $RMSE$ scores for the sanitized histograms were lower (better) than those for the original histograms. This is because the recommendation scores for these histograms approach their corresponding true location counts after sanitization.

\subsubsection{Runtime performance for the $RO$ algorithm and the $RH$ heuristic}

\paragraph{Impact of length $n$} We show that the runtime of both $RO$ and $RH$ increases with $n$, in \figref{fig11a}.
This is because $RO$ runs on a multipartite graph, $G_{TR}$, with more layers $n$, and $RH$ needs to consider more bins $n$.
$RH$ is at least two orders of magnitude more efficient than $RO$.
For example, $RH$ requires $5.2$ milliseconds to sanitize a histogram with $n=26$ while $RO$ requires $2.3$ seconds.
Note that $RH$ scales close to linearly with $n$, which shows that the efficiency of $RH$ is better than what is predicted by the worst-case time complexity analysis in Section~\ref{sec:solution:heuristics}.

To further investigate the impact of length on runtime, we apply $RO$ and $RH$ to each histogram with length larger than $75$ in \emph{NYC} and \emph{TKY} (see \figref{alllengthNYC} and \ref{alllengthTKY}).
There are $17$ and $7$ such histograms in \emph{NYC} and \emph{TKY}, respectively. These histograms are generally demanding to sanitize, because they also have large size (up to $2061$). Again, we observe that $RH$ is more efficient than $RO$ by at least two orders of magnitude; $RH$ needs on average $8.2$ milliseconds, while $RO$ needs on average $3.6$ seconds.
In these experiments, we use $\epsilon=10^{-5}$.
For larger $\epsilon$ values, the difference between the two algorithms increases, because $RH$ scales better than $RO$ with respect to $\epsilon$, as explained above.
Repeating the same experiment using the synthetic histograms (see \figref{syntheticNYC} and \ref{syntheticTKY}), we find that both $RO$ and $RH$ scale well with $n$, and $RH$ scales close to linearly with $n$.

\begin{figure*}[!ht]\hspace{-2mm}
	\begin{subfigure}[b]{.24\textwidth}\centering
		\includegraphics[scale=0.31,clip]{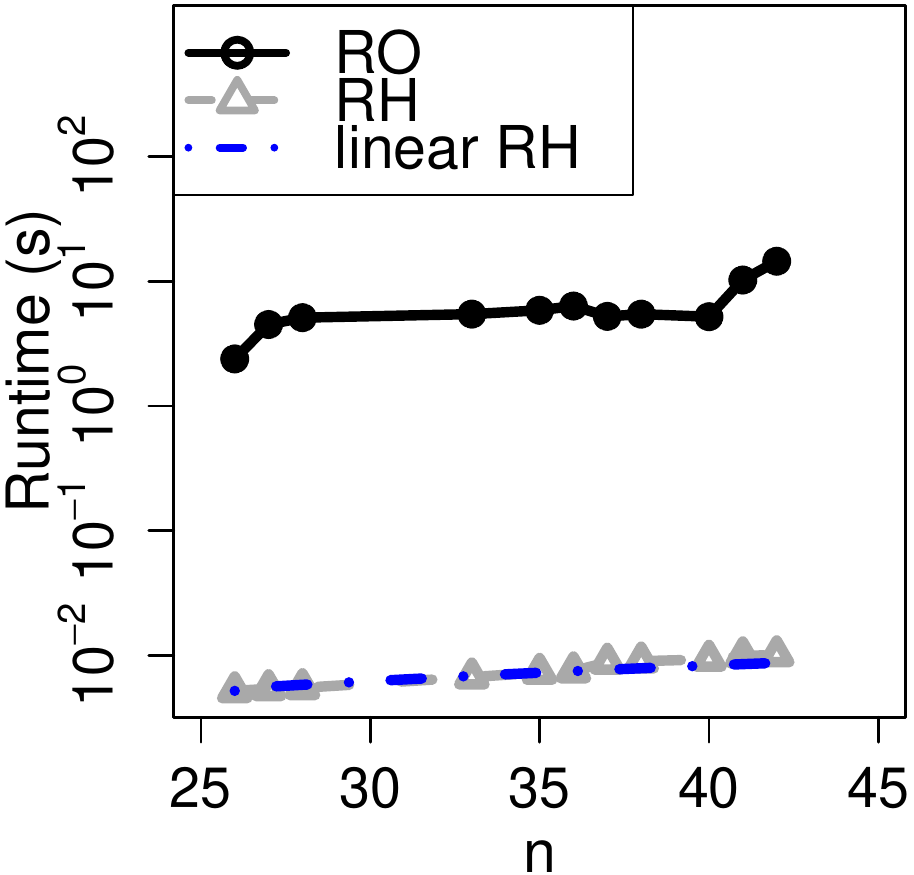}
		\caption{}\label{fig11a}
	\end{subfigure}\hspace{+1mm}
	\begin{subfigure}[b]{.24\textwidth}
		\includegraphics[scale=0.3,clip]{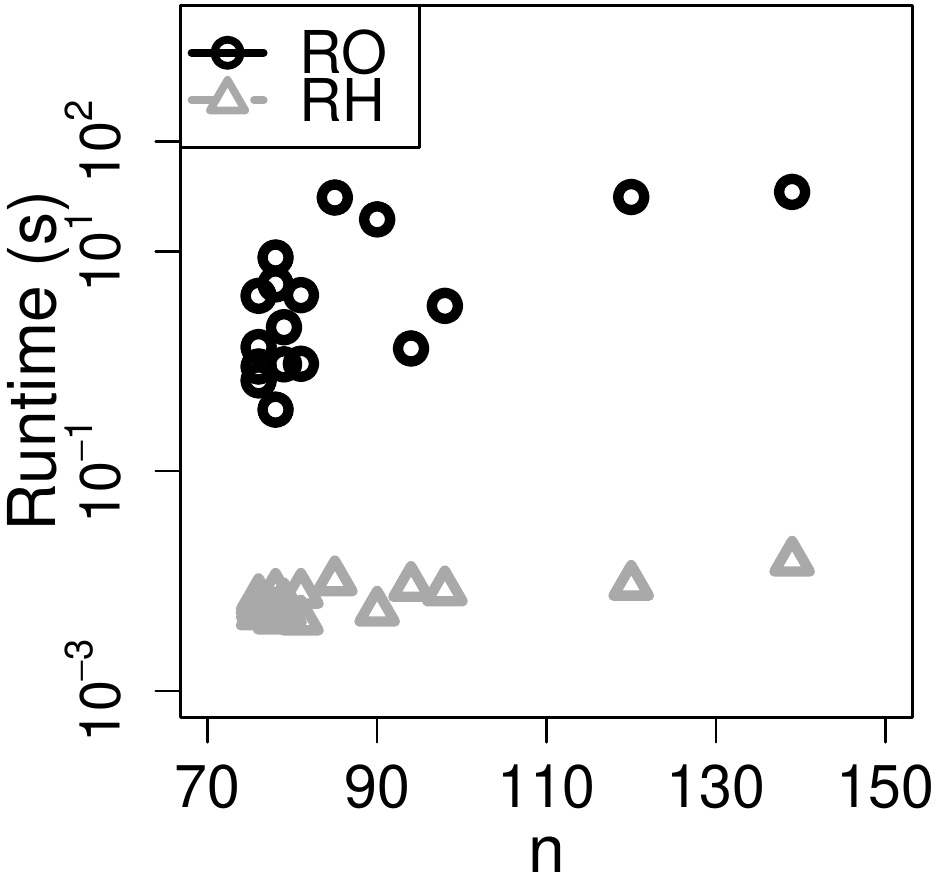}
		\caption{NYC}\label{alllengthNYC}
	\end{subfigure}\hspace{+0.5mm}
	\begin{subfigure}[b]{.24\textwidth}
		\includegraphics[scale=0.3,clip]{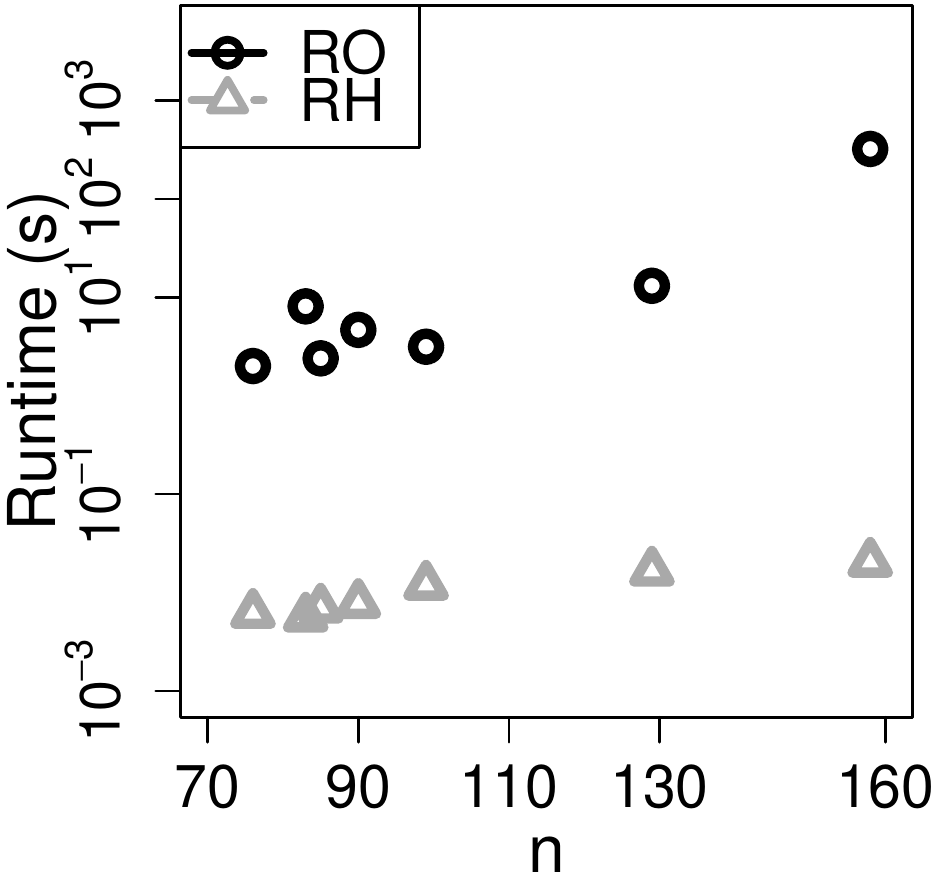}
		\caption{TKY}\label{alllengthTKY}
	\end{subfigure}
	\hspace{+0.5mm}
	\begin{subfigure}[b]{.24\textwidth}
		\includegraphics[scale=0.3,clip]{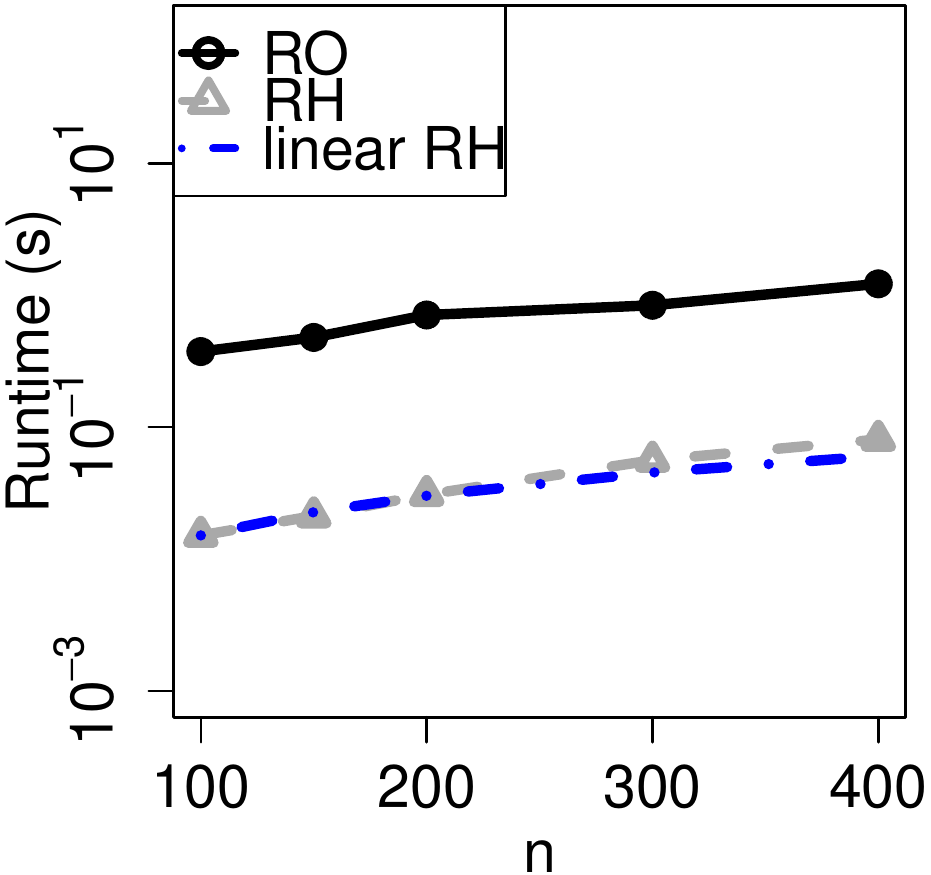}
		\caption{NYC}\label{syntheticNYC}
	\end{subfigure}
	\caption{Runtime vs length $n$, for: (a) each histogram with $N=100$ in \emph{NYC}, (b) each histogram with $n>75$ in \emph{NYC}, (c) each histogram with $n>75$ in \emph{TKY}, and (d) synthetic histograms with varying length and $N=192$}\label{figalllengths}
\end{figure*}

\noindent\paragraph{Impact of size $N$} We show that the runtime of both $RO$ and $RH$ increases with $N$, in \figref{fig11b}. This is because the multipartite graph $G_{TR}$ built by $RO$ has more nodes $O(N\cdot n)$
and thus more paths, and $RH$ needs to consider more ``moves'' from source to destination bins. Again, $RH$ is at least two orders of magnitude more efficient than $RO$. For example, $RH$ required at most $6.7$ milliseconds
to sanitize a histogram with $N=109$ while $RO$ required $5.6$ seconds.

\begin{figure*}[ht!]\hspace{-2mm}
	\begin{subfigure}[b]{.24\textwidth}
		\includegraphics[scale=0.3,clip]{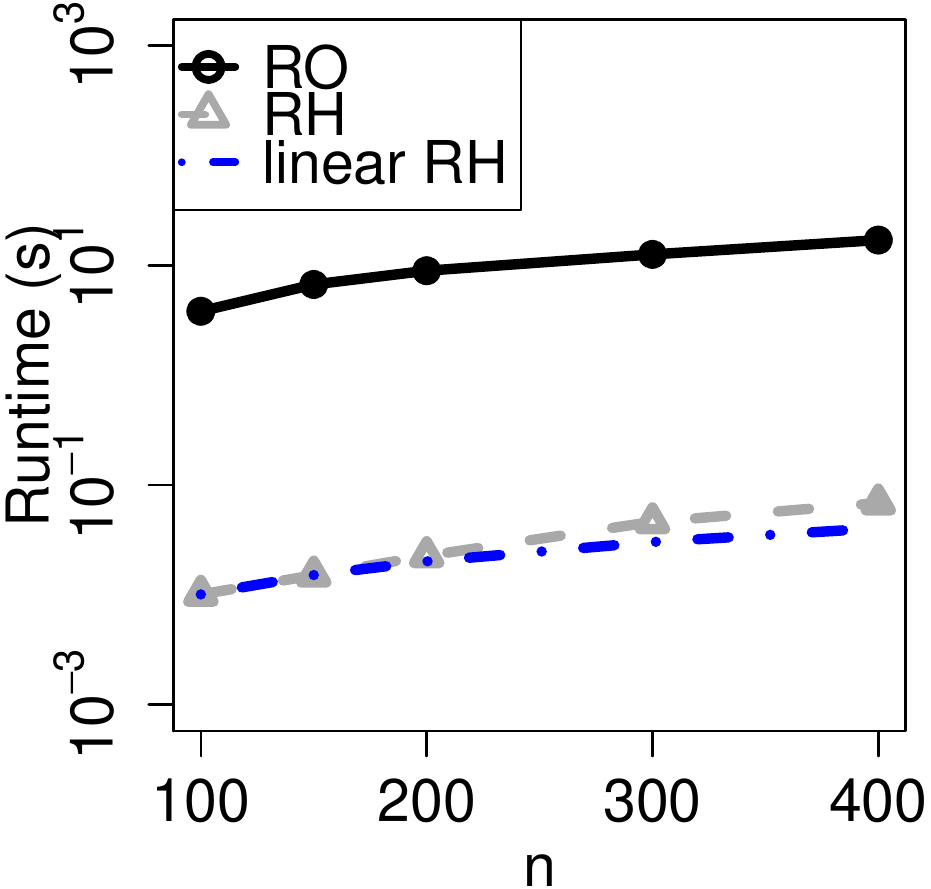}
		\caption{TKY}\label{syntheticTKY}
	\end{subfigure}
	\hspace{+1mm}
	\begin{subfigure}[b]{.24\textwidth}\centering
		\includegraphics[scale=0.31,clip]{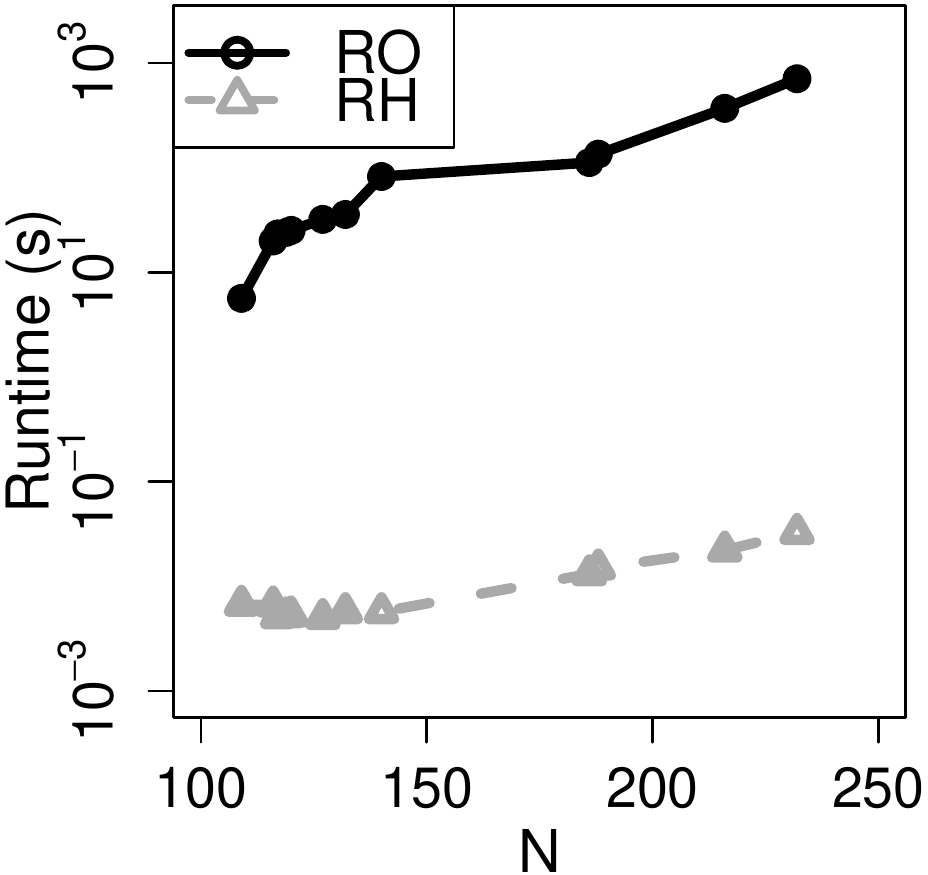}
		\caption{}\label{fig11b}
	\end{subfigure}\hspace{+1mm}
	\begin{subfigure}[b]{.24\textwidth}\centering
		\includegraphics[scale=0.31,clip]{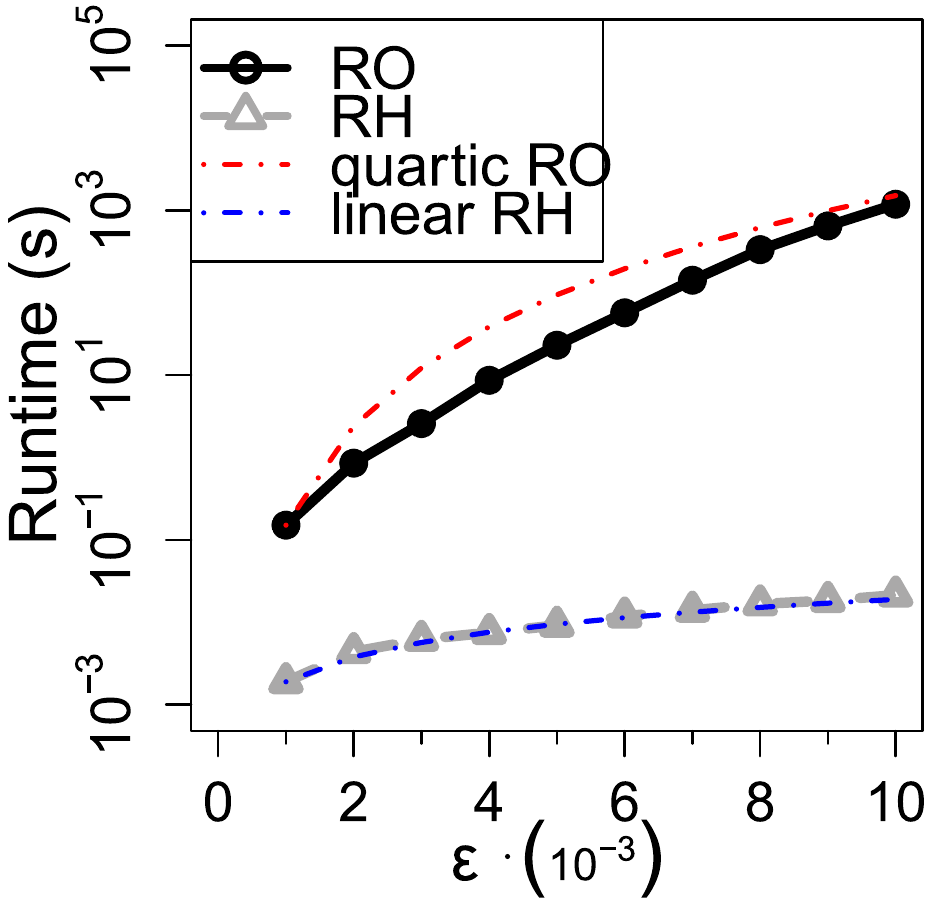}
		\caption{}\label{fig11c}
	\end{subfigure}\hspace{+1mm}
	\begin{subfigure}[b]{.24\textwidth}\centering
		\includegraphics[scale=0.31,clip]{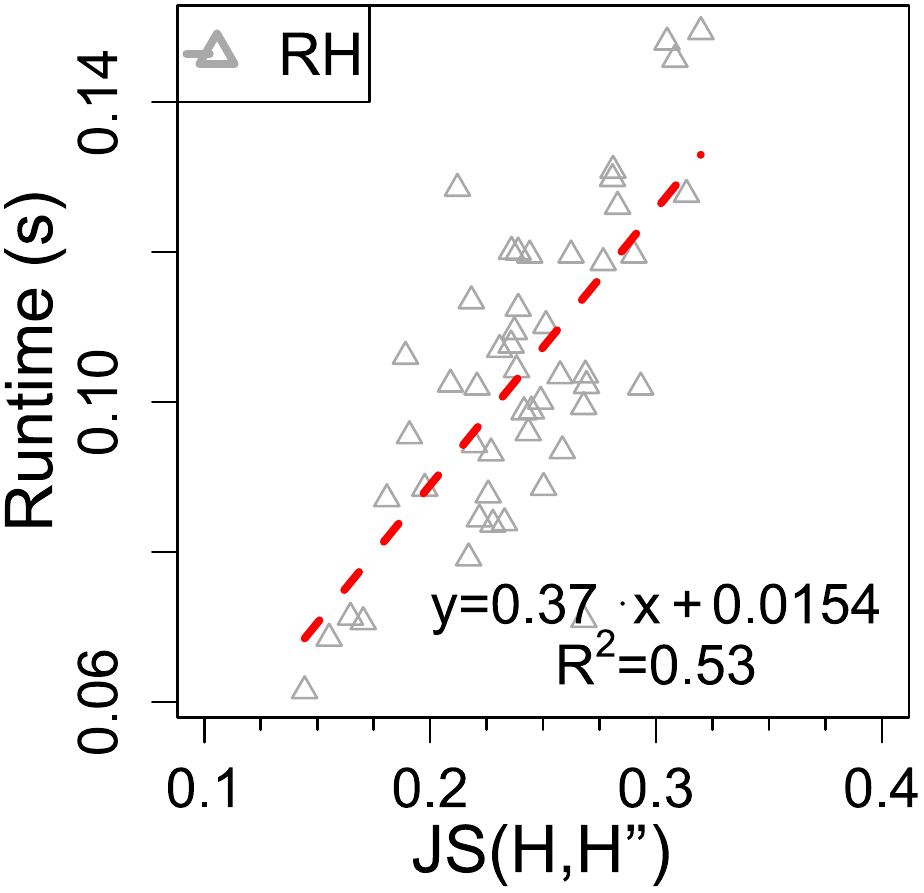}
		\caption{}\label{fig11d}
	\end{subfigure}
	\caption{Runtime vs : (a) Length $n$, for synthetic histograms with varying length and $N=642$. (b) Size $N$, for each histogram with $n=25$ in \emph{NYC}. (c) Threshold $\epsilon$, for a histogram with $n=40$ and $N=100$ in \emph{NYC}.
		(d) JS-divergence between a histogram $H$ with $n=40$ and $N=100$ in \emph{NYC} and different target histograms with increasing JS-divergence from $H$}\label{fig11}
\end{figure*}

\noindent\paragraph{Impact of threshold $\epsilon$} We show that the runtime of both $RO$ and $RH$ increases with $\epsilon$, in \figref{fig11c}.
This is because, when $\epsilon$ is larger, the multipartite graph built by $RO$ has more edges and thus more paths, and $RH$ considers more ``moves'' from source to destination bins.
$RH$ is at least two orders of magnitude more efficient than $RO$, and it scales better with $\epsilon$ i.e., linearly versus quartically (proportionally to $\epsilon^4$). This suggests that $RH$ is a practical heuristic for large $\epsilon$ values, given that it produces solutions similar to those of $RO$.

\noindent\paragraph{Impact of target histogram $H''$} We show that the runtime of $RH$ increases with the distance $JS(H,H'')$, for different target histograms $H''$, in \figref{fig11d}. This is because
$RH$ has more choices (i.e., there are more ways to transfer the counts of a source bin to a destination bin in $H$, when $H''$ is further from $H$ in terms of JS-divergence). In this experiment,
we use $\epsilon=0.5$, because the runtimes with the default $\epsilon$ value are too small (few milliseconds) to obtain a meaningful result.
We do not report the result for $RO$, because its runtime is not affected by the target histograms. The reason is that $RO$ builds the same multipartite graph for each target histogram $H''$, since $H''$ has the same length $n$ and size $N$ with the original histogram.

\section{Related Work}\label{sec:related_work}
This paper is at the intersection of location privacy and  histogram privacy, 
which are discussed in Sections \ref{related_work_location_privacy} and \ref{related_work_histogram_privacy}, respectively. We also discuss privacy-preserving recommendation in Section \ref{related_work_recom}, as a potential application of our methods. 

\vspace{-6mm}
\subsection{Location privacy}\label{related_work_location_privacy}
\vspace{-3mm}

Research on location privacy focuses on (I) location-based services (LBS), or (II) location data publishing.

Research on LBS is mostly inspired from applications running on GPS-enabled mobile devices like smartphones and tablets -- but also cars. 
Consequently, it addresses privacy for users who need to send data on the fly (as they move about), to a server that will provide them with some useful service (e.g. the location of the nearest restaurant). 
Privacy mechanisms in such scenarios need to make protection decisions on the fly, without knowing the future locations that the user will visit \cite{shokri2012protecting,ch2013broadening,fawaz2014location,shokri2017privacy}.
For example, \cite{shokri2017privacy} proposes a method for preventing the inference of locations that have been or will be visited by a user, based on what the user shares at any moment with a location-based service. Other recent research protects sensitive spatiotemporal location sequences \cite{Abul2018}.
As another example, \cite{fawaz2014location} proposes a method that prevents an LBS server from aggregating the locations sent by a user into a histogram and then associating this histogram with the user. The method perturbs the user's locations one by one, before they are sent to the LBS server, by adding noise to them in order to enforce the privacy notion of geo-indistinguishability \cite{ch2015location}.

Research on location data publishing is inspired from the publication of large datasets, possibly as a database. 
Consequently, it addresses more static scenarios, in which the whole dataset to be protected is given to the protection algorithm as input \cite{inference-eurosp16,arapinis,cicek,chenCCS,cheninfsci,ferrer,terrovitisTKDE17,poulistdp}. 
There are works showing the feasibility of attacks on \emph{pseudonymized} data (i.e., data in which a user's identifying information is represented by a random id) \cite{arapinis}, or on \emph{completely anonymized} data (i.e., a sequence $(e_1,\ldots, e_n)$, where the event $e_i=(l,t)$, $i\in[1,n]$, represents a visit to location $l$ at time $t$ and is not associated to a specific user) \cite{inference-eurosp16}. 
For example, reference \cite{inference-eurosp16} shows how an attacker can use completely anonymized data to associate a user with their event subsequence (\emph{path}). 
There are also works
\cite{cicek,cheninfsci,ferrer,terrovitisTKDE17,poulistdp} which propose methods for anonymizing user-specific location data (i.e., a dataset where each record corresponds to a different user and contains a sequence of locations visited by the user and/or the time that these visits occurred). 
For example, reference \cite{terrovitisTKDE17} proposes algorithms for preventing the inference of a user's sensitive locations by an attacker knowing a subsequence of the user's locations. 
The algorithms of \cite{terrovitisTKDE17} use \emph{suppression} (deletion) of locations and \emph{splitting} of user sequences into carefully selected subsequences.

Yet, no research in location privacy has aimed to protect histograms of locations. The object/fact to be protected has been either a single location (in the LBS setting), or a (sub)sequence of locations (in the location data publishing setting). However, protecting single locations separately provides no guarantee about the effect on the histogram as a whole. It could happen that, e.g., each individual location is replaced with another location,
so no single location is disclosed/compromised, but the histogram as a whole is very similar or even identical with the original one. 
Similarly, protecting location data could again lead to the same problem.
It could happen that individual locations in a user's sequence are modified, but the histogram remains unprotected. 
Thus, works on LBS or location data publishing cannot be used as alternatives to our approach.

\vspace{-5mm}
\subsection{Histogram privacy} \label{related_work_histogram_privacy}
\vspace{-2mm}

Research on histogram privacy is inspired from applications where a histogram is published
as a statistical summary (approximation) of the distribution of an attribute in a (relational) dataset. 
For example, consider a dataset, where each record contains the zip-code of a different individual. 
The distribution of the zip-code attribute in the dataset can be represented with a histogram, where each bin is associated with a different zip-code value and the bin frequency (count) is the number of individuals in the dataset who live in the zip-code. 
Publishing such histograms is useful for performing count query answering and data mining tasks (e.g., clustering), but it may lead to the disclosure of sensitive information about individuals \cite{acs2012differentially,xu2013differentially, kellaris, sdmhistogram,sortaki,qardaji,li2016improving}.
For instance, consider an adversary who knows the names of all three individuals, $i_1$, $i_2$, and  $i_3$, in a (non-released) dataset but the zip-codes of only $i_1$ and $i_2$. 
When the published histogram contains the count of each zip-code in the dataset, the adversary can infer the zip-code of $i_3$ from the histogram. 
To prevent this type of disclosure, the frequencies in the histogram are perturbed, typically by noise addition, in order to satisfy  \emph{differential privacy} \cite{dwork2006}. 
Informally, differential privacy ensures that the inferences that can be made by an adversary about an individual  will be approximately independent of whether the individual's record is included in the dataset or not.

Several works have applied differential privacy to sanitize histograms \cite{acs2012differentially,xu2013differentially, kellaris, sdmhistogram,sortaki,qardaji,li2016improving}. A straightforward way to achieve this is by adding noise to the frequency of each bin of the histogram, according to the Laplace mechanism \cite{dwork2006}. However, this procedure results in excessive utility loss \cite{acs2012differentially}. 
Therefore,  existing works \cite{acs2012differentially,xu2013differentially, kellaris, sdmhistogram,sortaki,qardaji,li2016improving} employ clustering to reduce the loss of utility, in three steps: 
(I) They cluster bins with similar frequencies together. 
(II) They apply the Laplace mechanism to the average (mean or median) of the frequencies in each cluster, to obtain a ``noisy center'' of the cluster. 
(III) They publish a histogram where each frequency bin in each cluster is replaced by the noisy center of its corresponding cluster. 
While clustering incurs some utility loss, it reduces the noise that is added by the Laplace mechanism, leading to better overall utility. 
Specifically, the works of \cite{acs2012differentially,xu2013differentially} require each cluster to be formed of adjacent bins, while the work of \cite{kellaris} requires each cluster to have the same number of bins. 
Subsequent works
\cite{sdmhistogram,sortaki,qardaji,li2016improving} lift these restrictions to
further improve utility. 
For example, reference \cite{sdmhistogram} proposes a clustering framework, which can be instantiated by optimal or heuristic algorithms that trade-off the utility loss incurred by clustering with the utility loss incurred by the Laplace mechanism.

At a high level, our work is similar to the works in \cite{acs2012differentially,xu2013differentially, kellaris, sdmhistogram,sortaki,qardaji,li2016improving}, in that it aims to protect a histogram (or it can be applied to each histogram in a dataset of histograms). However, it differs from the works in \cite{acs2012differentially,xu2013differentially, kellaris, sdmhistogram,sortaki,qardaji,li2016improving} along two dimensions: 
(I) It considers a
histogram that represents the locations associated to a \emph{single user}, instead of a histogram representing the values of many different individuals in an attribute of an underlying dataset. 
(II) It sanitizes a histogram by redistributing counts between bins, as specified by Problems \ref{SLHdef}, \ref{TRdef}, and \ref{TAdef}, instead of adding noise into the counts. 
Thus, the methods in \cite{acs2012differentially,xu2013differentially, kellaris, sdmhistogram,sortaki,qardaji,li2016improving} cannot be used to deal with the problems we consider.
In fact, applying any of the methods in \cite{acs2012differentially,xu2013differentially, kellaris, sdmhistogram,sortaki,qardaji,li2016improving} to a
histogram that represents the locations of a single user would simply prevent the inference of the exact frequencies (counts) of locations in the user's histogram. It would not protect against the disclosure of visits to sensitive locations (i.e., it cannot solve the $SLH$ problem), nor against the disclosure of the fact that the histogram is similar/dissimilar to a target histogram (i.e., it cannot solve the \TA{}/\TR{} problem).

A different, less related class of works can be used to protect a histogram by making it indistinguishable within a set of histograms that is published \cite{xu2006utility, ghinita2007fast}. 
These works differ from ours in their setting, in their privacy notion, or both.
They differ in terms of setting because they consider a set of histograms (or more generally, vectors of frequencies \cite{xu2006utility, ghinita2007fast}) rather than a single histogram with the location information of a single user. 
They differ in terms of privacy notion because they aim to prevent the disclosure of the identity of individuals, from the published set of histograms (i.e., the association of a histogram with identity information that is known to an attacker), rather than the inference of location information from a single histogram.

\vspace{-5mm}
\subsection{Privacy-preserving recommendation}\label{related_work_recom}
\vspace{-2mm}

There are several privacy-preserving recommendation methods. Most of them (e.g., \cite{trus1,trus2})  assume there is a trusted server that applies privacy protection (e.g., anonymization) jointly to the data of many users. Unlike these methods, we assume a different setting, in which the user protects their histogram by themselves. Our setting is conceptually similar to the untrusted server setting \cite{shenrecomccs,shenrecomicdm,polat}, in which a user protects their data prior to disseminating them. Specifically, \cite{shenrecomccs,shenrecomicdm} propose methods in which a user applies differential privacy, while \cite{polat} proposes a method in which the user applies \emph{randomized perturbation}. The privacy objective of these methods is to prevent the inference of exact user values. In contrast, we do not directly aim to prevent the inference of exact user values: our privacy notions are formalized by the $SLH$ and \TA{}/\TR{} problems. Also, we do not require that the protected histograms will be used in the task of recommendation, although we experimentally show that the protected histograms that are produced by our approach allow preserving the accuracy of recommendation fairly well.

\vspace{-6mm}
\section{Conclusion}\label{sec:conclusion}
\vspace{-2mm}
In this paper, we propose two new notions of histogram privacy, sensitive location hiding and target avoidance/resemblance, which lead to the following optimization problems: the Sensitive Location Hiding problem ($SLH$), which seeks to enforce the notion of sensitive location hiding with optimal quality, and the Target Avoidance/Resemblance (\TA{}/\TR{}) problem, which seeks to enforce target avoidance/resemblance with bounded quality loss. We also propose optimal algorithms for each problem, as well as an efficient heuristic for the \TA{}/\TR{} problem. 
Our experiments demonstrate that our methods are effective at preserving the distribution of locations in a histogram, as well as the quality of recommendations based on these locations, while being fairly efficient.

\vspace{-5mm}
\appendix
\section{Appendix}

\subsection{Proof of weak NP-hardness for the {\SLH} problem}\label{subsec:SLH_NP_hard}
We first reduce the weakly NP-hard \emph{Multiple Choice Knapsack} ($MCK$) problem \cite{Kellerer2004} to the special case of {\SLH}, where $|L'|=1$. The $MCK$ problem is defined as follows\footnote{The problem also appears with $\geq$ in constraint I \cite{MMCK1}. This variation
	is referred to as $MCK_{\geq}$ and can be transformed to $MCK$ in polynomial time \cite{Kellerer2004}.}:

\begin{equation}
	\min \sum_{i\in[1,m]}\sum_{j\in C_i}c_{ij}\cdot x_{ij}\label{preq1}
\end{equation}
subject to:~ (I) $\quad \sum_{i\in[1,m]}\sum_{j\in C_i}w_{ij}\cdot x_{ij}= b$, (II) $\quad \sum_{j\in C_i}x_{ij}=1, ~~ i=1,\ldots m$, and  (III) $\quad x_{ij}\in \{0,1\}, ~~ i=1,\ldots,m, ~~~~ j\in C_i$.

In $MCK$, we are given a set of elements subdivided into $m$, mutually exclusive classes, $C_1,...,C_m$, and a knapsack. Each class $C_i$ has $|C_i|$ elements. Each element $j \in C_i$ has a cost $c_{ij} \geq 0$ and a weight $w_{ij}$.
The goal is to minimize the total cost (Eq.~\ref{preq1}) by filling the knapsack with one element from each class (constraint II), such that the weights
of the elements in the knapsack satisfy the constraint I, where $b \geq 0$ is a constant.
The variable $x_{ij}$ takes a value $1$, if the element $j$ is chosen from class $C_i$ and $0$ otherwise (constraint III).

We map a given instance $\mathcal{I}_{MCK}$ to an instance $\mathcal{I}_{SLH}$ of the special case of {\SLH} in polynomial time, as follows: 
\begin{itemize}
	\item[(I)] Each class $C_i$, $i\in[1,m]$, is mapped to a location $L_i\notin L'$ whose count $f(L_i)$ in $H$ is arbitrary.
	\item[(II)] A sensitive location $L_{m+1}\in L'$ (without loss of generality) is considered. The count of $L_{m+1}$ in $H$ is set to $f(L_{m+1})=b$. Thus, $H=(f(L_1),\ldots, f(L_m),b)$.
	\item[(III)] Each element $x_{ij}$ with weight $w_{ij}$ and cost $c_{ij}$ is mapped to an operation on $H$, which decreases $f(L_{m+1})$ by $w_{ij}$ and increases $f(L_i)$ by $w_{ij}$ (i.e., transfers $w_{ij}$ visits
	from $L_{m+1}$ to $L_i$) and incurs $q(H,H'[i])=c_{ij}$. If there are multiple operations such that $q(H,H'[i])=c_{ij}$ (e.g., when $q$ is the $L_1$ distance), we select one arbitrarily.
	When $x_{ij}=1$, its corresponding operation is applied to $H$. The result of applying all operations on $H$ is referred to as the sanitized histogram $H'$.
\end{itemize}

We prove the correspondence between a solution $\mathcal{S}$ to $\mathcal{I}_{MCK}$ and a solution $H'$ to $\mathcal{I}_{SLH}$, as follows:

We first prove that, if $\mathcal{S}$ is a solution to $\mathcal{I}_{MCK}$, then $H'$ is a solution to $\mathcal{I}_{SLH}$.
Since $\sum_{i\in[1,m]}\sum_{j\in C_i}w_{ij}\cdot x_{ij}= b$, $f(L_{m+1})$ is decreased by $b$. Thus, $H'[m+1]=0$ (i.e., all visits to $L_{m+1}$ were transferred to nonsensitive locations) and $\sum_{i\in[1,m+1]}H'[i]=\sum_{i~s.t.~L_i\notin L'}H'[i]=|H|_1$ (i.e., $H'$ has the same size with $H$). By construction, $H'$ has also the same length with $H$.
Since $\sum_{i\in[1,m]}\sum_{j\in C_i}c_{ij}\cdot x_{ij}$ is minimum, $\sum_{i\in[1,m]}q(H,H'[i])$ is minimum. In addition, $q(H,H'[m+1])$ (i.e., the loss for  transferring all visits from $L_{m+1}$ to non-sensitive locations) is constant. Thus, $d_q(H,H')=$ $\sum_{i\in[1,m+1]}q(H,H'[i])$ is minimum. Therefore, $H'$ is a solution to $\mathcal{I}_{SLH}$.

We now prove that, if $H'$ is a solution to $\mathcal{I}_{SLH}$, then $\mathcal{S}$ is a solution to $\mathcal{I}_{MCK}$. Since $\sum_{i~s.t.~L_i\notin L'}H'[i]=|H|_1$, it holds that $H'[m+1]=0$. Thus, $f(L_{m+1})$ is decreased by $b$ (all visits to $L_{m+1}$ were transferred to nonsensitive locations) and $\sum_{i\in[1,m]}\sum_{j\in C_i}w_{ij}\cdot x_{ij}= b$. Since $d_q(H,H')=(\sum_{i\in[1,m]}q(H,H'[i]))+q(H,H'[m+1])$ is minimum and
$q(H,H'[m+1])$ is constant, $\sum_{i\in[1,m]}q(H,H'[i])$ is minimum. This implies that $\sum_{i\in[1,m]}\sum_{j\in C_i}c_{ij}\cdot x_{ij}$ is minimum. Thus, $\mathcal{S}$ is a solution to $\mathcal{I}_{MCK}$.

Therefore, the special case of the {\SLH} problem with $|L'|=1$ is weakly NP-hard, and, clearly, the {\SLH} problem with $|L'|\geq 1$, is also weakly NP-hard.

\subsection{Proof of weak NP-hardness for the {\TR} problem}\label{subsec:TR_NP_hard}
We reduce the weakly NP-hard \emph{Multiple Choice Knapsack} ($MCK_{\geq}$) problem \cite{Kellerer2004,MMCK1} to the {\TR} problem. The $MCK_{\geq}$ problem is defined as follows:

\begin{equation}
	\min \sum_{i\in[1,n]}\sum_{j\in C_i}c_{ij}\cdot x_{ij}\label{preq1b}
\end{equation}
subject to:~ (I) $\quad \sum_{i\in[1,n]}\sum_{j\in C_i}w_{ij}\cdot x_{ij}\geq b$, (II) $\quad \sum_{j\in C_i}x_{ij}=1, ~~ i=1,\ldots n$, and (III) $\quad x_{ij}\in \{0,1\}, ~~ i=1,\ldots,n, ~~~~ j\in C_i$. 
In $MCK_{\geq}$, we are given a set of elements subdivided into $n$, mutually exclusive classes, $C_1,...,C_n$, and a knapsack. Each class $C_i$ has $|C_i|$ elements. Each element $j \in C_i$ has a cost $c_{ij} \geq 0$ and a weight $w_{ij}$.
The goal is to minimize the total cost (Eq.~\ref{preq1b}) by filling the knapsack with one element from each class (constraint II), such that the weights
of the elements in the knapsack satisfy the constraint I, where $b \geq 0$ is a constant.
The variable $x_{ij}$ takes a value $1$, if the element $j$ is chosen from class $C_i$ and $0$ otherwise (constraint III).

We map a given instance $\mathcal{I}_{MCK_{\geq}}$ to an instance $\mathcal{I}_{TR}$ of {\TR} in polynomial time, as follows: 
\begin{itemize}
	\item[(I)] Each class $C_i$, $i\in[1,n]$, is mapped to a location $L_i$, which has an arbitrary count in $H$ and a possibly different, arbitrary count in $H''$.
	\item[(II)] The constant $\epsilon$ is set to $n-\frac{b}{\max_{i\in[1,n],j\in C_i}{w_{ij}}}$.
	\item[(III)] We choose $q()$ and $p()$ to be normalized in $[0,1]$ such that each element $x_{ij}$ with weight $w_{ij}$ and cost $c_{ij}$ is mapped to a value $k_{ij}$ such that the following conditions hold:
	$q(H,H[i]+k_{ij})=1-\frac{w_{ij}}{\max_{i\in[1,n],j\in C_i}{w_{ij}}}$ and $p(H[i]+k_{ij},H'')=\frac{c_{ij}}{\max_{i\in[1,n],j\in C_i}{c_{ij}}}$. The normalization of $q()$ and $p()$ can be  done in polynomial time because $q()$ and $p()$ can take $O(N\cdot n)$ values. If there are multiple values of $k_{ij}$ satisfying the two conditions, one of these values is selected arbitrarily.
	When $x_{ij}=1$, the value $k_{ij}$ is added into $H[i]$ and $H'[i]=H[i]+k_{ij}$ is obtained.
\end{itemize}

We prove the correspondence between a solution $\mathcal{S}$ to $\mathcal{I}_{MCK_{\geq}}$ and a solution $H'$ to $\mathcal{I}_{TR}$, as follows:

We first prove that, if $\mathcal{S}$ is a solution to $\mathcal{I}_{MCK_{\geq}}$, then $H'$ is a solution to $\mathcal{I}_{TR}$.
Since $\sum_{i\in[1,n]}\sum_{j\in C_i}c_{ij}\cdot x_{ij}$ is minimum, $\sum_{i\in[1,n]}\left((\max_{i\in[1,n],j\in C_i}{c_{ij}})\cdot p(H'[i],H'')\right)$ is minimum. Thus,  $d_p(H',H'')=$\newline $\sum_{i\in[1,n]}p(H'[i],H'')$
is minimum.
Since $\sum_{i\in[1,n]}\sum_{j\in C_i}w_{ij}\cdot x_{ij}\geq b$ and $b=\max_{i\in[1,n],j\in C_i}{w_{ij}}\cdot (n-\epsilon)$, it holds that $\sum_{i\in[1,n]}(1-q(H,H[i]+k_{ij}))\geq n-\epsilon$. This implies
$n-d_q(H,H')\geq n -\epsilon$ and $d_q(H,H')\leq \epsilon$. Therefore, $H'$ is a solution to $\mathcal{I}_{TR}$.
We now prove that, if $H'$ is a solution to $\mathcal{I}_{TR}$, then $\mathcal{S}$ is a solution to $\mathcal{I}_{MCK_{\geq}}$. Since $d_p(H',H'')=\sum_{i\in[1,n]}p(H'[i],H'')$ is minimum, $\sum_{i\in[1,n]}\left((\max_{i\in[1,n],j\in C_i}c_{ij})\cdot p(H'[i],H''[i])\right)$ is minimum. This implies that
$\sum_{i\in[1,n]}\sum_{j\in C_i}c_{ij}\cdot x_{ij}$ is minimum. Since $d_q(H,H')=\sum_{i\in[1,n]}q(H,H'[i])\leq \epsilon$ and $\epsilon=n-\frac{b}{\max_{i\in[1,n],j\in C_i}{w_{ij}}}$, it holds that
$n-\sum_{i\in[1,n]}q(H,H'[i])\geq n-\epsilon=\frac{b}{\max_{i\in[1,n],j\in C_i}{w_{ij}}}$. This implies $\sum_{i\in[1,n]}(1-q(H,H[i]+k_{ij}))\geq \frac{b}{\max_{i\in[1,n],j\in C_i}{w_{ij}}}$ and $\sum_{i\in[1,n]}\sum_{j\in C_i}w_{ij}\cdot x_{ij} \geq b$. Thus, $\mathcal{S}$ is a solution to $\mathcal{I}_{MCK_{\geq}}$.

Therefore, {\TR} is weakly NP-hard.

\subsection{Proof of weak NP-hardness for the {\TA} problem}\label{subsec:TA_NP_hard}

We reduce the weakly NP-hard {\TR} problem (see Appendix \ref{subsec:TR_NP_hard}) to the {\TA} problem as follows.
We map a given instance $\mathcal{I}_{TR}$ of the {\TR} problem into an instance $\mathcal{I}_{TA}$ of the {\TA}  problem in polynomial time,
by mapping $H$ and $H''$ to histograms $H_{TA}=H$ and $H''_{TA}=H''$, respectively, defining $d_p(H'_{TA},H''_{TA})=\frac{1+1}{d_p(H',H'')+1}$ and $d_q(H_{TA},H'_{TA})=d_q(H,H')$, and setting $\epsilon_{TA}=\epsilon$. Note that $d_p(H',H'')\geq 0$ by definition, so $d_p(H'_{TA},H''_{TA})\in (0,2]$.

We prove the correspondence between a solution $H'$ to $\mathcal{I}_{TR}$ and a solution $H'_{TA}$ to $\mathcal{I}_{TA}$, as follows:

We first prove that, if $H'$ is a solution to $\mathcal{I}_{TR}$ then $H'_{TA}$ is a solution to $\mathcal{I}_{TA}$.
Since $H'$ is a solution to $\mathcal{I}_{TR}$, $d_p(H',H'')$ is minimum. Thus, $d_p(H'_{TA},H''_{TA})=\frac{1+1}{dp(H',H'')+1}$ is maximum.
In addition, $d_q(H,H')\leq \epsilon$. Thus, $d_q(H_{TA},H'_{TA})=d_q(H,H')\leq \epsilon$. Therefore, $H'_{TA}$ is a solution to {\TA}.

We now prove that, if $H'_{TA}$ is a solution to $\mathcal{I}_{TA}$ then $H'$ is a solution to $\mathcal{I}_{TR}$.
Since $H'_{TA}$ is a solution to $\mathcal{I}_{TA}$, $d_p(H'_{TA},H''_{TA})$ is maximum. Thus, $d_p(H',H'')=\frac{1+1}{d_p(H'_{TA},H''_{TA})}-1$ is minimum.
Also, $d_q(H_{TA},H'_{TA})\leq \epsilon$. Thus, $d_q(H,H')=d_q(H_{TA},H'_{TA})\leq \epsilon$. Therefore, $H'$ is a solution to $\mathcal{I}_{TR}$.

Therefore, {\TA} is weakly NP-hard.

\subsection{Reduction from the {\TA} to the {\TR} problem}\label{subsec:TA_TR_reduction}

The {\TA} problem can be reduced to the {\TR} problem in polynomial time, as follows.
Given an instance $\mathcal{I}_{TA}$ of the {\TA} problem, we can construct an instance $\mathcal{I}_{TR}$ of the {\TR} problem in polynomial time,
by mapping $H$ and $H''$ to histograms $H_{TR}=H$ and $H''_{TR}=H''$, defining $d_p(H'_{TR},H''_{TR})=\frac{1+1}{d_p(H',H'')+1} \in (0,2]$ and $d_q(H_{TR},H'_{TR})=d_q(H,H')$, and setting $\epsilon_{TR}=\epsilon$. Given a feasible solution $H'_{TR}$ of $\mathcal{I}_{TR}$, we can map it back to a feasible solution $H'$ of $\mathcal{I}_{TA}$ with cost
$d_p(H', H'')=\frac{1 + 1}{d_p(H'_{TR}, H''_{TR})} - 1 \geq 0$ in polynomial time. This requires constructing $H'=H'_{TR}$, which clearly is a solution to $\mathcal{I}_{TA}$.

\subsection{Results for $L_2$ distance for the $LHO$ algorithm}\label{L2appendix}

This appendix provides results for the $L_2$ distance. The results in
Figures \ref{appendixLHOn},  \ref{appendixLHOK}, and \ref{appendixLHOL} are qualitatively similar to those in Figures \ref{fig1}, \ref{fig2}, and \ref{fig4}, respectively.
\begin{figure*}[ht!]\hspace{-2mm}
	\begin{subfigure}[b]{.24\textwidth}\centering
		\includegraphics[scale=0.3,clip]{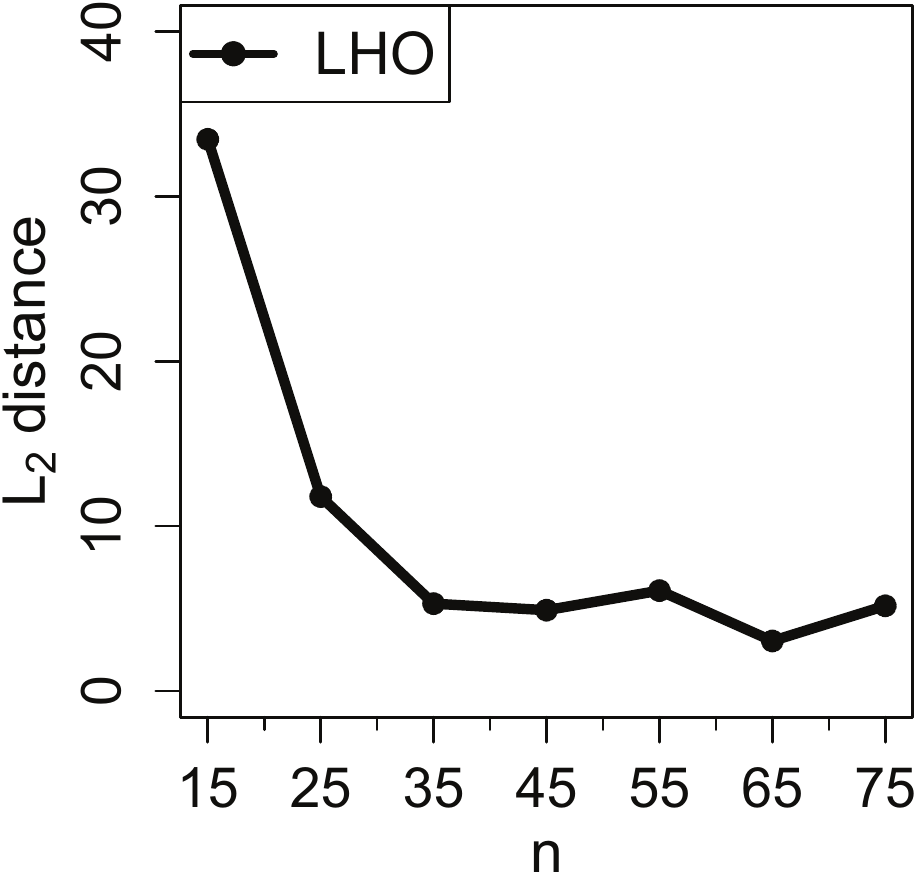}
		\caption{NYC}
	\end{subfigure}\hspace{+1mm}
	\begin{subfigure}[b]{.24\textwidth}\centering
		\includegraphics[scale=0.3,clip]{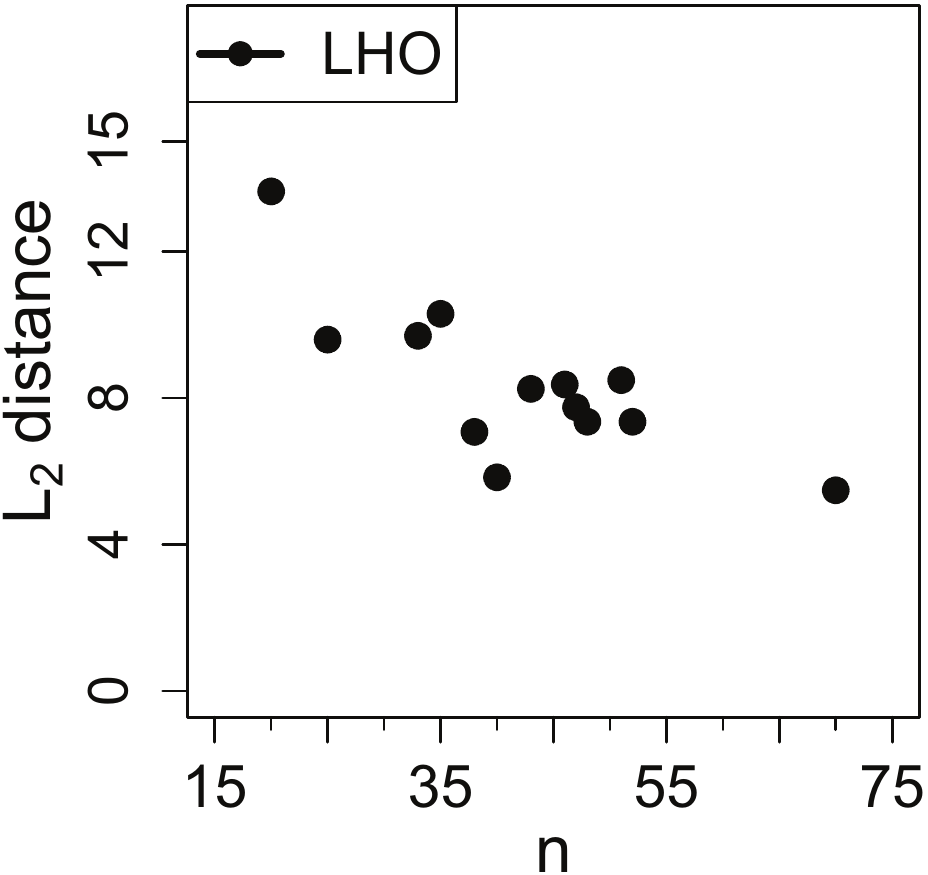}
		\caption{NYC}
	\end{subfigure}\hspace{+1mm}
	\begin{subfigure}[b]{.24\textwidth}\centering
		\includegraphics[scale=0.3,clip]{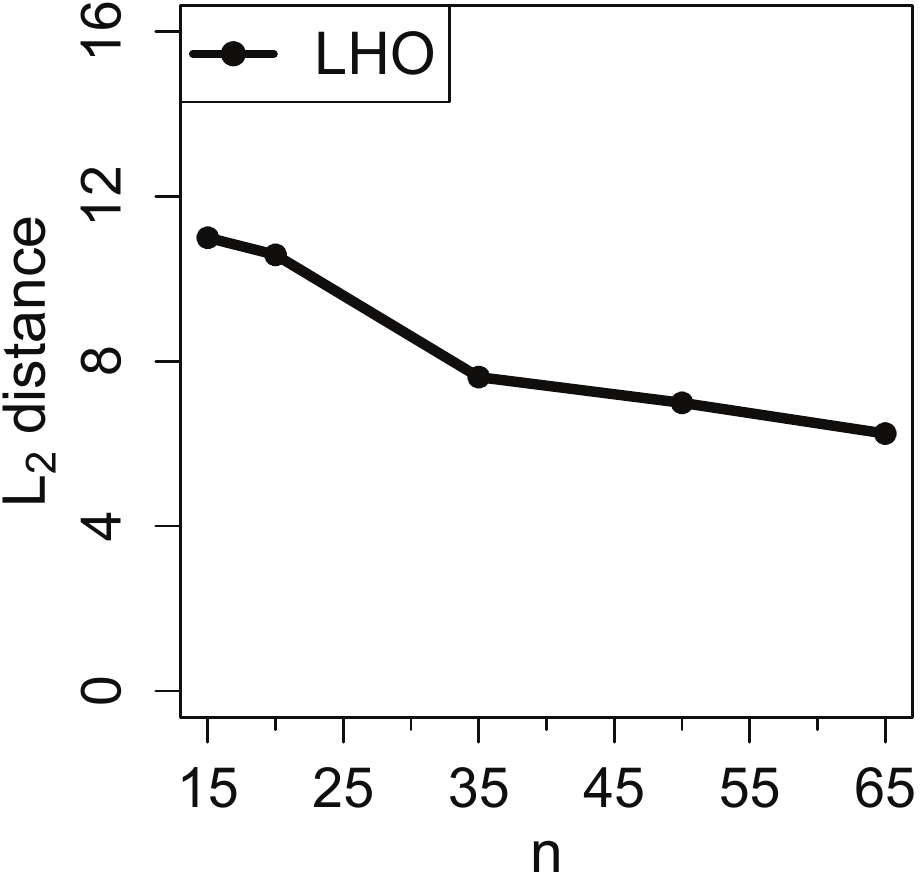}
		\caption{TKY}
	\end{subfigure}\hspace{+1mm}
	\begin{subfigure}[b]{.24\textwidth}\centering
		\includegraphics[scale=0.3,clip]{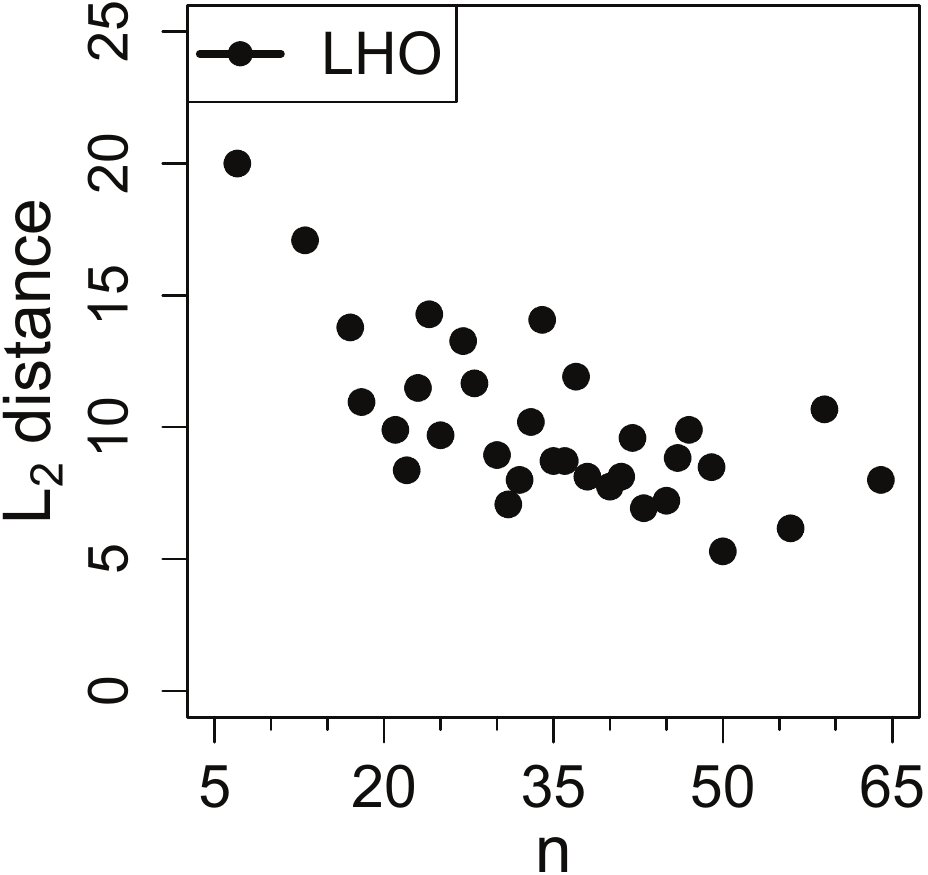}
		\caption{TKY}
	\end{subfigure}
	\caption{$L_2$ distance vs Length $n$ : (a) Median $L_2$ distance for histograms of length $n$ in \emph{NYC}. (b) $L_2$ distance for each histogram with $K=20$ in \emph{NYC}. (c) Median $L_2$ distance for histograms of length $n$ in \emph{TKY}. (d) $L_2$ distance for each histogram with $K=20$ in \emph{TKY}}\label{appendixLHOn}
\end{figure*}
\vspace{+2mm}

\begin{figure*}[ht!]\hspace{-2mm}
	\begin{subfigure}[b]{.24\textwidth}\centering
		\includegraphics[scale=0.3,clip]{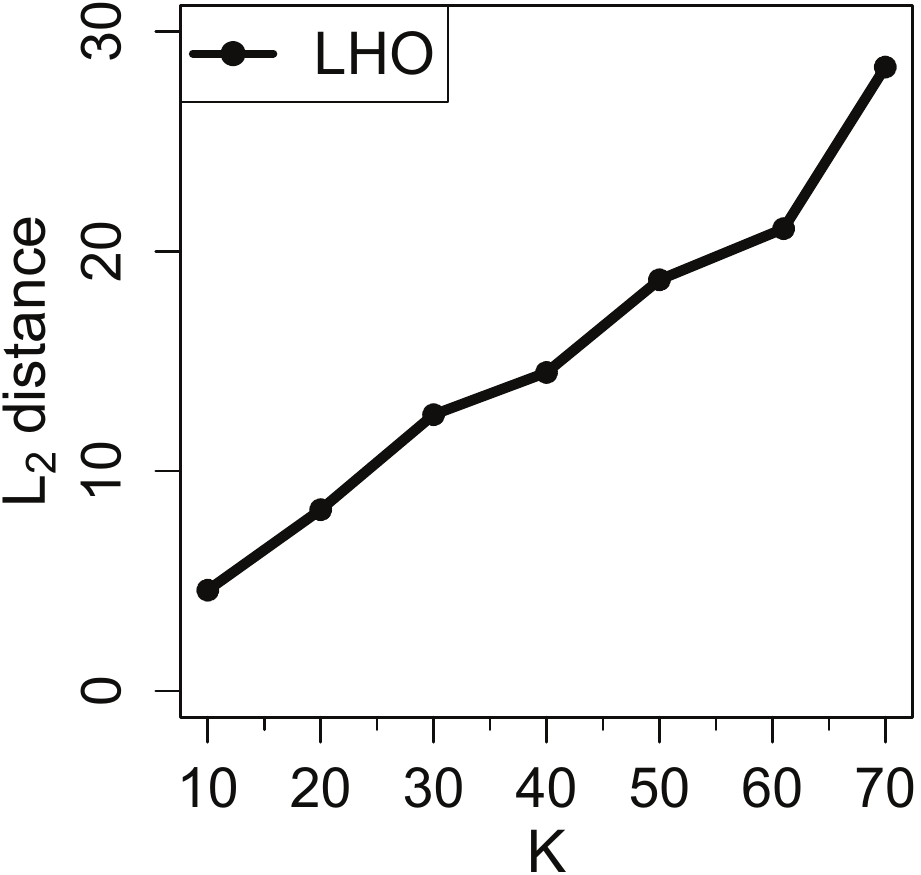}
		\caption{NYC}
	\end{subfigure}\hspace{+1mm}
	\begin{subfigure}[b]{.24\textwidth}\centering
		\includegraphics[scale=0.3,clip]{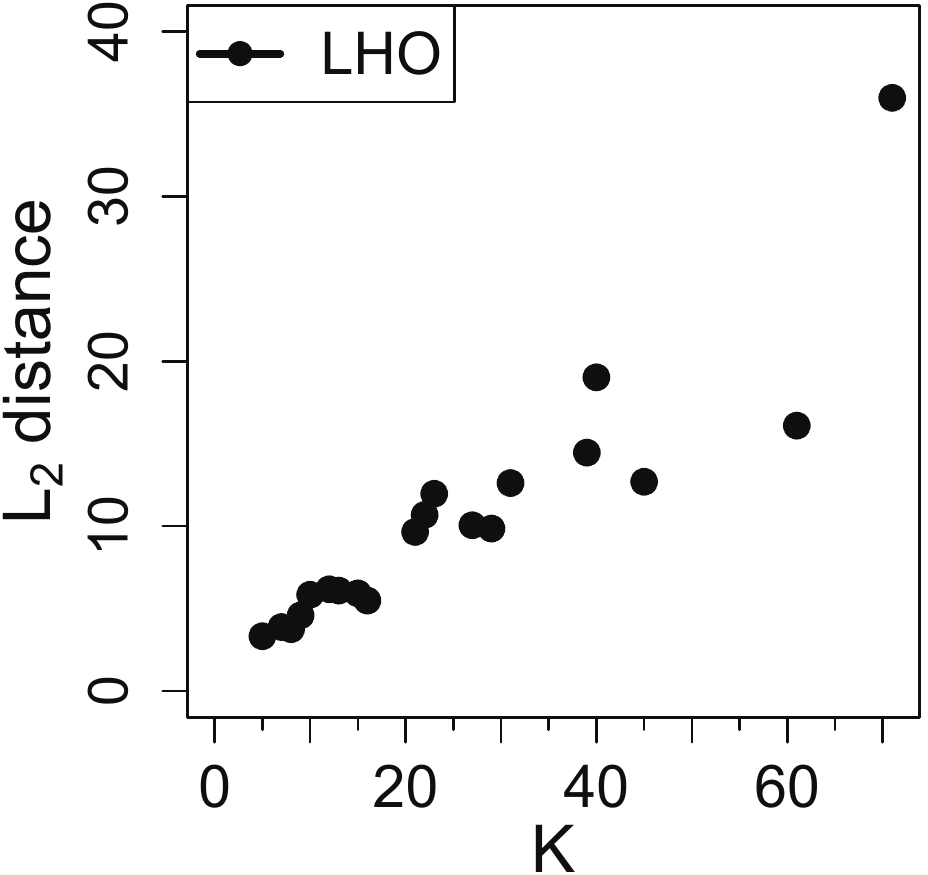}
		\caption{NYC}
	\end{subfigure}\hspace{+1mm}
	\begin{subfigure}[b]{.24\textwidth}\centering
		\includegraphics[scale=0.3,clip]{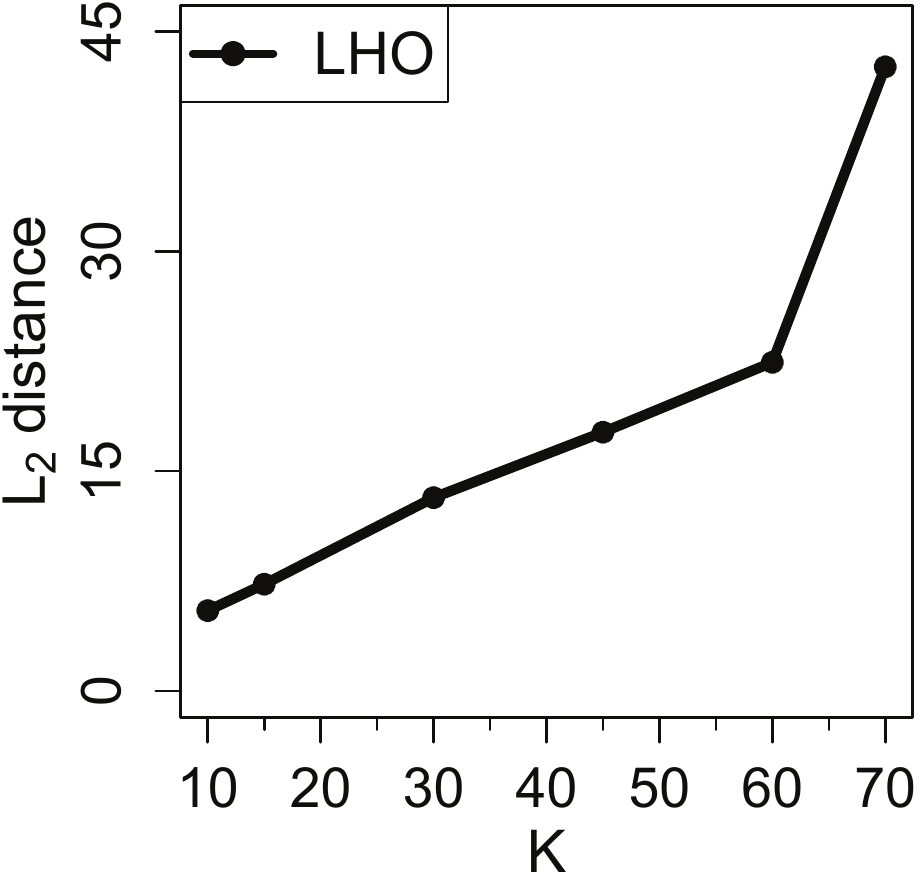}
		\caption{TKY}
	\end{subfigure}\hspace{+1mm}
	\begin{subfigure}[b]{.24\textwidth}\centering
		\includegraphics[scale=0.3,clip]{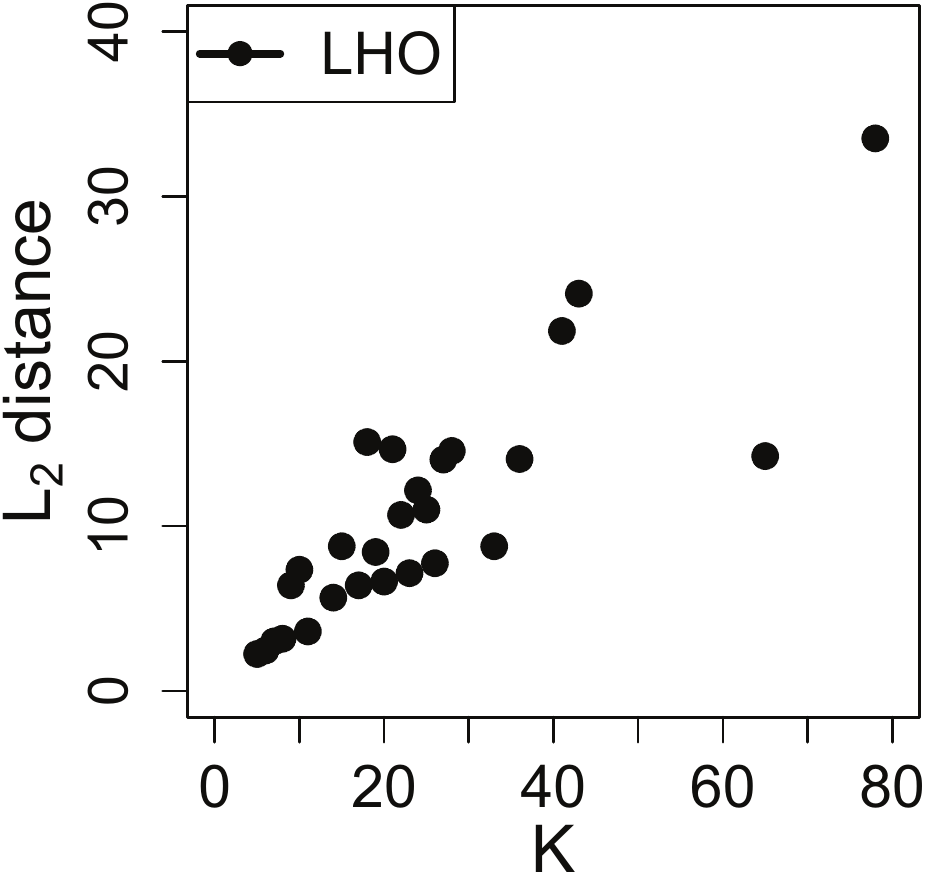}
		\caption{TKY}
	\end{subfigure}
	\caption{$L_2$ distance vs Total frequency of sensitive locations $K$: (a) Median $L_2$ distance for histograms of varying $K$ in \emph{NYC}. (b) $L_2$ distance for each histogram of length $n=30$ in \emph{NYC}. (c) Median $L_2$ distance for histograms of varying $K$ in \emph{TKY}. (d) $L_2$ distance for each histogram of length $n=40$ in \emph{TKY}}\label{appendixLHOK}
\end{figure*}
\vspace{+2mm}

\begin{center}
	\begin{figure*}[ht!]\hspace{-2mm}
		\begin{subfigure}[b]{.45\textwidth}\centering
			\includegraphics[scale=0.3,clip]{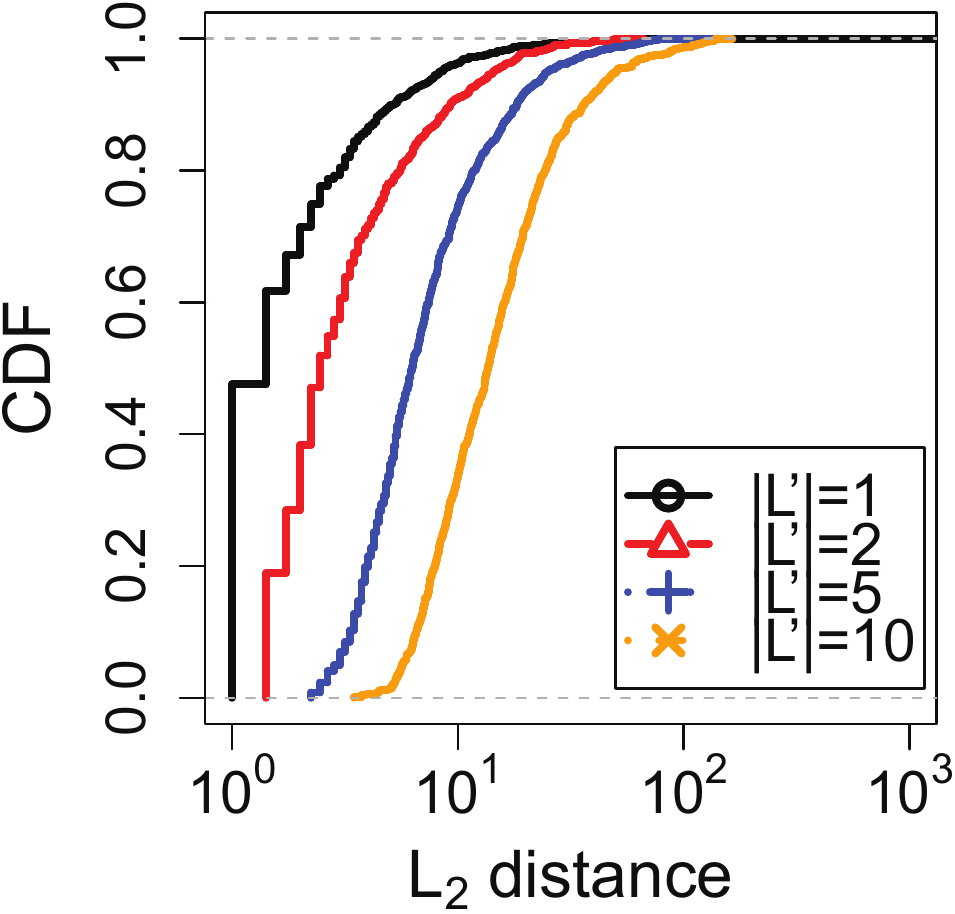}
			\caption{NYC}
		\end{subfigure}\hspace{+1mm}
		\begin{subfigure}[b]{.45\textwidth}\centering
			\includegraphics[scale=0.3,clip]{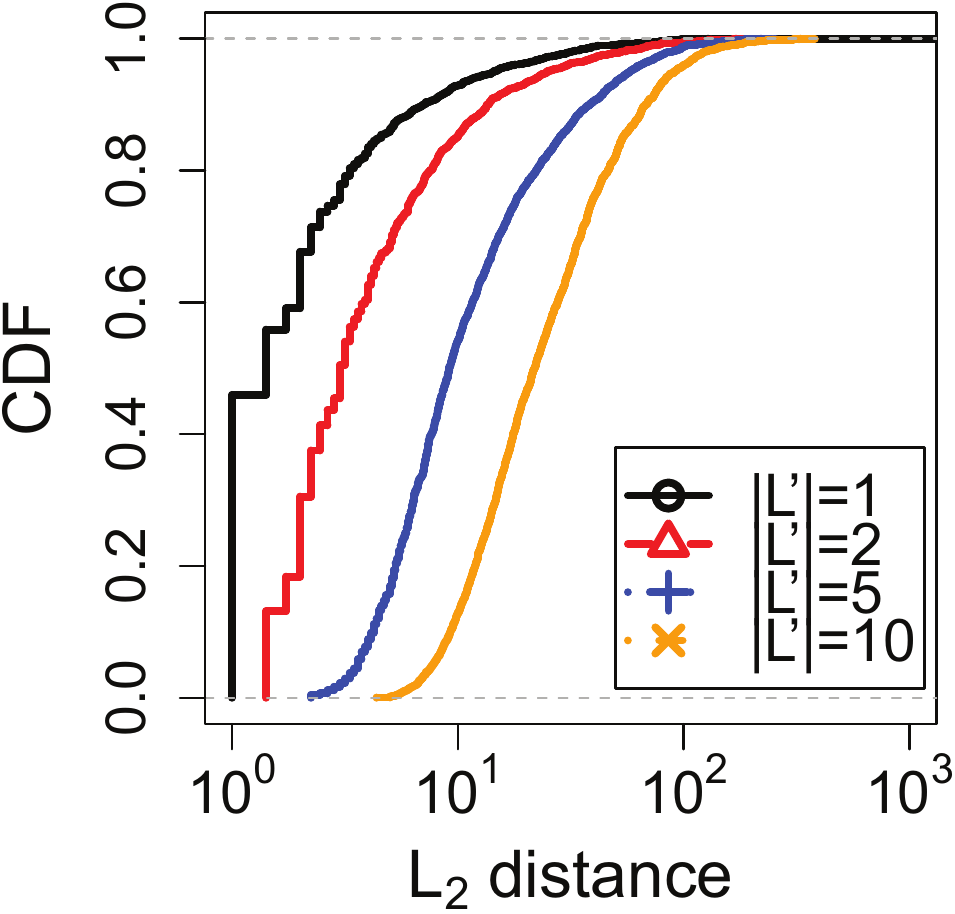}
			\caption{TKY}
		\end{subfigure}
		\caption{Cumulative Distribution Function of $L_2$ distance (i.e., a point $(x,y)$ means that a fraction $y$ of histograms have $L_2$ distance between $0$ and $x$) for varying $|L'|$ for: (a) \emph{NYC}, and (b) \emph{TKY}}\label{appendixLHOL}
\end{figure*}\end{center}
\subsection{Results for $L_2$ distance for the $RO$ algorithm and the $RH$ heuristic}\label{L2appendixRORH}

This appendix provides results for the $L_2$ distance. The results in Figures \ref{appendixROa}, \ref{appendixROb}, \ref{appendixROc}, and \ref{appendixROd}
are qualitatively similar to those in Figures \ref{fig7a}, \ref{fig7b}, \ref{fig8c}, and \ref{fig9a}, respectively.

\begin{figure*}[ht!]\hspace{-2mm}
	\begin{subfigure}[b]{.22\textwidth}\centering
		\includegraphics[scale=0.25,clip]{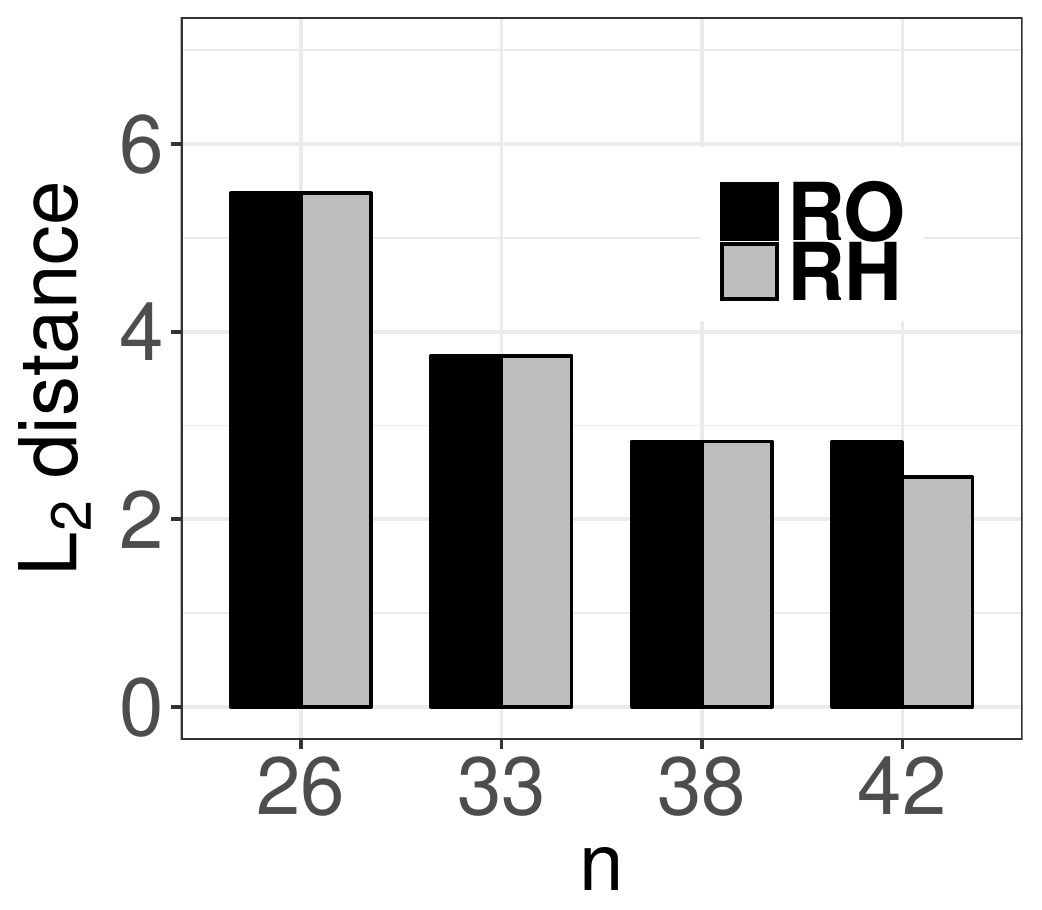}
		\caption{NYC}\label{appendixROa}
	\end{subfigure}\hspace{+1mm}
	\begin{subfigure}[b]{.22\textwidth}\centering
		\includegraphics[scale=0.25,clip]{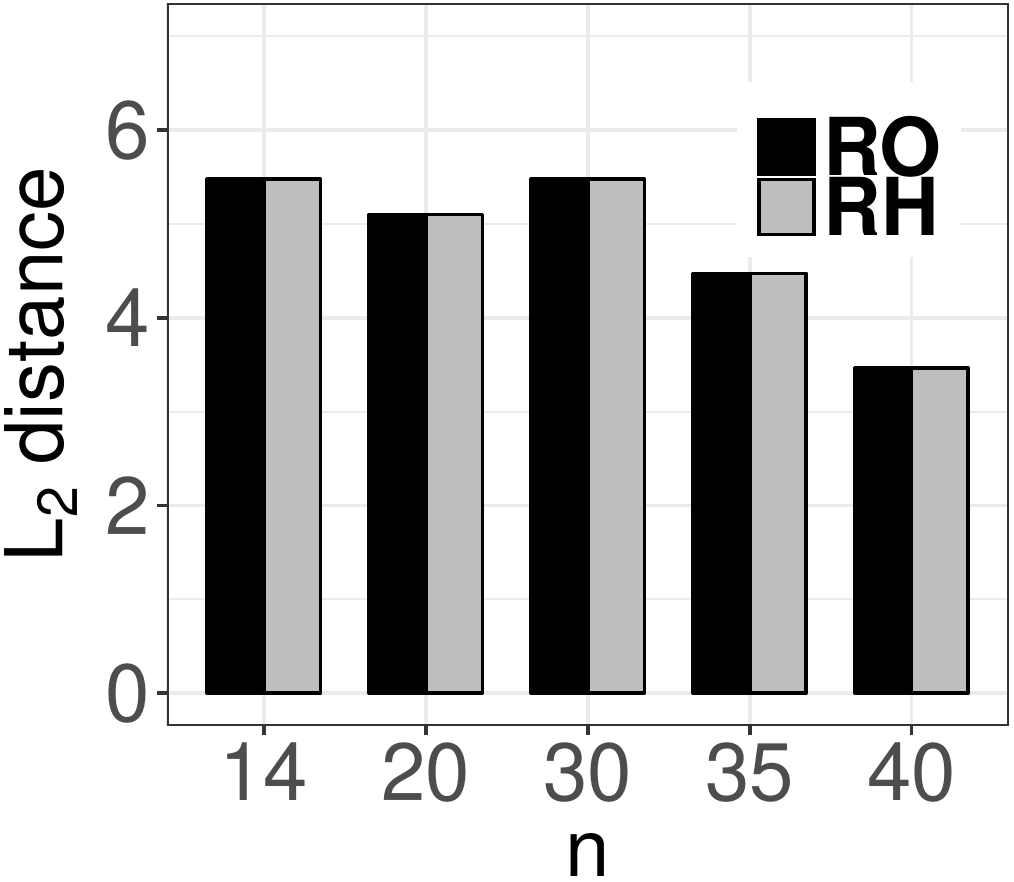}
		\caption{TKY}\label{appendixROb}
	\end{subfigure}\hspace{+1mm}
	\begin{subfigure}[b]{.24\textwidth}\centering
		\includegraphics[scale=0.25,clip]{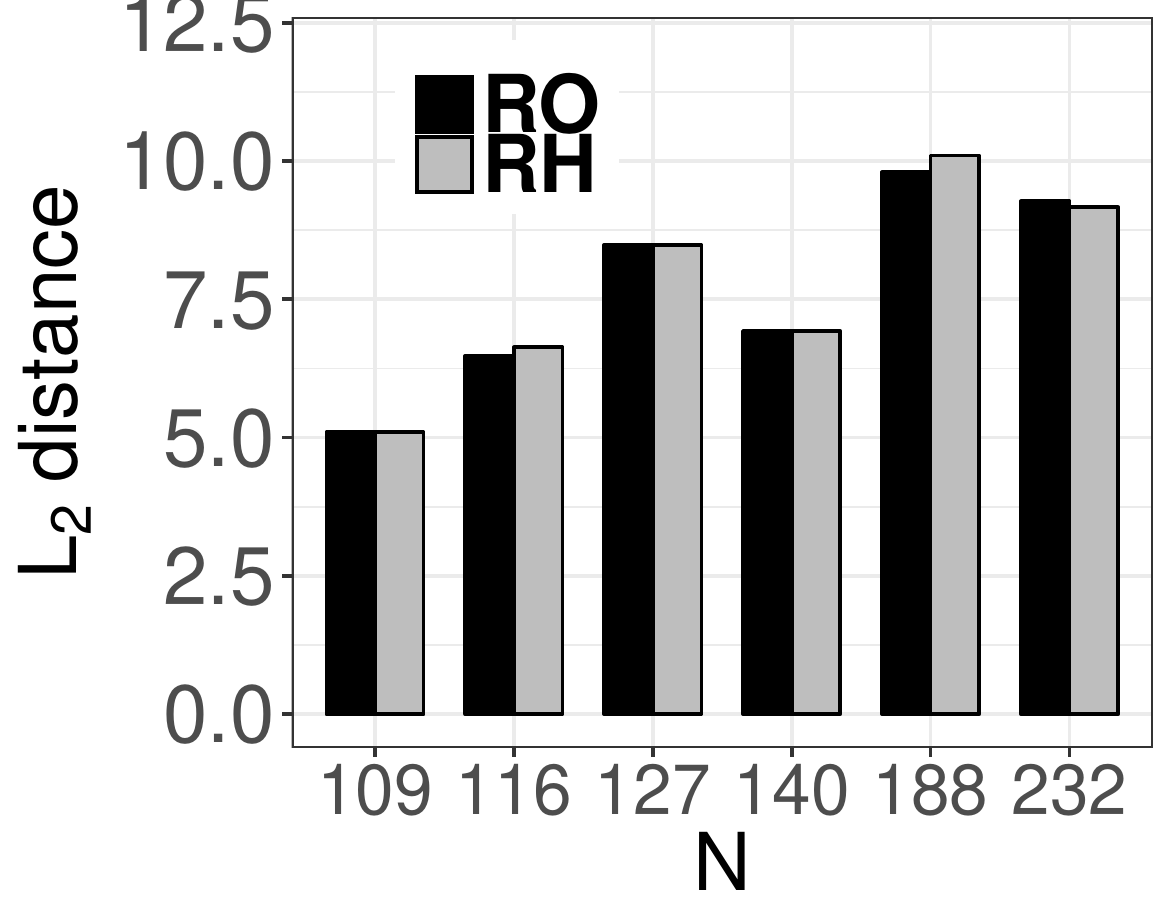}
		\caption{NYC}\label{appendixROc}
	\end{subfigure}\hspace{+1mm}
	\begin{subfigure}[b]{.28\textwidth}\centering
		\includegraphics[scale=0.25,clip]{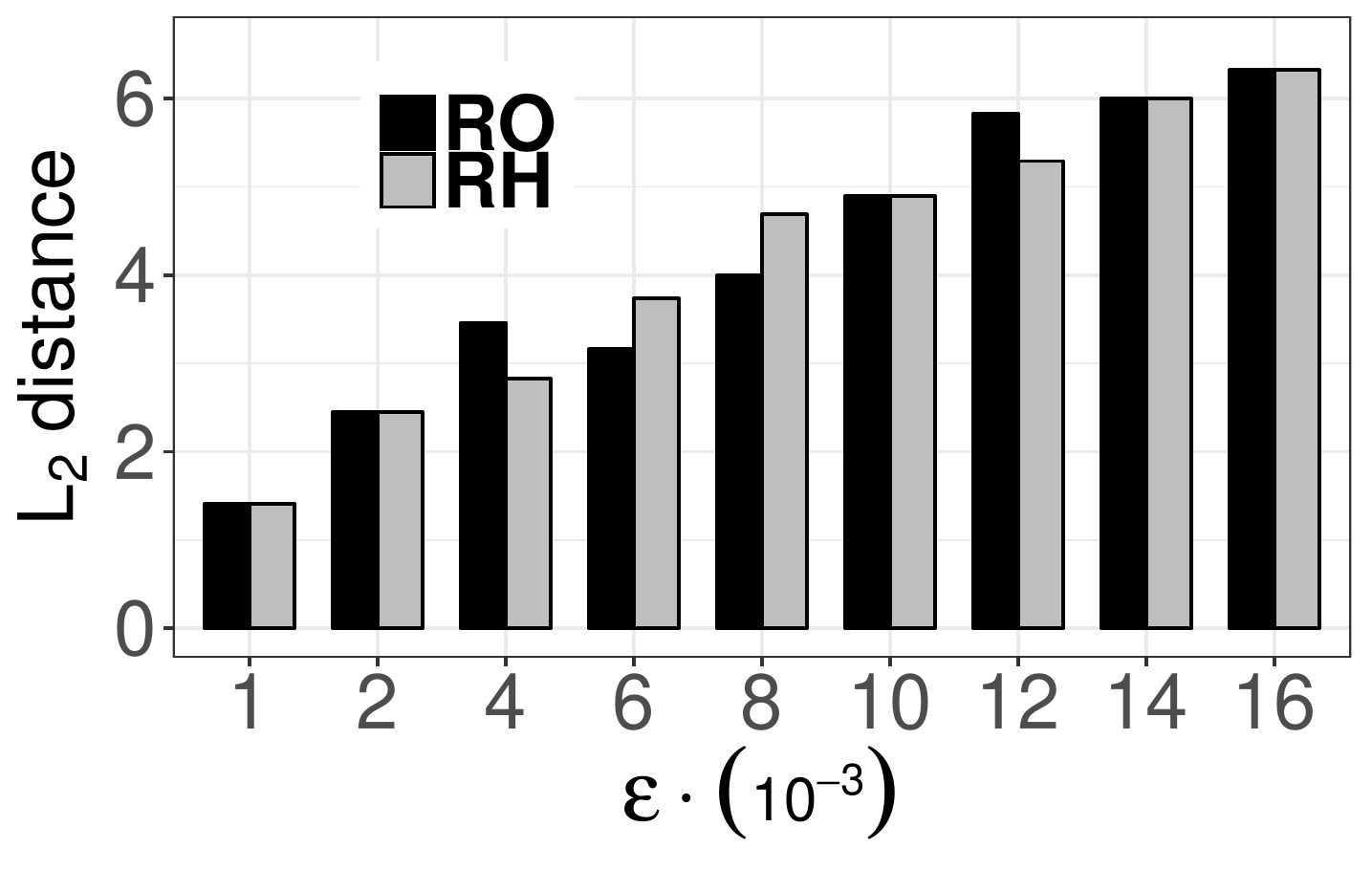}
		\caption{NYC}\label{appendixROd}
	\end{subfigure}
	\caption{$L_2$ distance vs (a) length $n$ for histograms with $N=100$ in \emph{NYC}, (b) length $n$ for histograms with $N=100$ in \emph{TKY},
		(c) size $N$ for histograms with $n=25$ in \emph{NYC}, and (d) threshold $\epsilon$ for a histogram with length $n=40$ and size $N=100$ in \emph{NYC} }\label{appendixRO}
\end{figure*}

%
%

\begin{thebibliography}{10}
\providecommand{\url}[1]{{#1}}
\providecommand{\urlprefix}{URL }
\expandafter\ifx\csname urlstyle\endcsname\relax
  \providecommand{\doi}[1]{DOI~\discretionary{}{}{}#1}\else
  \providecommand{\doi}{DOI~\discretionary{}{}{}\begingroup
  \urlstyle{rm}\Url}\fi

\bibitem{Abul2018}
Abul, O., Bayrak, C.: From location to location pattern privacy in
  location-based services.
\newblock Knowledge and Information Systems \textbf{56}(3), 533--557 (2018).
\newblock \doi{10.1007/s10115-017-1146-x}.
\newblock \urlprefix\url{https://doi.org/10.1007/s10115-017-1146-x}

\bibitem{acs2012differentially}
Acs, G., Castelluccia, C., Chen, R.: Differentially private histogram
  publishing through lossy compression.
\newblock In: ICDM, pp. 1--10 (2012)

\bibitem{augir2016privacy}
A{\u{g}}{\i}r, B., Huguenin, K., Hengartner, U., Hubaux, J.P.: On the privacy
  implications of location semantics.
\newblock Proceedings on Privacy Enhancing Technologies \textbf{2016}(4),
  165--183 (2016)

\bibitem{arapinis}
Arapinis, M., Mancini, L.I., Ritter, E., Ryan, M.: Privacy through pseudonymity
  in mobile telephony systems.
\newblock In: NDSS (2014)

\bibitem{bao2012location}
Bao, J., Zheng, Y., Mokbel, M.F.: Location-based and preference-aware
  recommendation using sparse geo-social networking data.
\newblock In: ACM SIGSPATIAL, pp. 199--208 (2012)

\bibitem{bettini}
Bettini, C., Riboni, D.: Privacy protection in pervasive systems: State of the
  art and technical challenges.
\newblock Pervasive and Mobile Computing \textbf{17}(Part B), 159 -- 174 (2015)

\bibitem{bona2011walk}
B{\'o}na, M.: A walk through combinatorics: an introduction to enumeration and
  graph theory.
\newblock World scientific (2011)

\bibitem{bre}
Breese, J.S., Heckerman, D., Kadie, C.: Empirical analysis of predictive
  algorithms for collaborative filtering.
\newblock In: Uncertainty in Artificial Intelligence, pp. 43--52 (1998)

\bibitem{ch2013broadening}
Chatzikokolakis, K., Andr{\'e}s, M.E., Bordenabe, N.E., Palamidessi, C.:
  Broadening the scope of differential privacy using metrics.
\newblock In: PETS, pp. 82--102 (2013)

\bibitem{ch2017efficient}
Chatzikokolakis, K., ElSalamouny, E., Palamidessi, C.: Efficient utility
  improvement for location privacy.
\newblock Proceedings on Privacy Enhancing Technologies \textbf{2017}(4),
  308--328 (2017)

\bibitem{ch2015location}
Chatzikokolakis, K., Palamidessi, C., Stronati, M.: Location privacy via
  geo-indistinguishability.
\newblock ACM SIGLOG News \textbf{2}(3), 46--69 (2015)

\bibitem{chenCCS}
Chen, R., Acs, G., Castelluccia, C.: Differentially private sequential data
  publication via variable-length n-grams.
\newblock In: CCS, pp. 638--649 (2012)

\bibitem{cheninfsci}
Chen, R., Fung, B.C.M., Mohammed, N., Desai, B.C., Wang, K.: Privacy-preserving
  trajectory data publishing by local suppression.
\newblock Inf. Sci. \textbf{231}, 83--97 (2013)

\bibitem{cicek}
Cicek, A.E., Nergiz, M.E., Saygin, Y.: Ensuring location diversity in
  privacy-preserving spatio-temporal data publishing.
\newblock VLDBJ \textbf{23}(4), 609--625 (2014)

\bibitem{damiani2010probe}
Damiani, M.L., Bertino, E., Silvestri, C., et~al.: The probe framework for the
  personalized cloaking of private locations.
\newblock Trans. Data Privacy \textbf{3}(2), 123--148 (2010)

\bibitem{lbaexpertsystems}
Dao, T.H., Jeong, S.R., Ahn, H.: A novel recommendation model of location-based
  advertising.
\newblock Expert Syst. Appl. \textbf{39}(3), 3731--3739 (2012)

\bibitem{ferrer}
Domingo-Ferrer, J., Trujillo-Rasua, R.: Microaggregation- and permutation-based
  anonymization of movement data.
\newblock Information Sciences \textbf{208}, 55 -- 80 (2012)

\bibitem{sortaki}
Doudalis, S., Mehrotra, S.: {SORTAKI}: Framework to integrate sorting with
  differential private histogramming algorithms.
\newblock In: PST (2017)

\bibitem{dwork2006}
Dwork, C., McSherry, F., Nissim, K., Smith, A.: Calibrating noise to
  sensitivity in private data analysis.
\newblock In: Theory of Cryptography, pp. 265--284 (2006)

\bibitem{fawaz2014location}
Fawaz, K., Shin, K.G.: Location privacy protection for smartphone users.
\newblock In: CCS, pp. 239--250. ACM (2014)

\bibitem{website:foursquare-categorytree}
Foursquare: Category hierarchy.
\newblock \url{https://developer.foursquare.com/categorytree} (2016)

\bibitem{lingliu}
Gedik, B., Liu, L.: Protecting location privacy with personalized k-anonymity:
  Architecture and algorithms.
\newblock IEEE Transactions on Mobile Computing \textbf{7}(1), 1--18 (2008)

\bibitem{ghinitabook}
Ghinita, G.: Privacy for Location-based Services.
\newblock Morgan \& Claypool Publishers (2013)

\bibitem{ghinita2007fast}
Ghinita, G., Karras, P., Kalnis, P., Mamoulis, N.: Fast data anonymization with
  low information loss.
\newblock In: VLDB, pp. 758--769 (2007)

\bibitem{lbaicis}
Ghose, A., Li, B., Liu, S.: Digitizing offline shopping behavior towards mobile
  marketing.
\newblock In: International Conference on Information Systems (2015)

\bibitem{dipbook}
Gonzalez, R.C., Woods, R.E.: Digital Image Processing (3rd ed.).
\newblock Pearson (2008)

\bibitem{googlemaps}
Google: \url{http://maps.google.com/} (2017)

\bibitem{hasan2015location}
Hasan, S., Ukkusuri, S.V.: Location contexts of user check-ins to model urban
  geo life-style patterns.
\newblock PloS one \textbf{10}(5), e0124819 (2015)

\bibitem{inrixtraffic}
{INRIX}: \url{http://inrix.com/mobile-apps/} (2017)

\bibitem{clustersurvey}
Jain, A.K.: Data clustering: 50 years beyond k-means.
\newblock Pattern Recognition Letters \textbf{31}(8), 651 -- 666 (2010)

\bibitem{kellaris}
Kellaris, G., Papadopoulos, S.: Practical differential privacy via grouping and
  smoothing.
\newblock In: PVLDB, pp. 301--312 (2013)

\bibitem{Kellerer2004}
Kellerer, H., Pferschy, U., Pisinger, D.: The Multiple-Choice Knapsack Problem,
  pp. 317--347.
\newblock Springer Berlin Heidelberg (2004)

\bibitem{leboudec2010performance}
Le~Boudec, J.Y.: Performance Evaluation of Computer and Communication Systems.
\newblock EPFL Press, Lausanne, Switzerland (2010)

\bibitem{li2016improving}
Li, H., Cui, J., Lin, X., Ma, J.: Improving the utility in differential private
  histogram publishing: Theoretical study and practice.
\newblock In: IEEE Big Data, pp. 1100--1109 (2016)

\bibitem{li2015differentially}
Li, H., Xiong, L., Jiang, X., Liu, J.: Differentially private histogram
  publication for dynamic datasets: an adaptive sampling approach.
\newblock In: CIKM, pp. 1001--1010 (2015)

\bibitem{li2009modeling}
Li, T., Li, N., Zhang, J.: Modeling and integrating background knowledge in
  data anonymization.
\newblock In: ICDE, pp. 6--17 (2009)

\bibitem{lbsicdm15}
Lian, D., Ge, Y., Zhang, F., Yuan, N.J., Xie, X., Zhou, T., Rui, Y.:
  Content-aware collaborative filtering for location recommendation based on
  human mobility data.
\newblock In: ICDM, pp. 261--270 (2015)

\bibitem{geomf}
Lian, D., Zhao, C., Xie, X., Sun, G., Chen, E., Rui, Y.: Geomf: Joint
  geographical modeling and matrix factorization for point-of-interest
  recommendation.
\newblock In: KDD, pp. 831--840 (2014)

\bibitem{jsdivergence}
Lin, J.: Divergence measures based on the Shannon entropy.
\newblock IEEE Trans. Inf. Theor. \textbf{37}(1), 145--151 (1991)

\bibitem{Loukides2013}
Loukides, G., Gkoulalas-Divanis, A., Shao, J.: Efficient and flexible
  anonymization of transaction data.
\newblock Knowledge and Information Systems \textbf{36}(1), 153--210 (2013).

\bibitem{odessa}
Loukides, G., Gwadera, R.: Optimal event sequence sanitization.
\newblock In: SIAM SDM, pp. 775--783 (2015)


\bibitem{trus2}
McSherry, F., Mironov, I.: Differentially private recommender systems: Building
  privacy into the netflix prize contenders.
\newblock In: KDD, pp. 627--636 (2009)

\bibitem{meila}
Meila, M.: Comparing clusterings - an information based distance.
\newblock Journal of Multivariate Analysis \textbf{98}(5), 873 -- 895 (2007)

\bibitem{Melville}
Melville, P., Sindhwani, V.: Recommender Systems, pp. 1056--1066.
\newblock Springer US (2017)

\bibitem{classifdivergence}
Melville, P., Yang, S.M., Saar-Tsechansky, M., Mooney, R.: Active learning for probability estimation using Jensen-Shannon divergence.
\newblock In: ECML, pp. 268--279 (2005)

\bibitem{mikians}
Mikians, J., Gyarmati, L., Erramilli, V., Laoutaris, N.: Detecting price and
  search discrimination on the internet.
\newblock In: HotNets, pp. 79--84 (2012)

\bibitem{pricedisc2}
Mikians, J., Gyarmati, L., Erramilli, V., Laoutaris, N.: Crowd-assisted search
  for price discrimination in e-commerce: First results.
\newblock In: CoNext, pp. 1--6 (2013)

\bibitem{kyriakosTKDE07}
Mouratidis, K., Papadias, D.: Continuous nearest neighbor queries over sliding
  windows.
\newblock {IEEE} Trans. Knowl. Data Eng. \textbf{19}(6), 789--803 (2007)

\bibitem{datasetpaper3}
Murakami, T., Kanemura, A., Hino, H.: Group sparsity tensor factorization for
  re-identification of open mobility traces.
\newblock IEEE Trans. on Inf. Forensics and Sec. \textbf{12}(3), 689--704
  (2017)

\bibitem{naini2016where}
Naini, F.M., Unnikrishnan, J., Thiran, P., Vetterli, M.: Where you are is who
  you are: User identification by matching statistics.
\newblock IEEE Trans. on Inf. Forensics and Sec. \textbf{11}(2), 358--372
  (2016)

\bibitem{clustdivergence}
Nielsen, F., Nock, R., Amari, S.i.: On clustering histograms with k-means by
  using mixed a-divergences.
\newblock Entropy \textbf{16}, 3273--3301 (2014)

\bibitem{papadimitrioubook}
Papadimitriou, C.H.: Computational complexity.
\newblock Addison-Wesley (1994)

\bibitem{polat}
Polat, H., Du, W.: Privacy-preserving collaborative filtering using randomized
  perturbation techniques.
\newblock In: ICDM, pp. 625--628 (2003)

\bibitem{poulistdp}
Poulis, G., Skiadopoulos, S., Loukides, G., Gkoulalas{-}Divanis, A.:
  Apriori-based algorithms for k\({}^{\mbox{m}}\)-anonymizing trajectory data.
\newblock Trans. Data Privacy \textbf{7}(2), 165--194 (2014)

\bibitem{pugliese}
Pugliese, L.~Di~P., Guerriero, F.: Dynamic programming approaches to solve the shortest path problem with forbidden paths.
\newblock Optimization Methods and Software \textbf{28}(2), 221--255 (2013)

\bibitem{qardaji}
Qardaji, W., Yang, W., Li, N.: Understanding hierarchical methods for
  differentially private histograms.
\newblock PVLDB \textbf{6}(14), 1954--1965 (2013)


\bibitem{trus1}
Sakuma, J., Osame, T.: Recommendation with k-anonymized ratings.
\newblock Transactions on Data Privacy \textbf{11}(1), 47--60 (2018)

\bibitem{sedgewickbook}
Sedgewick, R., Wanye, K.: Algorithms.
\newblock Addison-Wesley Professional (2011)

\bibitem{shenrecomicdm}
Shen, Y., Jin, H.: Privacy-preserving personalized recommendation: An
  instance-based approach via differential privacy.
\newblock In: ICDM, pp. 540--549 (2014)

\bibitem{shenrecomccs}
Shen, Y., Jin, H.: Epicrec: Towards practical differentially private framework
  for personalized recommendation.
\newblock In: CCS, pp. 180--191 (2016)

\bibitem{shokri2017privacy}
Shokri, R., Theodorakopoulos, G., Troncoso, C.: Privacy games along location
  traces: A game-theoretic framework for optimizing location privacy.
\newblock ACM Transactions on Privacy and Security (TOPS) \textbf{19}(4), 11
  (2017)

\bibitem{shokri2012protecting}
Shokri, R., Theodorakopoulos, G., Troncoso, C., Hubaux, J.P., Boudec, J.Y.L.:
  Protecting location privacy: optimal strategy against localization attacks.
\newblock In: CCS, pp. 617--627 (2012)

\bibitem{MMCK1}
Sinha, P., Zoltners, A.A.: The multiple-choice knapsack problem.
\newblock Oper. Res \textbf{27}(3), 503--515 (1979)

\bibitem{spentzouris2017stochastic}
Spentzouris, P., Koutsopoulos, I.: A stochastic optimization framework for
  personalized location-based mobile advertising.
\newblock In: WiOpt, pp. 1--8 (2017)

\bibitem{terrovitisTKDE17}
Terrovitis, M., Poulis, G., Mamoulis, N., Skiadopoulos, S.: Local suppression
  and splitting techniques for privacy preserving publication of trajectories.
\newblock {IEEE} Trans. Knowl. Data Eng. \textbf{29}(7), 1466--1479 (2017)

\bibitem{inference-eurosp16}
Tsoukaneri, G., Theodorakopoulos, G., Leather, H., Marina, M.K.: On the
  inference of user paths from anonymized mobility data.
\newblock In: IEEE European Symposium on Security and Privacy, pp. 199--213
  (2016)

\bibitem{velardi}
{Velardi}, P., {Cucchiarelli}, A., {Petit}, M.: A taxonomy learning method and
  its application to characterize a scientific web community.
\newblock IEEE Transactions on Knowledge and Data Engineering \textbf{19}(2),
  180--191 (2007)

\bibitem{vissers2014crying}
Vissers, T., Nikiforakis, N., Bielova, N., Joosen, W.: Crying wolf? on the
  price discrimination of online airline tickets.
\newblock In: HotPETs 2014 (2014)

\bibitem{ckmeans}
Wang, H., Song, M.: Ckmeans.1d.dp: Optimal k-means clustering in one dimension
  by dynamic programming.
\newblock The R journal \textbf{3}(2), 29--33 (2011)

\bibitem{weichselbaumer2003}
Weichselbaumer, D.: Sexual orientation discrimination in hiring.
\newblock Labour Economics, \textbf{10}(6), 629--642, (2003)

\bibitem{tse18}
Wu, Z., Li, G., Liu, Q., Xu, G., Chen, E.: Covering the sensitive subjects to
  protect personal privacy in personalized recommendation.
\newblock IEEE Transactions on Services Computing \textbf{11}(3), 493--506 (2018)

\bibitem{Xiao:2006:PPP:1142473.1142500}
Xiao, X., Tao, Y.: Personalized privacy preservation.
\newblock In: Proceedings of the 2006 ACM SIGMOD International Conference on
  Management of Data, SIGMOD '06, pp. 229--240. ACM, New York, NY, USA (2006).

\bibitem{xiao2015protecting}
Xiao, Y., Xiong, L.: Protecting locations with differential privacy under
  temporal correlations.
\newblock In: CCS, pp. 1298--1309 (2015)

\bibitem{xiao2014dpcube}
Xiao, Y., Xiong, L., Fan, L., Goryczka, S., Li, H.: Dpcube: Differentially
  private histogram release through multidimensional partitioning.
\newblock Trans. Data Privacy \textbf{7}(3), 195--222 (2014)

\bibitem{xu2006utility}
Xu, J., Wang, W., Pei, J., Wang, X., Shi, B., Fu, A.W.C.: Utility-based
  anonymization using local recoding.
\newblock In: KDD, pp. 785--790 (2006)

\bibitem{xu2013differentially}
Xu, J., Zhang, Z., Xiao, X., Yang, Y., Yu, G., Winslett, M.: Differentially
  private histogram publication.
\newblock The VLDB Journal \textbf{22}(6), 797--822 (2013)

\bibitem{dataseturl}
Yang, D.:
  \url{https://sites.google.com/site/yangdingqi/home/foursquare-dataset} (2017)

\bibitem{datasetpaper}
Yang, D., Zhang, D., W.~Zheng, V., Yu, Z.: Modeling user activity preference by
  leveraging user spatial temporal characteristics in lbsns.
\newblock IEEE Trans. on Systems, Man, and Cybernetics \textbf{45}(1), 129--142
  (2015)

\bibitem{lpd}
Yao, Z., Fu, Y., Liu, B., Liu, Y., Xiong, H.: POI recommendation: A temporal
  matching between POI popularity and user regularity.
\newblock In: ICDM, pp. 549--558 (2016)

\bibitem{lbsrecomaaai}
Yu, Y., Chen, X.: A survey of point-of-interest recommendation in
  location-based social networks.
\newblock In: AAAI workshop on trajectory-based behavior analytics (2015)

\bibitem{dmbook}
Zaki, M.J., Jr, W.M.: Data Mining and Analysis: Fundamental Concepts and
  Algorithms.
\newblock Cambridge University Press (2014)

\bibitem{Zemel}
Zemel, E.: An O(n) algorithm for the linear multiple choice knapsack problem
  and related problems.
\newblock Inf. Process. Lett. \textbf{18}(3), 123--128 (1984)

\bibitem{zhang2014}
Zhang, H., Yan, Z., Yang, J., Tapia, E.M., Crandall, D.J.: mfingerprint:
  Privacy-preserving user modeling with multimodal mobile device footprints.
\newblock In: SBP, pp. 195--203 (2014)

\bibitem{sdmhistogram}
Zhang, X., Chen, R., Xu, J., Meng, X., Xie, Y.: Towards accurate histogram
  publication under differential privacy.
\newblock In: SIAM SDM, pp. 587--595 (2014)

\end{thebibliography}
\end{document}